\newcommand{\bigO}{\mathcal{O}} 
\newcommand{\calG}{\mathcal{G}} 
\newcommand{\rt}{\mathrm{root}}
\DeclareMathOperator{\poly}{poly} 
\DeclareMathOperator{\greedy}{greedy} 
\DeclareMathOperator{\reverse}{reverse} 
\DeclareMathOperator{\supp}{supp} 
\DeclareMathOperator{\lcp}{lcp} 
\DeclareMathOperator{\iso}{iso} 
\DeclareMathOperator{\home}{home}
\DeclareMathOperator{\cou}{cou}
\DeclareMathOperator{\ancs}{ancs} 
\DeclareMathOperator{\id}{id} 
\DeclareMathOperator{\rep}{rep} 
\DeclareMathOperator{\conn}{conn} 
\let\cong\undefined
\DeclareMathOperator{\dil}{dil}
\DeclareMathOperator{\cong}{cong}
\newcommand{\calD}{\mathcal{D}}
\algnewcommand{\LeftComment}[1]{\State {\color{blue}\(\triangleright\) #1}}
\renewcommand*\Call[2]{\textproc{#1}(#2)}
\newtheorem{theorem}{Theorem}[section]
\newtheorem{lemma}[theorem]{Lemma}
\theoremstyle{definition}
\newtheorem{definition}[theorem]{Definition}
\title{Maintaining Routing Structures under Deletions via Self-Pruning}
\begin{document}

\author{
    Bernhard Haeupler\thanks{Partially funded from the Ministry of Education and Science of Bulgaria (support for INSAIT, part of the Bulgarian National Roadmap for Research Infrastructure).  Also partially funded through the European Research Council (ERC) under the European Union's Horizon 2020 research and innovation program (ERC grant agreement 949272). }\\\texttt{bernhard.haeupler@inf.ethz.ch}\\INSAIT, Sofia University\\``St. Kliment Ohridski''\\\& ETH Z\"urich \and
    Antti Roeyskoe\thanks{Partially funded through the European Research Council (ERC) under the European Union's Horizon 2020 research and innovation program (ERC grant agreement 949272). Parts of this work were done while the author was visiting INSAIT. Partially funded from the Ministry of Education and Science of Bulgaria (support for INSAIT, part of the Bulgarian National Roadmap for Research Infrastructure).}\\\texttt{antti.roeyskoe@inf.ethz.ch}\\ETH Z\"urich}
\date{}

\maketitle

\begin{abstract}
\noindent Expanders are powerful algorithmic structures with two key properties: they are
\begin{enumerate}
    \item[a)] routable: for any multi-commodity flow unit demand, there \emph{exists} a routing with low congestion over short paths, where a demand is unit if the amount of demand sent / received by any vertex is at most the number of edges adjacent to it.
    \item[b)] stable / prunable: for any (sequence of) edge 
    failures, there \emph{exists} a proportionally small subset of vertices that can be disabled, such that the graph induced on the remaining vertices is an expander. 
\end{enumerate}
Two natural algorithmic problems correspond to these two existential guarantees: \textit{expander routing}, i.e. computing a low-congestion routing for a unit multi-commodity demand on an expander, and \textit{expander pruning}, i.e., maintaining the subset of disabled vertices under a sequence of edge failures.

This paper considers the combination of the two problems: maintaining a routing for a unit multi-commodity demand under pruning steps. 
This is done through the introduction of a family of expander graphs that, like hypercubes, are easy to route in, and are \textit{self-pruning}: for an online sequence of edge deletions, a simple self-contained algorithm can find a few vertices to prune with each edge deletion, such that the remaining graph always remains an easy-to-route-in expander in the family. 

Notably, and with considerable technical work, this self-pruning can be made \emph{worst-case}, i.e., such that every single adversarial deletion only causes a small number of additional deletions. Our results also allow tight constant-factor control over the length of routing paths (with the usual trade-offs in congestion and pruning ratio) and therefore extend to constant-hop and length-constrained expanders in which routing over constant length paths is crucial.
\end{abstract}

\thispagestyle{empty} 
\newpage
\setcounter{page}{1}


\section{Introduction}

We give a family of (length-constrained expander) graphs that like hypercubes are easy to route in and are \textit{self-pruning}: for an online sequence of edge deletions, a simple self-contained algorithm can find a few vertices to prune with each edge deletion, such that the remaining graph is always an easy-to-route-in expander in the family.

\textit{Expander graphs} are graphs where any unit (i.e. degree-respecting) demand can be routed with low congestion over short paths. Specifically, a $\phi$-expander is a graph on which every unit demand can be routed with congestion $\log(n) / \phi$ over paths of length at most $\log(n) / \phi$. Even on graphs that are not expanders, the power of expanders can be accessed by computing an \textit{expander decomposition}, which deletes a small number of edges (relative to $1 / \phi$), guaranteeing that afterward, every connected component is an $\phi$-expander.

The algorithmic problem of exploiting this existential routing property of expanders is called \textit{expander routing} and has been intensely studied~\cite{RandomWalkExpanderRouting17, FasterRandomExpanderRouting18, DetExpanderRouting20, FasterDetExpanderRouting24, MCFlowEmulator24}. There are particular easy-to-route-in expander topologies, such as the \textit{hypercube}, where the $2^d$ vertices $V = \{0, 1\}^d$ correspond to binary strings of length $d$, and there is an edge between two vertices if and only if their corresponding strings differ in exactly one bit. On a hypercube, \textit{Valiant routing} \cite{ValiantRouting} can route any unit demand with congestion $\bigO(d)$ over paths of length $\bigO(d)$ by selecting for every demand pair a uniformly random midpoint vertex, and constructing the routing path by walking from the midpoint to both endpoints by greedily flipping differing bits. This routing strategy has the additional advantage of being \textit{oblivious}: the routing path for a particular demand pair does not depend at all on the rest of the demand. While Valiant routing is randomized, deterministic routing algorithms on a hypercube exist as well (e.g., based on sorting networks). On an unstructured $\phi$-expander, expander routing is equivalent up to a logarithmic loss in length and congestion to embedding a hypercube on the graph. Once a hypercube has been embedded, routing on the graph can be done by first routing on the hypercube, then projecting the routing onto the original graph. This approach only results in overhead in congestion and path length equal to the congestion and path length of the embedding, and has the additional advantages of being oblivious, and being able to efficiently answer future expander routing instances, as the hypercube does not need to be re-embedded.

In addition to being excellent routers, another property of expanders crucial to many applications is their \textit{robustness to edge failures}. Specifically, for any sequence $e_1, e_2, \dots$ of online edge deletions from a $\phi$-expanding graph, it is possible to expand an initially empty set of vertices $P$ such that the remaining graph induced on $V \setminus P$ always remains an expander of the same asymptotic quality $\bigO(\phi)$, and the volume of $P$ is always at most $\bigO(i / \phi)$, where $i$ is the number of edge deletions processed so far \cite{EvenBetterExpanderPruning19}. The algorithmic problem of selecting the vertices to add to $P$ corresponding to this existential property is called \textit{expander pruning}. Like expander routing it is one of the fundamental algorithmic tools when working with expanders and has been studied in depth~\cite{ExpanderPrunability04, DynamicPruning17, OneShotPruning17, BetterExpanderPruning17, EvenBetterExpanderPruning19, WorstCasePruning25}.

While algorithms exist for both problems of expander routing and expander pruning, issues arise when attempting to algorithmically utilize the routing properties and prunability of expanders in tandem: even if the graph is initially an easy-to-route-in expander such as a hypercube, after the deletion of some edges and an application of pruning, even though the graph is still guaranteed to be an expander, the easy approaches for routing on it are no longer guaranteed to work: for example, with Valiant routing, a greedy path might no longer exist.

This motivates the following natural question:

\begin{quote}
\textit{Is it possible to perform pruning on an easy-to-route-in expander, such that the remaining graph not only always remains expanding, but in fact always remains an easy-to-route-in expander?}
\end{quote}

Of course, one would also want to \emph{efficiently} perform the pruning, though even an inefficient approach would be interesting. There is a similar natural question when the initial graph is unstructured:

\begin{quote}
\textit{Is it possible to efficiently perform pruning on an expander while maintaining a short low-congestion embedding of an easy-to-route-in expander on the remaining graph?}
\end{quote}

The expander pruning property as stated above and all pruning algorithms until very recently only bounded the \textit{asymptotic} volume of the pruned vertex set $P$. A stronger, \emph{worst-case} guarantee on the number of vertices entering $P$ during any the deletion of an edge is very desirable but turns out to be a very technically challenging problem (see \cite{WorstCasePruning25} for an in-depth discussion on worst-case pruning) with \cite{WorstCasePruning25} being the first expander pruning algorithm with such a guarantee. Can the above goals be achieved even with \emph{worst-case} pruning?

In many settings and applications, there is more importance placed on the length of the routing rather than its congestion (see e.g. discussions in \cite{LCCutmatchGame25}). The object corresponding to an expander where the two parameters of routing length and routing congestion are "unbalanced" is called a \textit{router}: specifically, a $h$-length $\kappa$-congestion \textit{router} is a graph on which any unit demand can be routed with congestion $\kappa$ over paths of length $h$. For an example of a router of length $o(\log n)$, consider the "larger-alphabet" generalization of a hypercube with vertex set $[k]^d$ and an edge between two vertices if their strings differ in exactly one character. This graph is a $\bigO(d)$-length $\bigO(d)$-congestion router where each vertex has degree $(k - 1) d$, on which Valiant routing works exactly as on a hypercube. There are corresponding existential guarantees and algorithms in this setting to expander routing \cite{MCFlowEmulator24}, expander pruning \cite{NewLCED24, ConstApproxDynamicDistOracle24} and expander decomposition \cite{OrigLCED22, NewLCED24}. Can the above goals be achieved for instead pruning a \emph{router}?
\section{Our Results}

We introduce a family of routers called $(k, d, \tau)$-semi-hypercubes, which are graphs with up to $n = k^d$ vertices of maximum degree $(k - 1) d = n^{\bigO(1 / d)}$. For small enough $d = \bigO(\sqrt{\log(n) / \log \log(n)})$, these graphs have the following properties:

\begin{itemize}
    \item \textbf{Easy-to-route-in:} on any $(k, d, \tau)$-semi-hypercube, there is an oblivious routing of length $\bigO(d^3)$ that can route any unit demand with congestion $n^{\bigO(1 / d)}$. A path from this oblivious routing can be sampled in expected time $\bigO(d^3)$. We show this in \Cref{thm:obliv-routing-general} in \Cref{sec:rand-routing}.

    Additionally, an explicit routing of length $2^{\bigO(d)}$ and congestion $n^{\bigO(1 / d)}$ for a dynamic demand can be maintained deterministically with low recourse on a fully dynamic graph guaranteed to initially be and always remain a $(k, d, \tau)$-semi-hypercube. We show this in \Cref{thm:det-routing} in \Cref{sec:det-routing}. 
    \item \textbf{Self-Pruning:} there is a deterministic, simple and self-contained vertex pruning algorithm with a worst-case pruning ratio and work per update of $n^{\bigO(1 / d)}$ for maintaining that the graph remains a $(k, d, \tau)$-semi-hypercube. We show this in \Cref{thm:shc-strong-pruning} in \Cref{sec:pruning-formal-proof}. An overview of the pruning algorithm is given in \Cref{sec:selfpruning-overview}.
\end{itemize}

Note that in any graph with maximum degree $n^{\bigO(1 / d)}$, the distance between almost all vertex pairs must be $\Omega(d)$, thus $(k, d, \tau)$-semi-hypercubes are routers with only a polynomial loss in length to the optimum length of routers with the same maximum degree.

Given a router with an embedded $(k, d, \tau)$-semi-hypercube, we can maintain the router and the embedded semi-hypercube on it under worst-case pruning by relying on the self-pruning property of the semi-hypercube, as we show in \Cref{sec:pruning-through-embedding}. Combining this with an algorithm for routing on a router for the initial embedding, we obtain the following result. 

\begin{restatable*}{corollary}{pruningobliviousrouting}\label{cor:pruning-oblivious-routing}
Let $G = (V, E)$ be a $h$-length $\kappa$-congestion router for the all-one node weighting $\mathbbm{1}_V$. Then, there is a deterministic algorithm that for a given positive integer $d = \bigO\left(\sqrt{\frac{\log n}{\log \log n}}\right)$ maintains a remaining vertex set $V' \subseteq V$ (initially $V' = V$) and an oblivious routing $R$ on $G[V']$ with
\begin{itemize}
    \item congestion $\kappa \cdot (h \cdot n^{\bigO(1 / d)})$ on $\mathbbm{1}_{V'}$-respecting demands,
    \item path length $h \cdot \bigO(d^3)$, and
    \item efficiently sampleable paths: a path can be sampled from $R(s, t)$ with expected work $\bigO(h \cdot d^3)$
\end{itemize}
under pruning updates with a worst-case pruning ratio of $\kappa h \cdot n^{\bigO(1 / d)}$: the query to the data structure gives an edge $e^{\mathrm{del}} \in E[V']$ to be deleted. The algorithm then selects a set of vertices $V^{\mathrm{prune}} \subseteq V'$ of size $|V^{\mathrm{prune}}| \leq \kappa h \cdot n^{\bigO(1 / d)}$. The remaining vertex set is updated to $V' \gets V' \setminus V^{\mathrm{prune}}$ and the edge set to $E \gets E - e^{\mathrm{del}}$.

The data structure has work per update of $\kappa h \cdot n^{\bigO(1 / d)}$. The initialization of the data structure takes work $\kappa h \cdot m^{1 + \bigO(1 / d)}$, plus the work required to route a single integral $n^{1 + \bigO(1 / d)}$-commodity $n^{\bigO(1 / d)}$-load demand on $G$ over paths of length $\bigO(h)$ with congestion $\kappa \cdot n^{\bigO(1 / d)}$.
\end{restatable*}

Note that the losses of $\poly(d)$ in path length and $n^{\bigO(1 / d)}$ in congestion are the desired type of tradeoff when working with length-constrained objects such as routers, similar to $\poly \log(n)$-losses in expansion in the balanced setting. The initialization of the data structure is a simple offline expander routing problem, for which one can use any of the current state-of-the-art algorithms.

While the statement of \Cref{cor:pruning-oblivious-routing} is at its strongest when applied to pruning routers of length $h$ up to $\bigO\left(\sqrt{\frac{\log n}{\log \log n}}\right)$, the result for normal expanders is still interesting: assuming the initial graph is a constant-degree $\phi$-expander, applying \Cref{cor:pruning-oblivious-routing} immediately obtains an expander pruning algorithm with a \emph{worst-case} pruning ratio and worst-case work per update of $(n^{o(1)} / \phi^2)$ for maintaining that the graph remains a $(\phi^2 / n^{o(1)})$-expander, \emph{and an efficiently sampleable oblivious routing on the graph}. While the $n^{o(1)}$-loss in expansion is not great, the remaining graph is in fact maintained to be a \emph{$\bigO(\log^{1 + o(1)}(n) / \phi)$-length router} of congestion $(\phi^2 / n^{o(1)})$, thus there is barely any loss in the length of the oblivious routing.

We also obtain in \Cref{cor:pruning-deterministic-routing} a similar result as \Cref{cor:pruning-oblivious-routing}, that instead of maintaining query access to an oblivious routing, explicitly maintains with low recourse a routing for a dynamic demand of bounded load on the remaining graph.

\begin{restatable*}{corollary}{pruningdeterministicrouting}\label{cor:pruning-deterministic-routing}
Let $G = (V, E)$ be a $h$-length $\kappa$-congestion router for the all-one node weighting $\mathbbm{1}_V$, $L \in \mathbb{N}$ a demand load bound, and $D$ an integral load-$L$ demand on $G$. Then, there is a deterministic algorithm that for a given positive integer $d = \bigO\left(\sqrt{\frac{\log n}{\log \log n}}\right)$ maintains
\begin{itemize}
    \item a remaining vertex set $V' \subseteq V$ (initially $V' = V$), such that $G[V']$ is a $\kappa \cdot (h \cdot n^{\bigO(1 / d)})$-congestion $h \cdot \bigO(d^3)$-length router for the all-one node weighting $\mathbbm{1}_{V'}$ on the remaining vertex set, and
    \item an explicit routing of $D$ over paths in $G[V']$ of congestion $\kappa L \cdot (h \cdot n^{\bigO(1 / d)})$ and length $h \cdot 2^{\bigO(d)}$, supporting $|p| \poly \log(n)$-work queries for the path $p$ assigned to a given demand pair,
\end{itemize}
under updates each consisting of two steps:
\begin{itemize}
    \item Pruning step: here, the query to the data structure gives an edge $e^{\mathrm{del}} \in E[V']$ to be deleted. The algorithm responds by selecting vertices $V^{\mathrm{prune}} \subseteq V'$ to \textit{prune}. The remaining vertex set is updated to $V' \gets V' \setminus V^{\mathrm{prune}}$ and the edge set to $E \gets E - e^{\mathrm{del}}$.
    \item Rerouting step: here, the query to the data structure gives sets of demand pairs $D^+$ and $D^-$ to respectively add and remove from the demand $D$, such that $D \gets D \cup D^+ \setminus D^-$ is supported with load $L$ on the post-pruning-step remaining vertex set $V'$. The algorithm responds by updating the routing, and returning the set $I^{\Delta}$ of identifiers of demand pairs for which the assigned path changed.
\end{itemize}
The algorithm has the following worst-case guarantees:
\begin{itemize}
    \item Worst case pruning ratio: $|V^{\mathrm{prune}}| \leq \kappa h \cdot n^{\bigO(1 / d)}$.
    \item Rerouting step recourse: $|I^{\Delta}| \leq L \cdot \kappa h \cdot n^{\bigO(1 / d)} + |D^-| \cdot 2^{\bigO(d)} + |D^+|$.
    \item Work per update: $(L \cdot \poly(\kappa h) \cdot n^{\bigO(1 / d)}) \cdot (L + |D^+| + |D^-|)$.
\end{itemize}
The initialization of the data structure takes work $L \cdot \kappa h \cdot m^{1 + \bigO(1 / d)}$, plus the work required to route a single integral $n^{1 + \bigO(1 / d)}$-commodity $n^{\bigO(1 / d)}$-load demand on $G$ over paths of length $\bigO(h)$ with congestion $\kappa \cdot n^{\bigO(1 / d)}$.
\end{restatable*}
\section{Related Work}

\textbf{Expander Pruning.} The first efficient algorithms for expander pruning were provided concurrently by \cite{OneShotPruning17} for the one-shot expander pruning problem (where instead of an online sequence of edge deletions, the whole set of edge deletions is given at once) and by \cite{DynamicPruning17} for the dynamic expander pruning problem. Subsequently, \cite{BetterExpanderPruning17} improved on both of these results, specifically improving the work per update of dynamic expander pruning from the $\bigO(n^{1/2 - \epsilon})$ for some small constant $\epsilon$ of \cite{DynamicPruning17} to $n^{o(1)}$. Asymptotically near-optimal dynamic expander pruning was achieved by \cite{EvenBetterExpanderPruning19}, with amortized work of $\bigO(\log(m) / \phi^2)$ per update and an amortized pruning ratio of $\bigO(1 / \phi)$, maintaining that the pruned graph $G[V \setminus P]$ always remains a $\phi / 6$-expander as long as it was initially $\phi$-expanding, and that the volume of the pruned vertex set $P$ after the first $i$ edge deletions is always at most $8i / \phi$.

All of the dynamic expander pruning results in this line of work~\cite{DynamicPruning17,BetterExpanderPruning17,bernstein2022fully,hua2023maintaining} bound only the \emph{amortized} pruning ratio, with no nontrivial guarantees on the \emph{worst-case} pruning ratio, i.e. the worst-case growth of the monotone set of pruned vertices $P$ during the deletion of any single edge (see also \cite{WorstCasePruning25} for more details). While worst-case pruning guarantees are very desirable for a multitude of reasons~\cite{WorstCasePruning25}, such results are much more technical and harder to obtain with \cite{WorstCasePruning25} being the first expander pruning algorithm giving a worst-case pruning ratio guarantee. Specifically, they show the following: given an $m$-edge $\phi$-expander graph and a sequence of up to $\tilde{\Omega}(\phi \cdot m)$ edge deletions to $G$, there is a deterministic algorithm that processes each edge deletion in time $\tilde{\bigO}(1 / \phi^2)$ and adds at most $\tilde{\bigO}(1 / \phi^2)$ vertices to the pruned vertex set $P$, such that $G[V \setminus P]$ always remains an $\tilde{\Omega}(\phi)$-expander.

We note that while the worst-case expander pruning result of \cite{WorstCasePruning25} was made public a earlier, our results were obtained independently and through completely different techniques. The focus and advantages of the results are different as well: while \cite{WorstCasePruning25} obtains stronger guarantees for regular expanders, the focus of this paper is on pruning while maintaining easy routability in the remaining graph, and smaller, even just constant, losses in length of routes on it, rather than super-constant polylogarithmic factors in the expansion of the graph. This constant length-loss is absolutely crucial in the context of algorithms for length-constrained expanders~\cite{LCCutmatchGame25,ConstApproxDynamicDistOracle24}.

\textbf{Expander Routing.} Expander routing was pioneered by \cite{RandomWalkExpanderRouting17}, who gave a randomized CONGEST algorithm that can handle any packet routing problem where each vertex $v$ is the source and destination of at most $\deg(v) \cdot 2^{\bigO(\sqrt{\log(n) \log \log(n)})}$ messages in $\tau_{\mathrm{mix}}(G) \cdot 2^{\bigO(\sqrt{\log(n) \log \log(n)})}$ rounds, where $\tau_{\mathrm{mix}}(G)$ is the mixing time of a random walk on $G$. For a $\phi$-expander, $\Omega(\frac{1}{\phi}) \leq \tau_{\mathrm{mix}}(G) \leq \bigO(\frac{\log n}{\phi})$ \cite{MixingTime89}, thus as a particular consequence, they gave a randomized algorithm for routing a degree-respecting demand on a $\phi$-expander with congestion, path length, and depth $\frac{1}{\phi} \cdot 2^{\bigO(\sqrt{\log(n) \log \log(n)})}$ and work $m$ times this. The overhead was subsequently lowered to $2^{\bigO(\sqrt{\log n})}$ by \cite{FasterRandomExpanderRouting18}. These algorithms are inherently randomized, as they rely on having each packet take a random walk to perform the routing.

The first \emph{deterministic} expander routing of \cite{DetExpanderRouting20} instead built on the deterministic cut-matching game and recursive approach of \cite{FastDetCutmatchGame20}. Their approach was to first partition the vertex set into some number $k$ of disjoint parts $V_1, V_2, \dots, V_k$, each with either $\lfloor n / k \rfloor$ or $\lceil n / k \rceil$ vertices. They then embed an expander into each vertex set $V_i$, such that the total congestion of these embeddings is low. To do this, they run simultaneous cut-matching games on each part $V_i$, selecting the matchings to add to the embeddings through a $k$-commodity flow instance, where the $i$th commodity is a demand between the two halves the cut-matching game algorithm split the vertices $V_i$ into. They then compute a $k$-commodity flow to route each packet into the part containing its destination, with a commodity for each destination part, and recursively route the packets in each part to their destinations on the embedded expanders. With this, they obtained a $\poly(1 / \phi) \cdot 2^{\log^{2 / 3 + o(1)}(n)}$-round deterministic packet routing algorithm on a $\phi$-expander. \cite{FasterDetExpanderRouting24} later refined this overhead to $\poly(1 / \phi) \cdot 2^{\bigO(\sqrt{\log(n) \log \log(n)})}$.

Traditionally, an advantage of routing on an expander has been that solving a multi-commodity flow problem was possible with work additively, not multiplicatively dependent on the number of commodities. Recently, \cite{MCFlowEmulator24} gave an algorithm for constant-approximate min-cost multicommodity flow with this property on general graphs. Applying this algorithm on a $\phi$-expander, one can obtain a $\bigO(\frac{\log n}{\phi})$-congestion routing over paths of length $\bigO(\frac{\log n}{\phi})$ for a unit demand, matching the asymptotic guarantees of the expander, with work $(m + k)^{1 + \epsilon}$ and depth $(m + k)^{\epsilon}$ for arbitrarily small constant $\epsilon > 0$. 

\textbf{Routing in a Dynamic Graph.} We now consider work related to maintaining routing-related data structures in a decremental or fully dynamic graph. Perhaps surprisingly, while even on a fully dynamic graph there are results for approximating the \emph{value} of a feasible (multicommodity) flow, to the best of our knowledge, no known approach with update time $o(n)$ provides access to low-congestion and low-length \emph{routing paths}.

An \textit{expander hierarchy} \cite{DynamicExpanderHierarchy21} is a hierarchical graph clustering formed by repeatedly computing a \textit{boundary-linked expander decomposition} of the graph and contracting each of its connected components into a single vertex, keeping parallel edges and removing self-loops, until the remaining graph contains no edges. \cite{DynamicExpanderHierarchy21} show that for appropriate parameters for the boundary-linked expander decomposition, the natural decomposition tree $T$ of the hierarchical clustering is a \textit{tree flow sparsifier} of quality $n^{o(1)}$, meaning that for any (multi-commodity) demand $D$ on $G$, (i) if the demand can be routed on $G$ with congestion $1$, it can be routed on $T$ with congestion $1$, and (ii) if the demand can be routed on $T$ with congestion $1$, it can be routed on $G$ with congestion $n^{o(1)}$. They also show how to maintain an expander hierarchy (and the associated tree) of a fully dynamic graph with $n^{o(1)}$ time and recourse per update. As solving multi-commodity flow problems on a tree is very easy, \cite{DynamicExpanderHierarchy21} show that this allows one to, among other things, query for the $n^{o(1)}$-approximate congestion of optimally routing a multi-commodity demand $D$ on $G$ with work $\bigO(|\supp(D)| \log^{1 / 6} n)$. All prior approaches for this problem, even in a decremental setting, took either $\Omega(n)$ update or query time.

While the dynamic expander hierarchy structure is very powerful, it has two major drawbacks: firstly, it cannot approximate distances \cite{OrigLCED22}, and secondly, it cannot be used to obtain routing paths. The problem with the latter is the expander case: if the graph is already a good-enough expander, the hierarchy is trivial, and the associated tree is simply a star. While this guarantees any unit demand can be routed on the graph with low congestion over short paths, it provides no help in actually performing the routing.

This former drawback was a major motivating factor that lead to the development of \textit{length-constrained expanders} \cite{OrigLCED22}. By dynamically maintaining a \textit{length-constrained expander hierarchy}, \cite{ConstApproxDynamicDistOracle24} obtained a constant-approximate distance oracle in a fully-dynamic graph with worst-case work of $n^{\epsilon}$ for arbitrarily small constant $\epsilon$ per graph update. While this result supports path queries, it does not give any congestion guarantees for these routes.


The recent almost-linear time decremental min-cost flow result of \cite{FastDecrementalFlow24} implicitly maintains a routing of $(1 + \epsilon)$-approximate congestion and average length for a single-commodity demand in a decremental directed graph. While they do not provide a way to access this flow, it is not clear that the algorithm without major changes could not be adapted to allow for e.g. efficiently sampling a routing path from the path decomposition of the flow. However, even if this was possible, this algorithm could not be used to maintain a routing of a multi-commodity demand, and updating the structure when an edge is deleted inherently takes amortized, not worst-case work. 

\textbf{Semi-Hypercubes.} the "larger-alphabet" generalization of a hypercube with vertex set $[k]^d$ and an edge between two vertices if their strings differ in exactly one character, is a simple example of an excellent router. This graph can also be seen as a recursive construction, where the vertex set is partitioned into $k$ equal-size parts, a perfect matching is added between each part, and a structure of dimension $d - 1$ is then built on each part. Even if these perfect matchings are not orthogonal (i.e. in that they match copies of the same vertex in different parts to each other, resulting in the neat "hamming-distance-1" description of the edge set), the graph remains a router of the same excellent quality, on which Valiant routing still works without changes. This router occurs naturally in every recursive approach to cut-matching games \cite{FastDetCutmatchGame20, LCCutmatchGame25}, expander routing \cite{RandomWalkExpanderRouting17, FasterRandomExpanderRouting18, DetExpanderRouting20, FasterDetExpanderRouting24, MCFlowEmulator24}, and also in \cite{JuliaDynamicHierarchy23, JuliaNiceRouterPruning25, NewLCED24}. We are not aware of an existing name for this router, and refer to it as a $(k, d)$-semi-hypercube, "semi" as the matchings are not orthogonal. $(k, d)$-semi-hypercubes are not new to this paper; our key contribution is observing a set of conditions on an induced subgraph of a $(k, d)$-semi-hypercube that are sufficiently strong for the subgraph to still be a good easy-to-route in router, while being soft enough to allow for worst-case pruning. The family of graphs satisfying these conditions is the family of \textit{$(k, d, \tau)$-semi-hypercubes} we introduce in this paper.

\section{Preliminaries}


\textbf{Graphs.} In this paper, we consider undirected, unit-edge-length, uncapacitated graphs $G = (V, E)$ with vertex sets $V$ and edge sets $E$.

\textbf{Node Weightings.} A \textit{node weighting} $A : V \rightarrow \mathbb{R}_{\geq 0}$ is simply a function that assigns to each vertex $v$ a weight $A(v)$. For two node weightings $A, A'$ on the same vertex set $V$, we write $A \preceq A'$ if $A(v) \leq A'(v)$ for all $v \in V$. For a graph $G$, the \textit{degree node weighting} $\deg_G$ is the node weighting assigning to each vertex $v$ weight $\deg_G(v)$ equal to its degree in $G$. The all-one node weighting $\mathbbm{1}_V$ is the node weighting assigning to each vertex the weight $\mathbbm{1}_V(v) = 1$.

\textbf{Demands.} A demand $D : V \times V$ is a function assigning a demand value $D(u, v)$ to each vertex pair, representing how much flow value should be sent from $u$ to $v$. The \textit{load} of a demand $D$ on $V$ is the node weighting on $V$ defined as $\mathrm{load}_D(v) := \sum_{v \in V} D(u, v) + D(v, u)$. We say the demand has load $L$ if $\mathrm{load}_D(v) \leq L$ for all $v \in V$. For a node weighting $A$, the demand $D$ is $A$-respecting if $\mathrm{load}_D \preceq A$, i.e. the total demand value involving any vertex $u$ is at most $A(u)$. A demand on a graph $G$ is called \textit{unit} or equivalently \textit{degree-respecting} if it is $\deg_G$-respecting. 

An \textit{integral demand} assigns an integral value $D(u, v) \in \mathbb{Z}_{\geq 0}$ to each vertex pair. For algorithms, we represent integral demands as sets of triples $(u, v, \id)$, where $\id$ is a unique identifier specific to the triplet, with a particular pair $u, v$ appearing as many times as how much $(u, v)$-demand there is.

\textbf{Routing a demand.} A multiset of paths $P$ on a graph $G$ \textit{routes} an integral demand $D$, if the set contains exactly $D(u, v)$-many $(u, v)$-paths. The \textit{congestion} $\cong(P, e) := |\{p \in P : e \in p\}|$ of an edge $e$ is the total number of paths using it, and the congestion $\cong(P) := \max_e \cong(P, e)$ of the path set is the maximum congestion of an edge. The \textit{dilation} $\dil(P) := \max_{p \in P} |p|$ of the path set is the length of the longest path in the set. For routing algorithms, we denote by $P(\id)$ the $(u, v)$-path assigned to the triplet $(u, v, \id)$. 

A multicommodity flow $F$ assigning nonnegative flow values $F(p) \geq 0$ to paths $p$ in $G$ routes a demand $D$, if for all $u, v$, the total flow value over $(u, v)$-paths in $G$ is exactly $D(u, v)$, i.e. $\sum_{(u, v)\text{-path }p} F(p) = D(u, v)$. The congestion $\cong(F, e) := \sum_{p : e \in p} F(p)$ of an edge is the total flow value of paths using it, and the congestion $\cong(F) := \max_e \cong(F, e)$ of the path set is the maximum congestion of an edge. We say the flow is \textit{length-$h$} if every path in the support of the flow has length at most $h$, i.e. $\max_{p \in \supp(F)} \mathrm{len}(p) \leq h$.

\textbf{Routers.}  A graph $G = (V, E)$ is a \textit{$h$-length $\kappa$-congestion router} for a node weighting $A$ on $V$, if for any $A$-respecting demand $D$, there is a multicommodity flow of congestion $\kappa$ and length $h$ routing $D$.

\textbf{Oblivious Routing.} An oblivious routing $R = \{R(s, t)\}_{s, t \in V}$ on a graph $G$ is a collection of distributions of $(s, t)$-paths $R(s, t)$ between vertex pairs $(s, t)$.
    
For a demand $D$, the congestion $\cong(R, D, e)$ of an edge $e$ is defined as the expected number of paths using the edge in the path set $P$ formed by sampling $D(s, t)$ paths from $R(s, t)$ for every $(s, t) \in \supp(D)$, and the congestion $\cong(R, D) := \max_e \cong(R, D, e)$ is the maximum congestion of an edge.
    
For a class of demands $\calD$ (for example, the set of degree-respecting demands on $G$), the oblivious routing has congestion $\cong(R, \calD, e) := \max_{D \in \calD} \cong(R, D, e)$. The dilation $\dil(R) := \max_{p \in \supp(R)} |p|$ of the oblivious routing is the length of the longest path in its support.

\textbf{String Notation.} For an alphabet $\Sigma$ and nonnegative integer $d$, we denote by $\Sigma^{d}$ the set of length-$d$ strings with characters from the alphabet $\Sigma$, and likewise by $\Sigma^{\leq d}$ the set of strings of length at most $d$ (including the empty string), and by $\Sigma^{< d}$ the set of strings of length strictly less than $d$. We denote the concatenation operation on strings by $+$, i.e. for a string $\sigma \in \Sigma^{d}$ and character $i \in \Sigma$, the string resulting from concatenating $i$ to $\sigma$ is $\sigma + i \in \Sigma^{d + 1}$. We denote the longest common prefix of a pair of strings $\sigma, \sigma'$ by $\lcp(\sigma, \sigma')$.


%
\section{Semi-Hypercubes and Pruning Overview}

\subsection{Semi-Hypercubes}



A $d$-dimensional hypercube for $d \geq 1$ can be defined recursively as the graph formed by taking two copies of a $(d - 1)$-dimensional hypercube, and adding a perfect matching between them connecting the two copies of each vertex. Each vertex in a $d$-dimensional hypercube has degree $d$.

A $(k, d)$-semi-hypercube for $d \geq 1$ is a graph of size $k^d$ constructed similarly, by instead taking $k$ individual $(k, d - 1)$-semi-hypercubes, then adding a perfect matching between the vertex sets of each pair of these smaller semi-hypercubes. Thus, each vertex in a $(k, d)$-semi-hypercube has degree $(k - 1) d$.

Other than the "branching factor" $k$ not being restricted to two, a semi-hypercube is more general than a hypercube in that the matchings connecting the individual smaller semi-hypercubes can be arbitrary perfect matchings, rather than being restricted to connecting copies of the same vertex. The individual $(k, d - 1)$-semi-hypercubes do not necessarily have to be identical to each other either.

For the definition of a $(k, d)$-semi-hypercube below, the size-$k^d$ vertex set of the semi-hypercube is fixed to be specifically $[k]^d$, i.e. the set of $d$-length strings of elements of $[k] = \{1, 2, \dots, k\}$. For a string $\sigma \in [k]^{\leq d}$, we refer to the subset $V_\sigma := \{v \in V : \text{$\sigma$ is a prefix of $v$}\}$ as the \textit{cluster} $\sigma$. We refer to the clusters $\sigma + i$ for $i \in [k]$ as the \textit{child clusters} of the cluster $\sigma$. Each cluster $\sigma$ of \textit{depth} $d' = |\sigma|$ will correspond to a $(k, d - d')$-semi-hypercube, up to vertex labeling, made up of the child clusters $\sigma + i$ that correspond to $(k, d - d' - 1)$-semi-hypercubes, and a complete matching between each pair of those children.

\begin{definition}[$(k, d)$-semi-hypercube]
For any positive integers $k, d$, an undirected graph $G = (V, E)$ is a $(k, d)$-semi-hypercube if
\begin{itemize}
    \item The vertex set $V = [k]^d$ consists of $d$-tuples of elements of $[k] = \{1, \dots, k\}$.
    \item For every cluster $\sigma \in [k]^{< d}$, the edge set $E_{\sigma, i, i'} := E[V_{\sigma + i}, V_{\sigma + i'}]$ between each pair of distinct child clusters $\sigma + i$ and $\sigma + i'$ ($i, i' \in [k]$) is a perfect matching. 
\end{itemize}
\end{definition}

\begin{figure}[H]
    \centering
    \begin{tikzpicture}[main/.style={draw, circle, minimum size=0.2cm, fill=black!50}, cluster/.style={draw, circle, minimum size=2cm, color=black!50}, edges/.style={draw}, txt/.style = {draw=none}]

    \coordinate(C1) at (-2, -2);
    \coordinate(C2) at (2, -2);
    \coordinate(C3) at (2, 2);
    \coordinate(C4) at (-2, 2);
    
    \draw[color=black!50] (0, 0) ellipse (4cm and 4cm);
    
    \node[cluster] (cluster1) at (C1) {};
    \node[cluster] (cluster2) at (C2) {};
    \node[cluster] (cluster3) at (C3) {};
    \node[cluster] (cluster4) at (C4) {};
    
    \draw[-] (cluster1) to[out=-15, in=185] (cluster2);
    \draw[-] (cluster1) to[out=-5, in=175] (cluster2);
    \draw[-] (cluster1) to[out=5, in=165] (cluster2);
    \draw[-] (cluster1) to[out=15, in=195] (cluster2);

    \draw[-] (cluster1) to[out=30, in=210] (cluster3);
    \draw[-] (cluster1) to[out=40, in=220] (cluster3);
    \draw[-] (cluster1) to[out=50, in=240] (cluster3);
    \draw[-] (cluster1) to[out=60, in=230] (cluster3);

    \draw[-] (cluster1) to[out=75, in=285] (cluster4);
    \draw[-] (cluster1) to[out=85, in=265] (cluster4);
    \draw[-] (cluster1) to[out=95, in=275] (cluster4);
    \draw[-] (cluster1) to[out=105, in=255] (cluster4);
    
    \draw[-] (cluster2) to[out=75, in=275] (cluster3);
    \draw[-] (cluster2) to[out=85, in=265] (cluster3);
    \draw[-] (cluster2) to[out=95, in=285] (cluster3);
    \draw[-] (cluster2) to[out=105, in=255] (cluster3);
    
    \draw[-] (cluster4) to[out=-15, in=195] (cluster3);
    \draw[-] (cluster4) to[out=-5, in=175] (cluster3);
    \draw[-] (cluster4) to[out=5, in=165] (cluster3);
    \draw[-] (cluster4) to[out=15, in=185] (cluster3);

    \draw[-] (cluster2) to[out=120, in=320] (cluster4);
    \draw[-] (cluster2) to[out=130, in=310] (cluster4);
    \draw[-] (cluster2) to[out=140, in=300] (cluster4);
    \draw[-] (cluster2) to[out=150, in=330] (cluster4);

    \node[main] (c11) at ($(C1)+(-0.3, -0.3)$) {};
    \node[main] (c12) at ($(C1)+(-0.3, 0.3)$) {};
    \node[main] (c13) at ($(C1)+(0.3, -0.3)$) {};
    \node[main] (c14) at ($(C1)+(0.3, 0.3)$) {};
    \draw (c11) -- (c12);
    \draw (c11) -- (c13);
    \draw (c11) -- (c14);
    \draw (c12) -- (c13);
    \draw (c12) -- (c14);
    \draw (c13) -- (c14);

    \node[main] (c21) at ($(C2)+(-0.3, -0.3)$) {};
    \node[main] (c22) at ($(C2)+(-0.3, 0.3)$) {};
    \node[main] (c23) at ($(C2)+(0.3, -0.3)$) {};
    \node[main] (c24) at ($(C2)+(0.3, 0.3)$) {};
    \draw (c21) -- (c22);
    \draw (c21) -- (c23);
    \draw (c21) -- (c24);
    \draw (c22) -- (c23);
    \draw (c22) -- (c24);
    \draw (c23) -- (c24);

    \node[main] (c31) at ($(C3)+(-0.3, -0.3)$) {};
    \node[main] (c32) at ($(C3)+(-0.3, 0.3)$) {};
    \node[main] (c33) at ($(C3)+(0.3, -0.3)$) {};
    \node[main] (c34) at ($(C3)+(0.3, 0.3)$) {};
    \draw (c31) -- (c32);
    \draw (c31) -- (c33);
    \draw (c31) -- (c34);
    \draw (c32) -- (c33);
    \draw (c32) -- (c34);
    \draw (c33) -- (c34);
    
    \node[main] (c41) at ($(C4)+(-0.3, -0.3)$) {};
    \node[main] (c42) at ($(C4)+(-0.3, 0.3)$) {};
    \node[main] (c43) at ($(C4)+(0.3, -0.3)$) {};
    \node[main] (c44) at ($(C4)+(0.3, 0.3)$) {};
    \draw (c41) -- (c42);
    \draw (c41) -- (c43);
    \draw (c41) -- (c44);
    \draw (c42) -- (c43);
    \draw (c42) -- (c44);
    \draw (c43) -- (c44);

    
    \end{tikzpicture}
    \caption{A $(4, 2)$-semi-hypercube. The matching edges are not drawn all the way between the edges' endpoint vertices for clarity. Each gray circle pictured is a cluster containing the vertices (filled gray) inside it, with each vertex also being referred to as a 'leaf cluster' containing only the vertex itself.} \label{fig:semi-hypercube-example}
\end{figure}
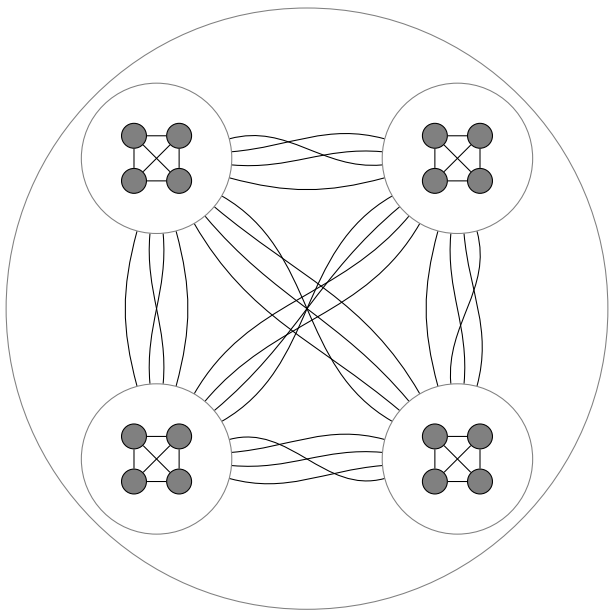

We call the \textit{cluster tree} of the $(k, d)$-semi-hypercube the tree with vertices $\sigma \in [k]^{\leq d}$, the empty string as the root vertex, the actual vertices $[k]^d$ of the semi-hypercube as leaf vertices, and the $k$ child clusters $\sigma + i$ as the children of each non-leaf $\sigma$. For a cluster $\sigma$, we call a cluster $\sigma''$ a \textit{descendant cluster} of $\sigma$ if $\sigma''$ is a (strict) descendant of $\sigma$ in this tree (i.e. $\sigma$ is a strict prefix of $\sigma''$), and a cluster $\sigma'$ an \textit{ancestor cluster} of $\sigma$ if $\sigma$ is a descendant of $\sigma'$ (or equivalently, $\sigma'$ is a strict prefix of $\sigma$).

Valiant-like oblivious routing works on a $(k, d)$-semi-hypercube as it does on a hypercube, with length $\bigO(d)$ and congestion $\bigO(d)$ on unit demands: pick a random vertex as a midpoint, then greedily route from it to both endpoints. To construct the greedy path from a vertex $u$ to a vertex $v$, walk down the cluster tree towards $v$ from the root cluster until you find a cluster $\sigma$ for which the current path endpoint $u'$ and the destination vertex $v$ are in different child clusters $V_{\sigma + i} \ni u'$ and $V_{\sigma + i'} \ni v$. Then, there is a unique edge incident to the current path endpoint in the complete matching between $V_{\sigma + i}$ and $V_{\sigma + i'}$. After concatenating it to the path, the current path endpoint and $v$ are in the same child cluster $\sigma + i'$.

In this paper, we consider two families of induced subgraphs of a $(k, d)$-semi-hypercube, one of which is a subset of the other. Each family has an additional parameter $\tau \in [0, 1)$, the \textit{criticality threshold}. The family satisfying a stricter condition is that of \textit{noncritical $(k, d, \tau)$-semi-hypercubes}. For $\tau$ small enough and $k$ large enough relative to $d$, these graphs are routers of the same asymptotic quality as a $(k, d)$-semi-hypercube, and there is a simple vertex pruning algorithm with an amortized pruning ratio for maintaining that a graph that is initially a $(k, d)$-semi-hypercube remains a noncritical $(k, d, \tau)$-semi-hypercube. The more general family is that of \textit{$(k, d, \tau)$-semi-hypercubes}. These graphs lose some quality as routers compared to noncritical semi-hypercubes, but can be maintained under pruning with a \emph{worst-case} pruning ratio.

\subsection{Amortized Pruning of Noncritical Semi-Hypercubes} \label{sec:pruning-amortized}

A \textit{noncritical $(k, d, \tau)$-semi-hypercube} is an induced subgraph $G[V]$ of a $(k, d)$-semi-hypercube $G = ([k]^d, E)$ satisfying a certain property that guarantees that an easy-to-sample-from oblivious routing of the same asymptotic length and congestion $\bigO(d)$ as on a $(k, d)$-semi-hypercube exists on it. This property concerns the remaining sizes $|V_\sigma|$ of clusters: specifically, every cluster $\sigma$ must either be empty, or contain at least a $(1 - \tau)$-fraction of the vertices it does in $G$. We call a cluster violating these conditions a \textit{$\tau$-critical cluster}.
\begin{definition}[$\tau$-critical cluster]
For a criticality threshold $\tau \in [0, 1)$, a cluster $\sigma$ in an induced subgraph $G[V]$ of a $(k, d)$-semi-hypercube $G = ([k]^d, E)$ is \textit{$\tau$-critical} if it is nonempty and has lost strictly more than a $\tau$-fraction of the vertices it originally had in $G$, i.e. $0 < |V_\sigma| < (1 - \tau) k^{d - |\sigma|}$.
\end{definition}

\begin{restatable}[Noncritical $(k, d, \tau)$-Semi-Hypercube]{definition}{noncritsemihypercube} \label{def:noncrit-semi-hyper-cube}
For $k, d \in \mathbb{N}$ and $\tau \in [0, 1)$, a \textit{noncritical $(k, d, \tau)$-semi-hypercube} is an induced subgraph of a $(k, d)$-semi-hypercube with no $\tau$-critical clusters.
\end{restatable}

We show in \Cref{sec:rand-routing} the above-claimed oblivious routing property, stated formally below in \Cref{lem:obliv-routing-noncritical}. 

\begin{restatable*}{lemma}{oblivroutingnoncritical}\label{lem:obliv-routing-noncritical}
    Let $G = (V, E)$ be a \textbf{noncritical} $(k, d, \tau)$-hypercube for $\frac{1}{\tau}, k \geq 27d$. Then, there exists an oblivious routing $R$ on $G$ of length $\bigO(d)$ that can route any unit demand on $V$ with congestion $\bigO(d)$. A path from this oblivious routing can be sampled in expected running time $\bigO(d)$.
\end{restatable*}

We now consider \emph{vertex} pruning for maintaining that a graph that is initially a $(k, d)$-semi-hypercube remains a noncritical $(k, d, \tau)$-semi-hypercube. In this setting, there is a sequence of updates, each giving a vertex $v \in V$ to be deleted. Then, the pruning scheme can select an additional subset of vertices $V^{\mathrm{prune}} \subseteq (V - v)$ to prune. The vertices $\{v\} \cup V^{\mathrm{prune}}$ (and all edges incident to them) are then permanently removed from the graph. The graph $G[V]$ should always be a $(k, d, \tau)$-semi-hypercube after each update.

The quality of the pruning scheme to be optimized is the \textit{pruning ratio}, of which there are two variants. The \textit{worst-case} pruning ratio of a scheme is the maximum number of pruned vertices in any single update, i.e. a bound on the \emph{maximum} size of $V^{\mathrm{prune}}$. The \textit{amortized} pruning ratio of a scheme is the maximum ratio of total pruned vertices to the number of updates, i.e. a bound on the \emph{average} size of $V^{\mathrm{prune}}$.

For maintaining that a graph remains a noncritical semi-hypercube, we cannot obtain a nontrivial worst-case pruning ratio. However, there is an obvious pruning algorithm with an amortized pruning ratio guarantee: once a cluster becomes $\tau$-critical, prune all of its remaining vertices. This is clearly necessary, and optimal as $\tau$-criticality is a monotone property on the remaining vertices within a cluster.

This strategy obtains an amortized pruning ratio of $\tau^{-d} - 1$, which is easy to show by induction on $d$. Notice that the deletions and pruning on each child cluster of the root cluster are done independently of the other child clusters, until the point when the root cluster has lost strictly more than a $\tau$-fraction of its original $n = k^d$ vertices, at which point every remaining vertex in the graph is pruned. By induction, before that occurs, the amortized pruning ratio is at worst $\tau^{-(d - 1)} - 1$, i.e. the total number of vertices removed is at most $\tau^{-(d - 1)}$ times the number of deletions so far. Thus, for the root cluster to shrink to that size, strictly more than $(\tau \cdot n) / \tau^{-(d - 1)} = \tau^d \cdot n$ vertices have to have been deleted, giving an amortized pruning ratio of $\tau^{-d} - 1$, as desired. 

\begin{lemma}
    For given $k, d \in \mathbb{N}$ and $\tau \in (0, 1)$, there is a deterministic vertex pruning scheme with amortized pruning ratio $(\tau^{-d} - 1)$ for maintaining that a graph that is initially a $(k, d)$-semi-hypercube remains a noncritical $(k, d, \tau)$-semi-hypercube.
\end{lemma}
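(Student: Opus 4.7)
The plan is to formalize and analyze the natural greedy scheme already hinted at: after processing each adversarial deletion, the algorithm checks whether any cluster $\sigma$ has become $\tau$-critical, and if so prunes every remaining vertex in that cluster. This may cascade (pruning vertices in a child cluster can push an ancestor cluster past the threshold), so one repeats until no cluster is $\tau$-critical. Correctness is immediate from two observations: the invariant we must preserve is exactly the absence of $\tau$-critical clusters, and $\tau$-criticality within a fixed cluster $\sigma$ is monotone under vertex removals (once $|V_\sigma| < (1-\tau)k^{d-|\sigma|}$, the only way to make $\sigma$ satisfy the definition of noncritical is to empty it). Hence pruning all of $V_\sigma$ whenever $\sigma$ becomes $\tau$-critical is both sufficient and, up to ordering, forced by the definition.

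The quantitative claim is the amortized pruning ratio $\tau^{-d}-1$, which I would prove by induction on $d$. For $d=0$ the graph is a single vertex, no cluster can ever be critical, and the ratio is $0 = \tau^0 - 1$. For the inductive step, I would use the structural fact that the pruning activity restricted to any child cluster $V_{\sigma+i}$ of the root depends only on deletions within that child: the scheme looks at the sizes $|V_{\sigma+i}|$ for clusters $\sigma+i$, and both the criticality test and the prune action are local to the subtree of the cluster tree rooted at $\sigma+i$. Thus, until the moment at which the root itself first becomes $\tau$-critical, we may apply the inductive hypothesis inside each of the $k$ children independently: if $D_i$ adversarial deletions have occurred inside child $i$ and $P_i$ vertices have been pruned inside it, then $P_i \le (\tau^{-(d-1)}-1) D_i$, and consequently the total number of removed vertices in that child is $D_i + P_i \le \tau^{-(d-1)} D_i$.

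The main obstacle, and the step where the induction has to be set up carefully, is the moment the root cluster itself becomes $\tau$-critical and is then emptied. At that instant, the total number of vertices removed from the whole graph is strictly greater than $\tau k^d$, i.e.\ $\sum_i (D_i+P_i) > \tau k^d$. Combining with the inductive per-child bound yields $\sum_i D_i > \tau k^d / \tau^{-(d-1)} = \tau^d k^d$. Since the total number of vertices ever removed (adversarial plus pruned, including the final emptying of the root) is at most $k^d$, the total pruned so far is at most $k^d - \sum_i D_i < (1-\tau^d) k^d$, and dividing by the number of adversarial deletions $\sum_i D_i > \tau^d k^d$ gives an amortized ratio strictly less than $(1-\tau^d)/\tau^d = \tau^{-d}-1$.

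After this event the root cluster is empty and the process terminates on this semi-hypercube, so the bound above is the final amortized ratio. Formally, the argument should be phrased as an invariant maintained throughout the run: at every point in time, $P \le (\tau^{-d}-1) D$ where $D$ is the total number of adversarial deletions and $P$ is the total number of pruned vertices. Before the root becomes critical this follows from summing the child invariants given by induction (using that $\tau^{-(d-1)}-1 \le \tau^{-d}-1$); at the moment the root empties, the calculation above gives the bound; and no activity occurs afterwards. This completes the induction and establishes the stated amortized pruning ratio of $\tau^{-d}-1$.
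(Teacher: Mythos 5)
Your proposal is correct and follows essentially the same route as the paper: induct on $d$, note that until the root becomes $\tau$-critical the pruning inside each child cluster is driven only by deletions inside that child, apply the inductive ratio bound per child to deduce that strictly more than $\tau^d k^d$ adversarial deletions must have occurred before the root empties, and conclude the amortized ratio is at most $\tau^{-d}-1$. Your write-up just makes explicit a few steps (the cascading prunes, the invariant $P \le (\tau^{-d}-1)D$ at all times, and the final accounting $P \le k^d - D$) that the paper leaves implicit.
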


The above simple scheme easily generalizes to the \textit{batched} worst-case pruning setting with a limited number of batches. In this setting, the updates come in a small number $b$ of batches. Each batch gives a subset of vertices $V^{\mathrm{del}} \subseteq V$ to be deleted, after which the pruning scheme selects a subset of vertices $V^{\mathrm{prune}} \subseteq V \setminus V^{\mathrm{del}}$ to be pruned, after which the vertices $V^{\mathrm{del}} \cup V^{\mathrm{prune}}$ (and all edges incident to them) are removed from the graph. The desired property of the graph should hold \emph{after every batch}, i.e. the deletions in a batch do not have to be processed one-by-one. The worst-case pruning ratio in this setting is the maximum ratio of pruned vertices to deleted vertices $|V^{\mathrm{prune}}| / |V^{\mathrm{del}}|$ in any single batch.


The simple adaptation of the amortized scheme to this setting is to start with a stricter criticality threshold, then increase it with every batch until it reaches the desired threshold $\tau$ after the last batch. If the criticality threshold starts at and is increased by $\tau / b$ after each batch, one obtains a pruning ratio of $(\tau / b)^{-d} - 1$ with similar analysis as the amortized algorithm. \Cref{lem:batched-pruning} is proven in \Cref{sec:amortized-formal-pruning} for completeness.

\begin{restatable}[Batched pruning of a noncritical semi-hypercube]{lemma}{batchedpruning} \label{lem:batched-pruning}
For given $k, d, b \in \mathbb{N}$ and $\tau \in (0, 1)$, there is a deterministic algorithm with a worst-case pruning ratio $(\tau / b)^{-d} - 1$, and work and depth of $\bigO(kd \cdot (\tau / b)^{-d} \cdot |V^{\mathrm{del}}|)$ and $\bigO(d \log n)$ respectively per update for maintaining that a graph that is initially a $(k, d)$-semi-hypercube remains a noncritical $(k, d, \tau)$-semi-hypercube in the batched vertex pruning setting with up to $b$ batches.
\end{restatable}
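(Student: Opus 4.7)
The plan is to prove \Cref{lem:batched-pruning} by implementing the natural batched analog of the amortized scheme: after batch $i$ (for $1 \le i \le b$), I would enforce the stronger property that the graph is a noncritical $(k,d,\tau_i)$-semi-hypercube for the interpolated threshold $\tau_i := (i/b)\cdot \tau$. Since $\tau_i \le \tau$ gives $(1-\tau_i) \ge (1-\tau)$, this implies the required noncriticality at threshold $\tau$. The algorithm for batch $i$ given the deletion set $V^{\mathrm{del}}$ would first remove $V^{\mathrm{del}}$, then sweep the cluster tree bottom-up and prune every cluster $\sigma$ that is currently $\tau_i$-critical (i.e., nonempty with $|V_\sigma| < (1-\tau_i)k^{d-|\sigma|}$), iterating until no $\tau_i$-critical cluster remains. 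Correctness is then immediate from the termination of this monotone process.

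The content is bounding the pruning ratio $|V^{\mathrm{prune}}|/|V^{\mathrm{del}}|$ per batch. I would prove by induction on $d$ the claim: if the graph is a noncritical $(k, d, \tau_{i-1})$-semi-hypercube before batch $i$, then the algorithm produces $|V^{\mathrm{prune}}| \le ((\tau/b)^{-d} - 1)\cdot |V^{\mathrm{del}}|$ and leaves a noncritical $(k,d,\tau_i)$-semi-hypercube. The base case $d = 0$ is trivial (singleton cluster). For the inductive step with $D := |V^{\mathrm{del}}|$ deletions, I split into two cases based on whether $D$ is above or below the threshold $(\tau/b)^d k^d$.

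Case A ($D \ge (\tau/b)^d k^d$): the algorithm may prune the entire root cluster, which contributes at most $k^d - D$ additional prunings since the root holds at most $k^d$ vertices, yielding ratio $\le k^d/D - 1 \le (\tau/b)^{-d} - 1$. Case B ($D < (\tau/b)^d k^d$): I would partition the deletions into sub-batches $D_1,\dots,D_k$ received by the $k$ child clusters ($\sum_j D_j = D$) and apply the inductive hypothesis to each nonempty child, which is itself a $\tau_{i-1}$-noncritical $(k,d-1)$-semi-hypercube; the resulting sub-prunings sum to at most $((\tau/b)^{-(d-1)}-1)\cdot D$. I then need to verify that the root stays $\tau_i$-noncritical after these sub-prunings, so the sweep does not have to escalate to it: the total vertices removed from the root during this batch is at most $D + ((\tau/b)^{-(d-1)}-1)D < (\tau/b)^d k^d \cdot (\tau/b)^{-(d-1)} = (\tau/b) k^d$, which when added to the pre-batch missing count $\le \tau_{i-1} k^d$ stays within $\tau_i k^d$. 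The ratio is then $\le (\tau/b)^{-(d-1)}-1 \le (\tau/b)^{-d}-1$, completing the induction.

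For implementation and cost, I would maintain at each cluster $\sigma$ a counter for $|V_\sigma|$, updating the $d$ ancestors of each removed vertex; criticality checks cost $\bigO(1)$ given current counters, and removing a vertex together with its $(k-1)d$ incident edges is $\bigO(kd)$ work. Since the total prunings per batch are bounded by $(\tau/b)^{-d}\cdot|V^{\mathrm{del}}|$, the total batch work is $\bigO(kd \cdot (\tau/b)^{-d}\cdot|V^{\mathrm{del}}|)$. The sweep parallelizes as $\bigO(d)$ rounds from leaves to root, with $\bigO(\log n)$ depth per round for parallel counter aggregation and critical-cluster detection, for $\bigO(d \log n)$ total depth. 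The main obstacle is the inductive book-keeping in Case B: the case split threshold must be $(\tau/b)^d k^d$, not the naive $(\tau/b) k^d$, so that the multiplicative inductive factor $(\tau/b)^{-(d-1)}$ on sub-prunings fits exactly inside the per-batch root budget of $\tau/b$.
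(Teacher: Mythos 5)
Your proposal is correct and follows essentially the same route as the paper: the same algorithm with interpolated thresholds $\tau_i = (i/b)\tau$, a depth induction showing each level can only multiply the pruning ratio by $b/\tau$, and the same counter-based implementation giving $\bigO(kd\cdot(\tau/b)^{-d}\cdot|V^{\mathrm{del}}|)$ work and $\bigO(d\log n)$ depth. The only difference is presentational — your case split on $|V^{\mathrm{del}}|$ versus $(\tau/b)^d k^d$ at the root is the contrapositive packaging of the paper's per-cluster recursion on the ratio $\mathrm{rat}_{d'}$ (cluster became critical vs.\ did not) — so the underlying argument is the same.
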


\subsection{Worst-Case Pruning of Semi-Hypercubes}\label{sec:selfpruning-overview}


Notice that the pruning ratio of the batched pruning algorithm of \Cref{lem:batched-pruning} becomes trivial as the batch bound $b$ increases to $k$. In this subsection, we now consider pruning with a worst-case pruning ratio bound, \emph{without any limit on the number of updates}. It is easy to see that unlike in the amortized or batched setting, maintaining a \emph{noncritical} $(k, d, \tau)$-semi-hypercube with a nontrivial worst-case pruning ratio is impossible: in particular, in the update where the root cluster becomes $\tau$-critical, at least $\tau n$ vertices have to be pruned. 
Thus, it is necessary to relax the noncriticality condition.

As the first change from noncritical to general $(k, d, \tau)$-semi-hypercubes, we now allow $\tau$-critical clusters, but require that each cluster has \emph{at most one} critical child cluster. 

\textbf{Maintaining $\leq 1$ Critical Child.} Recall a $\tau$-critical cluster is a \emph{nonempty} cluster with strictly less than $(1 - \tau) n_0$ vertices (where $n_0$ is its initial size). In a decremental setting, a critical cluster only becomes noncritical upon the loss of its final vertex. To maintain that a fixed cluster has at most one critical child, we thus need to empty out any child cluster small enough to be critical before any other child cluster becomes too small. This gives rise to the following intuitive strategy: whenever a vertex is removed from some cluster $\sigma$, if that vertex was not removed from the smallest child cluster of $\sigma$, prune some number of vertices from the smallest child cluster of $\sigma$.

Our strategy is essentially this.  We maintain for each cluster $\sigma$ a \textit{target child} $t_\sigma \in [k]$ initialized arbitrarily, and maintain that at any time, no child cluster of $\sigma$ other than possibly the target child cluster $\sigma + t_\sigma$ can be critical. Whenever a vertex is removed from a child cluster of $\sigma$ other than the target child cluster, we perform $\rho$ prunes (where $\rho$ is a parameter whose value is determined later) on the target child cluster. Whenever a cluster's target child cluster becomes empty, the smallest nonempty child cluster is chosen as the new target child. 

Specifically which vertices should be pruned? From the point of view of a cluster that wants to maintain its own at-most-one-critical-child property, the individual vertices in any child cluster are all interchangeable. Thus, if a cluster wants to prune a vertex from some specific child cluster, the exact vertex to be pruned can be determined to benefit that child.

This motivates having a function $\Call{Trim}{\sigma}$, that on a non-leaf cluster $\sigma$, recursively calls $\Call{Trim}{\sigma + t_\sigma}$ to prune a vertex from the target child cluster of $\sigma$, and on a leaf $\sigma$, prunes that leaf.

Note that the only clusters whose child cluster sizes are affected by the deletion of a vertex are the ancestor clusters of that vertex. We can iterate over these ancestors from leaf to root, performing these trims.

Thus, for $\rho$ large enough, we obtain a worst-case pruning algorithm with pruning ratio $(\rho + 1)^d - 1$ for maintaining that each cluster has at most one critical child cluster: while processing descendant clusters of $\sigma$, there is some total number of deleted and pruned vertices. To maintain the at-most-one critical child property of $\sigma$, we make up to $\rho$ times that many calls to $\Call{Trim}{\sigma}$, multiplying the total amount by $(\rho + 1)$. 

\begin{algorithm}[H]
    \caption{Worst-case pruning for maintaining $\leq 1$ critical child cluster}
    \label{alg:strong-self-pruning-intro}
    \begin{algorithmic}[1]
        \LeftComment{Set the new target child cluster of $\sigma$ (to the smallest nonempty child).}
        \Function{Retarget}{$\sigma$}
            \State $t_\sigma \gets \arg\min_{i \in [k] : V_{\sigma + i} \neq \emptyset} |V_{\sigma + i}|$
        \EndFunction
        \State
        \Function{Trim}{$\sigma$}
            \If{$\sigma$ is a leaf}
                \State $V \gets V \setminus \{\sigma\}$
                \State \Return $\sigma$
            \Else
                \State Let $v_{\mathrm{trim}} := \Call{Trim}{\sigma + t_\sigma}$
                \If{$V_{\sigma + t_\sigma} = \emptyset$} $\Call{Retarget}{\sigma}$ \EndIf
                \State \Return $v_{\mathrm{trim}}$
            \EndIf
        \EndFunction
        \State
        \LeftComment{Delete vertex $v$, return pruned vertices}
        \Function{Delete}{$v$}
            \State $V \gets V \setminus \{v\}$
            \State Let $\sigma \gets v$, $V^\mathrm{prune} \gets \emptyset$
            \While{$|\sigma| > 0$}
                \State $\sigma \gets \Call{Parent}{\sigma}$
                \If{$v$ is a descendant of $\sigma + t_\sigma$}
                    \If{$V_{\sigma + t_\sigma} = \emptyset$} $\Call{Retarget}{\sigma}$ \EndIf
                \Else
                    \State Let $\mathrm{trim\_cou} := \Call{min}{\rho \cdot (|V^\mathrm{prune}| + 1), |V_\sigma|}$
                    \For{$\mathrm{trim\_cou}$ times} $V^\mathrm{prune} \gets V^\mathrm{prune} \cup \{\Call{Trim}{\sigma}\}$ \EndFor
                \EndIf
            \EndWhile
            \State \Return $V^\mathrm{prune}$
        \EndFunction
    \end{algorithmic}
\end{algorithm}

It remains to bound the necessary value of $\rho$. We will show that it suffices to select $\rho = O(\log(k) / \tau)$. For this purpose, fix some cluster $\sigma$ with initial child cluster sizes $n_0 := k^{d - |\sigma| - 1}$, and let $x = O(n_0 / \rho)$ be the maximum number of vertices that can be removed from child clusters other than the current target child cluster before the current target child cluster becomes empty.

Suppose that the strategy of an adversary trying to create a critical non-target child cluster is to always simply delete a vertex from the smallest nonempty non-target child cluster. Then, we would need $x \leq \tau n_0$ to hold, for which it even suffices to select $\rho = O(1 / \tau)$.

However, there exists a better strategy for the adversary: each round, delete a vertex from the \emph{largest} non-target child cluster. Then, when the first target child cluster becomes empty, each other child cluster has lost $\Omega\left(x / (k - 1)\right)$ vertices. Repeating the argument, when all but two of the $k$ initial child clusters have become empty, each child cluster has lost $\Omega(x \log k)$ vertices. Thus, it is necessary to select $\rho = \Omega(\log(k) / \tau)$.

To show that $\rho = O(\log(k) / \tau)$ suffices, consider the moment when $\Call{Retarget}{\sigma}$ is called while there are $k'$ nonempty child clusters, and denote by $B_{k', j}$ the maximum total number of missing vertices among some $j$ specific nonempty child clusters of $\sigma$ at that moment. We want $B_{k', 1} \leq \tau n_0$ to hold for all $k' \in [k]$.

Initially, $B_{k, j} = 0$ for all $j$. For all $k' \in [k - 1]$ and $j \in [k']$, since $\Call{Retarget}{\sigma}$ always selects the smallest nonempty child cluster as the target child cluster, we have
\begin{equation}\label{eq:missing-vertex-bound}
    B_{k', j} \leq x + \frac{j}{j + 1} \cdot B_{k' + 1, j + 1}
\end{equation}
Thus, $B_{k', 1} \leq x \cdot H_{k - k'} = O(x \log k)$ where $H_i$ is the $i$\textsuperscript{th} harmonic number, and $\rho = \bigO(\log(k) / \tau)$ suffices.

We have now shown that for a cluster $\sigma$, if $\rho = \bigO(\log(k) / \tau)$ prunes are performed on the target child cluster of $\sigma$ whenever a vertex is removed from a non-target child cluster of $\sigma$, the cluster $\sigma$ will always have at most one critical child cluster. Thus, we have a simple pruning scheme in \Cref{alg:strong-self-pruning-intro} for maintaining that each cluster has at most one critical child cluster, with a worst-case pruning ratio of $(\rho + 1)^d - 1 = \bigO(\log(k) / \tau)^d$.

\textbf{Connectivity of the Critical Child.} Unfortunately, allowing each cluster even just one critical child breaks the routing properties that hold for a noncritical semi-hypercube. Even though each other nonempty child cluster of a cluster $\sigma$ is noncritical and thus has lost at most a $\tau$-fraction of its vertices, if those vertices were exactly those with an edge to the remaining vertices in the critical child cluster of $\sigma$, each vertex in the critical child cluster could have cluster-$\sigma$-degree zero: see \Cref{fig:crit-child-degree}.

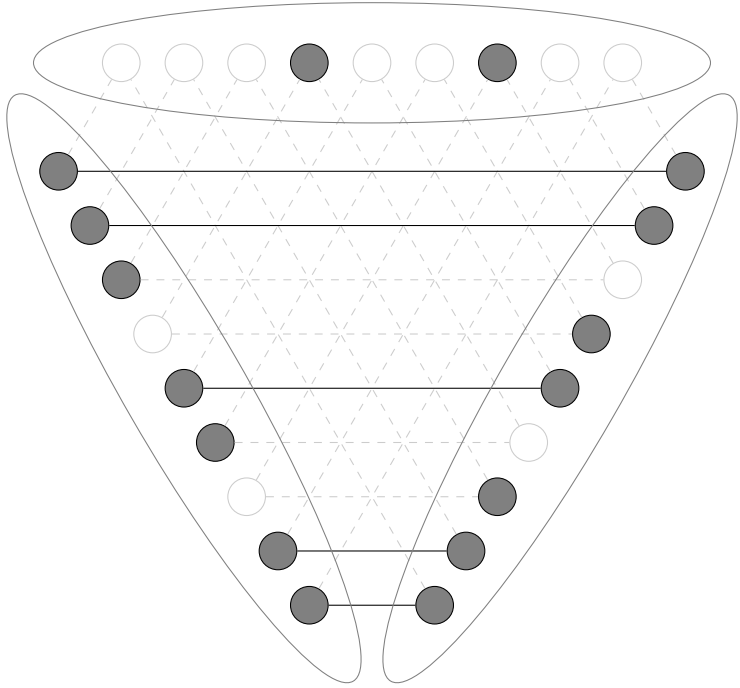
\begin{figure}[h]
    \centering
    \begin{tikzpicture}[main/.style={draw, circle, minimum size=0.5cm}, minor/.style={draw, circle, minimum size=0.5cm}, edges/.style={draw}, txt/.style = {draw=none}]

    \coordinate(T11) at (-5, 4);
    \coordinate(T21) at (5, 4);
    \coordinate(T12) at (0, -4.66);

    \coordinate(C11) at ($2/12*(T11)+10/12*(T12)$);
    \coordinate(C19) at ($10/12*(T11)+2/12*(T12)$);

    \coordinate(C21) at ($2/12*(T21)+10/12*(T12)$);
    \coordinate(C29) at ($10/12*(T21)+2/12*(T12)$);
    
    \coordinate(C31) at ($2/12*(T11)+10/12*(T21)$);
    \coordinate(C39) at ($10/12*(T11)+2/12*(T21)$);

    \coordinate(C12) at ($1/8*(C11)+7/8*(C19)$);
    \coordinate(C13) at ($2/8*(C11)+6/8*(C19)$);
    \coordinate(C14) at ($3/8*(C11)+5/8*(C19)$);
    \coordinate(C15) at ($4/8*(C11)+4/8*(C19)$);
    \coordinate(C16) at ($5/8*(C11)+3/8*(C19)$);
    \coordinate(C17) at ($6/8*(C11)+2/8*(C19)$);
    \coordinate(C18) at ($7/8*(C11)+1/8*(C19)$);
    \coordinate(C22) at ($1/8*(C21)+7/8*(C29)$);
    \coordinate(C23) at ($2/8*(C21)+6/8*(C29)$);
    \coordinate(C24) at ($3/8*(C21)+5/8*(C29)$);
    \coordinate(C25) at ($4/8*(C21)+4/8*(C29)$);
    \coordinate(C26) at ($5/8*(C21)+3/8*(C29)$);
    \coordinate(C27) at ($6/8*(C21)+2/8*(C29)$);
    \coordinate(C28) at ($7/8*(C21)+1/8*(C29)$);
    \coordinate(C32) at ($1/8*(C31)+7/8*(C39)$);
    \coordinate(C33) at ($2/8*(C31)+6/8*(C39)$);
    \coordinate(C34) at ($3/8*(C31)+5/8*(C39)$);
    \coordinate(C35) at ($4/8*(C31)+4/8*(C39)$);
    \coordinate(C36) at ($5/8*(C31)+3/8*(C39)$);
    \coordinate(C37) at ($6/8*(C31)+2/8*(C39)$);
    \coordinate(C38) at ($7/8*(C31)+1/8*(C39)$);

    \node[main, fill=black!50] (c11) at (C11) {};
    \node[main, fill=black!50] (c12) at (C12) {};
    \node[main, fill=black!50] (c13) at (C13) {};
    \node[main, color=black!20] (c14) at (C14) {};
    \node[main, fill=black!50] (c15) at (C15) {};
    \node[main, fill=black!50] (c16) at (C16) {};
    \node[main, color=black!20] (c17) at (C17) {};
    \node[main, fill=black!50] (c18) at (C18) {};
    \node[main, fill=black!50] (c19) at (C19) {};

    \node[main, fill=black!50] (c21) at (C21) {};
    \node[main, fill=black!50] (c22) at (C22) {};
    \node[main, color=black!20] (c23) at (C23) {};
    \node[main, fill=black!50] (c24) at (C24) {};
    \node[main, fill=black!50] (c25) at (C25) {};
    \node[main, color=black!20] (c26) at (C26) {};
    \node[main, fill=black!50] (c27) at (C27) {};
    \node[main, fill=black!50] (c28) at (C28) {};
    \node[main, fill=black!50] (c29) at (C29) {};

    \node[main, color=black!20] (c31) at (C31) {};
    \node[main, color=black!20] (c32) at (C32) {};
    \node[main, color=black!20] (c33) at (C33) {};
    \node[main, fill=black!50] (c34) at (C34) {};
    \node[main, color=black!20] (c35) at (C35) {};
    \node[main, color=black!20] (c36) at (C36) {};
    \node[main, fill=black!50] (c37) at (C37) {};
    \node[main, color=black!20] (c38) at (C38) {};
    \node[main, color=black!20] (c39) at (C39) {};
    


    
    \draw[dashed, color=black!20] (c11) -- (c31);
    \draw[dashed, color=black!20] (c12) -- (c32);
    \draw[dashed, color=black!20] (c13) -- (c33);
    \draw[dashed, color=black!20] (c14) -- (c34);
    \draw[dashed, color=black!20] (c15) -- (c35);
    \draw[dashed, color=black!20] (c16) -- (c36);
    \draw[dashed, color=black!20] (c17) -- (c37);
    \draw[dashed, color=black!20] (c18) -- (c38);
    \draw[dashed, color=black!20] (c19) -- (c39);
    
    \draw[dashed, color=black!20] (c29) -- (c31);
    \draw[dashed, color=black!20] (c28) -- (c32);
    \draw[dashed, color=black!20] (c27) -- (c33);
    \draw[dashed, color=black!20] (c26) -- (c34);
    \draw[dashed, color=black!20] (c25) -- (c35);
    \draw[dashed, color=black!20] (c24) -- (c36);
    \draw[dashed, color=black!20] (c23) -- (c37);
    \draw[dashed, color=black!20] (c22) -- (c38);
    \draw[dashed, color=black!20] (c21) -- (c39);

    \draw (c11) -- (c21);
    \draw (c12) -- (c22);
    \draw[dashed, color=black!20] (c13) -- (c23);
    \draw[dashed, color=black!20] (c14) -- (c24);
    \draw (c15) -- (c25);
    \draw[dashed, color=black!20] (c16) -- (c26);
    \draw[dashed, color=black!20] (c17) -- (c27);
    \draw (c18) -- (c28);
    \draw (c19) -- (c29);

    \draw[color=black!50] (C35) ellipse (4.5cm and 0.8cm);
    \draw[color=black!50, rotate around={60:(C25)}] (C25) ellipse (4.5cm and 0.8cm);
    \draw[color=black!50, rotate around={120:(C15)}] (C15) ellipse (4.5cm and 0.8cm);
    
    \end{tikzpicture}
    \caption{A subgraph of a $(3, 3)$-semi-hypercube induced on the filled vertices, where no vertex in the small child cluster (top) has edges to the cluster's sibling clusters, despite each of those siblings missing only two vertices, as the missing vertices happen to be precisely those that were initially incident to the remaining vertices in the small child. The child clusters' internal structures are not shown.} \label{fig:crit-child-degree}
\end{figure}

We need an additional property guaranteeing the connectivity of the critical child cluster to the other children of its parent. Note that between any pair of noncritical child clusters, there is an almost-complete matching, with at least a $(1 - 2\tau)$-fraction of vertices in each child being matched to some vertex in the other. One might hope for the same property between the critical child and each of its individual nonempty siblings, with at least a large fraction of the vertices in the critical child being matched to some vertex in the sibling, but unfortunately, this turns out to be too strong of a property to maintain. Instead, we settle for requiring this \emph{on average} over the sibling clusters, which is still sufficient for routing out of the critical child cluster with low congestion.

With this, we have arrived at the definition of a $(k, d, \tau)$-semi-hypercube:

\begin{restatable}[$(k, d, \tau)$-semi-hypercube]{definition}{semihypercube} \label{def:semi-hyper-cube}
For $k, d \in \mathbb{N}$ and $\tau \in (0, 1)$, an induced subgraph $G[V]$ of a $(k, d)$-semi-hypercube $G = ([k]^d, E)$ is a $(k, d, \tau)$-semi-hypercube, if for every cluster $\sigma \in [k]^{< d}$,
\begin{enumerate}
    \item At most one child cluster $\sigma + t$ of the cluster $\sigma$ is $\tau$-critical, i.e. is of size $|V_{\sigma + t}| < (1 - \tau) k^{d - |\sigma| - 1}$.
    \item If such a child cluster $\sigma + t$ exists, the average cluster-$\sigma$-degree of $V_{\sigma + t}$ is at least $\frac{9}{10} k'$ where $k'$ is the number of nonempty, noncritical child clusters of $\sigma$, i.e. $|E[V_{\sigma + t}, V_{\sigma} \setminus V_{\sigma + t}]| \geq \frac{9}{10} \cdot k' \cdot |V_{\sigma + t}|$. 
\end{enumerate}
\end{restatable}

We later show in \Cref{thm:obliv-routing-general} that even with these relaxations to a noncritical $(k, d, \tau)$-semi-hypercube, an easy-to-sample-from oblivious routing of length $\bigO(d^3)$ and congestion on unit demands of $\bigO(kd \cdot 10^d)$ exists on any $(k, d, \tau)$-semi-hypercube with $\tau$ small enough and $k$ large enough relative to $d$. While the congestion of the oblivious routing is worse than with a non-critical semi-hypercube, the loss in length is still crucially only polynomial.

The fraction $\frac{9}{10}$ in the definition is a fixed constant rather than being $(1 - 2 \tau)$ or a parameter, as a larger fraction, even as large as $(1 - n^{-(1 - \epsilon)})$ for constant $\epsilon > 0$, would not improve the routing properties of the semi-hypercube (this can be observed later from the proof of \Cref{lem:obliv-routing-lower-bound}).

We now return to considering how to maintain through pruning with a worst-case pruning ratio guarantee that the graph remains a $(k, d, \tau)$-semi-hypercube. Specifically, we want to maintain that the following two properties hold for each cluster $\sigma$:
\begin{enumerate}
    \item The cluster has no critical child cluster, other than possibly its target child cluster
    \item Suppose the cluster $\sigma$ is the target child cluster of its parent cluster $\sigma'$. Then, the average cluster-$\sigma'$-degree of vertices in $V_\sigma$ is at least $\frac{9}{10}$ times the number of nonempty sibling clusters of $\sigma$. \label{prop:deg-ratio}
\end{enumerate}


Note that while the second property in the definition of a $(k, d, \tau)$-semi-hypercube is stated from the point of view of the parent cluster, for the pruning algorithm, it is more advantageous to look at it as a property of the child as above. To obtain an even easier to work-with form of the property, we now define \textit{marked vertex sets} of clusters, and restate the property as a bound on the fraction of marked vertices in $\sigma$. 

\textbf{Marked Vertices.} The set of \textit{marked vertices} $M_\sigma$ of a cluster $\sigma$ that is the target child cluster of its parent cluster $\sigma'$ is the subset of vertices in the cluster $\sigma$ with cluster-$\sigma'$-degree strictly less than $\frac{19}{20} k'$, where $k'$ is the number of nonempty non-$\sigma$ children of $\sigma'$.

Very importantly, note that each cluster has an individual set of marked vertices specific to that cluster. For any cluster $\sigma$, all the sets $M_{\sigma + i}$ of non-target children $\sigma + i$ are empty, and the marked vertex set $M_\rt$ of the root cluster is always empty. It \emph{does not hold} that $M_\sigma = \bigcup_{i \in [k]} M_{\sigma + i}$.

With this definition of marked vertices of a cluster, \hyperref[prop:deg-ratio]{Property 2} follows from the following, simpler condition:
\begin{enumerate}
    \item[2'.] At most a $\frac{1}{20}$-fraction of vertices in the cluster $\sigma$ are marked in the cluster, i.e. $|V_\sigma \cap M_\sigma| \leq \frac{1}{20} |V_\sigma|$. \label{prop:mark-ratio}
\end{enumerate}

We can observe three constraints on how vertices become marked:
\begin{itemize}
    \item The ratio of marked vertices $|V_\sigma \cap M_\sigma| / |V_\sigma|$ in a cluster $\sigma$ when it initially becomes the target child cluster of its parent cluster $\sigma'$ is very small, specifically at most $O(\tau)$, as the cluster must be noncritical when initially becoming its parent's target child.
    \item The removal of a vertex $v$ from a child cluster of $\sigma'$ other than $\sigma$ results in at most one new vertex in $\sigma$ entering $M_\sigma$, as the edge set between any pair of child clusters of $\sigma'$ forms a matching.
\end{itemize}
Furthermore, already in \Cref{alg:strong-self-pruning-intro}, whenever a vertex is removed from a child cluster other than the target child cluster of $\sigma'$, that removal is followed by $\rho$ calls to $\Call{Trim}{\sigma'}$, each recursively calling $\Call{Trim}{\sigma}$ unless the cluster $\sigma$ becomes empty. Thus, whenever a vertex enters $M_\sigma$ outside the "initialization" of $M_\sigma$ when the cluster $\sigma$ becomes the target child cluster of its parent cluster, $\rho$ trims to the cluster $\sigma$ follow.

Even with the trims following any new marking, as the target child cluster is selected in \Cref{alg:strong-self-pruning-intro} based only on the sizes of the child clusters, the ratio of marks can degrade from the initial $O(\tau)$. To maintain both that no non-target child is critical and the ratio of marks, $\Call{Retarget}{\sigma}$ needs to consider both the size and ratio of marked vertices of a child cluster when deciding the next target. 

\textbf{Alternating Targeting.} Suppose that instead of always selecting the smallest nonempty child cluster as the new target child cluster, $\Call{Retarget}{\sigma}$ selected the smallest nonempty child cluster as the new target child cluster half the time (say, when the number of remaining nonempty clusters is odd), and the child cluster $\sigma + t$ containing the most vertices in $M_\sigma$, i.e. of maximum $|V_{\sigma + t} \cap M_\sigma|$, the other half of the time.

Asymptotically, this would not change the value of $\rho$ required to maintain that non-target child clusters are never critical: \Cref{eq:missing-vertex-bound} would hold when $k'$ is even, and otherwise we'd still have $B_{k', j} \leq x + B_{k' + 1, j}$.

We can bound the maximum number of vertices in the cluster $\sigma$ marked by the parent of $\sigma$ at any moment $\Call{Retarget}{\sigma}$ is called similarly. Denote by $C_{k', j}$ the maximum total number of vertices in $M_\sigma$ among some $j$ specific child clusters of $\sigma$ at the moment $\Call{Retarget}{\sigma}$ is called with $k'$ nonempty child clusters remaining. At the moment the cluster $\sigma$ initially becomes the target child cluster of its parent, causing the initial assignment of marks, the cluster $\sigma$ has lost at most a $\tau$ fraction of its original vertices, and in particular has at least $(1 - \tau)k$ nonempty child clusters. Thus, for $k' + 1 \leq (1 - \tau)k$,
\begin{equation}\label{eq:marked-vertex-bound}
    C_{k', j} \leq x + \left\{ \begin{array}{ll}
        (C_{k' + 1, j + 1}) \cdot \frac{j}{j + 1} & \text{when $k'$ is odd}\\
        C_{k' + 1, j} &\text{when $k'$ is even}
    \end{array}\right.
\end{equation}
Combined with a bound for larger $k'$ simply taking the maximum of $x$ plus the number of initial marks $\bigO(\tau \cdot n_0 k)$, we can bound the total number of marked vertices when $\Call{Retarget}{\sigma}$ is called while there are $k'$ nonempty child clusters, $C_{k', k'}$, by $O(x \log(k)) + O(\tau \cdot n_0 k')$. At that moment, since no child cluster is critical, the total size of the cluster $\sigma$ is at least $(1 - \tau) n_0 k' = \Omega(n_0 k')$, giving a ratio of marks of $O(\log(k) / \rho + \tau)$. With the selection $\rho = \Omega(\log(k) / \tau)$ as before, this ratio of marks is $O(\tau)$.

However, bounding $C_{k', k'}$ does not guarantee \hyperref[prop:mark-ratio]{Property 2'}, as the ratio of marks must be at most $\frac{19}{20}$ after every delete-operation, not just whenever $\Call{Retarget}{\sigma}$ is called. While there are $k' \geq 2$ remaining nonempty clusters, the number of marked vertices is at most $x + C_{k', k'}$ while the number of remaining vertices is at least $(1 - \tau)n_0  \cdot (k' - 1) = \Omega(n_0 k')$, thus the ratio is still $O(\tau)$. However, the real issue is when the cluster $\sigma$ becomes \textit{degenerate}, which is a term we use to refer to nonempty clusters with exactly one child cluster.

\begin{definition}[Degenerate Cluster]
A non-leaf cluster $\sigma$ in a $(k, d, \tau)$-semi-hypercube is \textit{degenerate} if it has exactly one nonempty child cluster, i.e. $V_\sigma = V_{\sigma + t}$ for some child cluster $\sigma + t$ of $\sigma$.
\end{definition}

To control the ratio of vertices in $M_\sigma$ in a degenerate cluster $\sigma$, we have to \emph{pass down the marks} to the child of $\sigma$.

\textbf{Passing Down Marks.} Recall that with marks defined as before, if the parent cluster $\sigma$  of some cluster $\sigma + t$ is degenerate, no vertex in the cluster $\sigma + t$ is marked, i.e. $M_{\sigma + t} = \emptyset$, as the cluster $\sigma$ has no nonempty child clusters other than $\sigma + t$. We will change this: once $\Call{Retarget}{\sigma}$ is called with only one child cluster $\sigma + t$ remaining, we assign $M_{\sigma + t} \gets M_\sigma$, "passing down the marks". Afterwards, whenever a vertex enters $M_\sigma$, we also mark it in the child, adding it to $M_{\sigma + t}$.

Then, the initial number of marked vertices in $\sigma + t$ is still at most $C_{1, 1} = \bigO(\tau n_0)$. However, this bound in the initial fraction of marked vertices is a constant factor higher than the original $\bigO(\tau)$-fraction bound when $\Call{Retarget}{\sigma}$ is called with multiple nonempty child clusters remaining. As the semi-hypercube generally has non-constant depth $d$, the actual maximum ratio of marked vertices with this naive strategy becomes $\tau \cdot 2^{\bigO(d)}$. For the ratio of marks to be bounded by the constant $\frac{1}{20}$, $\tau$ would need to be set to $2^{O(-d)}$, and the trim multiplier $\rho$ selected so far to be $\bigO(\log(k) / \tau)$ would become exponential in $d$, giving a pruning ratio of $\bigO(\log k)^{d^2}$. This pruning ratio is trivial for $d = \Theta(\sqrt{\log n})$, which is unacceptable.

Thus, rather than a constant loss at every depth, we need to aim for a multiplicative loss of $(1 + \bigO(1 / d))$ in the maximum ratio of marked vertices, which multiplies up to only a constant loss over all depths.

This cannot be achieved by alternating targeting between the smallest and most marked child with every retarget: suppose that the initial ratio of vertices in a newly-target cluster $\sigma$ marked by the parent $\sigma'$ is $r > 0$, and that those marked vertices are evenly distributed over $k / 2$ children of $\sigma$, for a ratio of roughly $2r$ in each. Whenever $\Call{Retarget}{\sigma}$ selects the next target child based on size, it is possible for it to select a child with no marked vertices, thus once the cluster $\sigma$ becomes degenerate, it is possible that one of these $k / 2$ children is the only remaining nonempty child, and the initial ratio of marks on that lone child is twice as large as it was for $\sigma$.

However, this example motivates a simple solution: \emph{unbalanced} alternating targeting. Instead of selecting the next target child to be the one with the most marked vertices half the time, we can select it on that basis all but a $O(1 / d)$-fraction of the time, say whenever the number $k'$ of remaining nonempty child clusters is not $1$ modulo $d$.

For child sizes, \Cref{eq:missing-vertex-bound} still holds when $k'$ is $0$ modulo $d$, and otherwise we'd still have $C_{k', j} \leq x + C_{k' + 1, j}$, giving $C_{k', 1} \leq dx \cdot H_k = O(dx \log k)$. Thus, to still maintain that non-target children never become critical, it suffices to increase the trimming multiplier $\rho$ by a multiplicative $d$-factor.

The pruning scheme is stated as an algorithm in \Cref{alg:shc-strong-self-pruning}.

\textbf{Shadow Mark Sets.} Note that while \Cref{alg:shc-strong-self-pruning} is a pruning algorithm with a worst-case pruning ratio guarantee, it is not clear how to implement it with worst-case, not amortized work. Specifically, when a cluster $\sigma$ initially becomes the target child cluster of its parent child cluster $\sigma'$ and the set of marked vertices $M_\sigma$ are initialized inside $\Call{Retarget}{\sigma'}$, the construction of the mark set takes work proportional to the size of the cluster $\sigma$, which can have up to $n / k$ vertices.

To solve this, for the formal pruning result, we consider a different algorithm in \Cref{alg:efficient-self-pruning}, that instead of maintaining just the mark set $M_\sigma$ for every cluster, maintains every possible mark set that could ever be assigned to the cluster.

Recall that $M_\sigma$ is either the set of vertices with cluster-$\sigma'$-degree strictly less than some value (as the degree is always an integer, specifically strictly less than $\lceil \frac{19}{20} k' \rceil$, where $k'$ is the number of sibling clusters the cluster $\sigma$ has), or the marked vertex set $M_{\sigma'}$ of the parent cluster in case the parent cluster is degenerate. Thus, we maintain sets $M^{\mathrm{shadow}}_{\sigma, \sigma', j}$, containing the subset of vertices in the cluster $\sigma$, with cluster-$\sigma'$-degree strictly less than $j$. We additionally store for each cluster its mark source information $\mathrm{minfo}_\sigma$, which just stores the ancestor cluster $\sigma'$ and the $j$ that determine the mark set $M_\sigma$.

In \Cref{sec:pruning-formal-proof}, we give a formal proof of the properties of \Cref{alg:efficient-self-pruning}, specifically showing \Cref{thm:shc-strong-pruning}. 

\begin{restatable*}[Algorithmic Self-Pruning]{theorem}{selfpruning}\label{thm:shc-strong-pruning}
    Let $G = ([k]^d, E)$ be a $(k, d)$-semi-hypercube with $k \geq 16 d$. Then, \Cref{alg:efficient-self-pruning}, initialized with $k, d, \tau, E$ where $\tau \leq \frac{1}{4350d}$, maintains an induced $(k, d, \tau)$-semi-hypercube $G[V] = (V, E[V])$, with initially $V = [k]^d$, in a vertex pruning setting. Specifically, it can process any number of updates $\Call{Delete}{v}$ which do the following:
    \begin{enumerate}
        \item The query $\Call{Delete}{v^{\mathrm{del}}}$ to the data structure gives a vertex $v^{\mathrm{del}} \in V$ to be deleted.
        \item The data structure prunes additional vertices $V^{\mathrm{prune}} \subseteq V - v^{\mathrm{del}}$.
        \item The vertex set is updated to $V \gets V \setminus (V^{\mathrm{prune}} \cup \{v\})$. 
    \end{enumerate}
    The algorithm is deterministic, and has the following guarantees:
    \begin{itemize}
        \item Worst-case pruning ratio: the set of pruned vertices has size $|V^{\mathrm{prune}}| \leq \mathcal{O}\left(\frac{\log(n)}{\tau}\right)^d$.
        \item Worst-case work: $\Call{Delete}{v^{\mathrm{del}}}$ takes work $\poly(k \cdot (\log(n) / \tau)^d)$.
    \end{itemize}
\end{restatable*}

\begin{algorithm}[H]
    \caption{Semi-hypercube self-pruning}
    \label{alg:shc-strong-self-pruning}
    \begin{algorithmic}[1]
        \Class{SelfPruningSHC}
            \Data
                \State Constants $k, d, \rho \in \mathbb{N}$, $\tau \in (0, \frac{1}{4350 d})$, $E \subseteq \binom{[k]^d}{2}$
                \State Vertex set $V \subseteq [k]^d$
                \State Target child clusters $t_{\sigma} \in [k]$, $\sigma \in [k]^{< d}$
                \State Marked vertex subsets $M_{\sigma} \subseteq V_{\sigma}$, $\sigma \in [k]^{\leq d}$
            \EndData

            \Function{initialize}{$E$, $k$, $d$, $\tau$}
                \State $\rho := \left\lceil 3 d H_k / \tau \right\rceil$
                \State $V \gets [k]^d$
                \State $t_{\sigma} \gets k$ for all $\sigma \in [k]^{< d}$
                \State $M_{\sigma} \gets \emptyset$ for all $\sigma \in [k]^{< d}$
            \EndFunction

            \State

            \Function{Mark}{$\sigma$, $M^\mathrm{mark}$}
                \State $M_{\sigma} \gets M_{\sigma} \cup M^\mathrm{mark}$
                \If{the cluster $\sigma$ is degenerate} $\Call{Mark}{\sigma + t_{\sigma}, M^\mathrm{mark}}$ \EndIf
            \EndFunction

            \State
            
            \Function{Retarget}{$\sigma$}
                \State Let $S := \{i : V_{\sigma + i} \neq \emptyset\}$
                \If{$S = \{i\}$}
                    \State $t_{\sigma} \gets i$
                    \State $\Call{Mark}{\sigma + t_{\sigma}, M_{\sigma} \cap V_\sigma}$
                \ElsIf{$S \neq \emptyset$}
                    \If{$|S| = 1 \mod (2d + 1)$} Let $t_{\sigma} \gets \arg\min_{i \in S} |V_{\sigma + i}|$
                    \Else\ $t_{\sigma} \gets \arg\max_{i \in S} |V_{\sigma + i} \cap M_{\sigma}|$ \EndIf
                    \State $\Call{Mark}{\sigma + t_{\sigma}, \{v \in V_{\sigma + t_{\sigma}} : \Call{degree}{v, E_\sigma[V]} < \frac{19}{20} (|S| - 1)\}}$
                \EndIf
            \EndFunction

            \State

            \Function{Trim}{$\sigma$}
                \If{$\sigma$ is a leaf}
                    \State $V \gets V \setminus \{\sigma\}$
                    \State \Return $\sigma$
                \Else
                    \State Let $v_{\mathrm{trim}} := \Call{Trim}{\sigma + t_\sigma}$
                    \If{$V_{\sigma + t_\sigma} = \emptyset$} $\Call{Retarget}{\sigma}$ \EndIf
                    \State \Return $v_{\mathrm{trim}}$
                \EndIf
            \EndFunction

            \State

            \Function{Delete}{$v$}
                \State $V \gets V \setminus \{v\}$
                \State Let $\sigma \gets v$, $V^\mathrm{prune} \gets \emptyset$
                \While{$|\sigma| > 0$}
                    \State $\sigma \gets \Call{Parent}{\sigma}$
                    \If{$V_{\sigma + t_\sigma} = \emptyset$} $\Call{Retarget}{\sigma}$
                    \Else
                        \State Let $k' \gets |\{j \neq t_{\sigma} : V_{\sigma + j} \neq \emptyset\}|$
                        \State $\Call{Mark}{\sigma + t_{\sigma}, \{v \in V_{\sigma + t_{\sigma}} \setminus M_{\sigma + t_{\sigma}} : \Call{degree}{v, E_\sigma[V]} < \frac{19}{20} k'\}}$
                    \EndIf
                    \State Let $\mathrm{trim\_cou} := \Call{min}{\rho \cdot (|V^\mathrm{prune}| + 1), |V_\sigma|}$
                    \For{$\mathrm{trim\_cou}$ times} $V^\mathrm{prune} \gets V^\mathrm{prune} \cup \{\Call{Trim}{\sigma}\}$ \EndFor
                \EndWhile
                \State \Return $V^\mathrm{prune}$
            \EndFunction
        \EndClass
    \end{algorithmic}
\end{algorithm}

\section{Randomized Routing on a Semi-Hypercube}\label{sec:rand-routing}

In this section, we show that semi-hypercubes are excellent routers by giving a easy-to-sample-from oblivious routing with low congestion and path length on semi-hypercubes.

First, in \Cref{sec:noncritical-routing}, we show that noncritical semi-hypercubes can route unit demands with congestion $\bigO(d)$ and path length $\bigO(d)$. Note that these are asymptotically optimal among graphs with maximum degree $n^{\bigO(1 / d)}$.\footnote{Consider any graph with this maximum degree. For any vertex in this graph, the number of vertices at distance $o(d)$ from the vertex must be $n^{o(1)}$, thus for a greedily constructed unit demand $D$ built by repeatedly adding $1$ demand between the two furthest vertices not at maximum load yet, we obtain a unit demand where the total amount of demand between vertices at distance $\Omega(d)$ of each other is at least $|E| - n^{\bigO(1 / d)} \cdot n^{o(1)} = \Omega(|E|)$. Any routing of this demand has to have congestion $\Omega(d)$.}


\oblivroutingnoncritical

Even without the noncriticality condition, an efficiently sampleable oblivious routing with length polynomial in the optimal length over graphs of maximum degree $n^{\bigO(1 / d)}$ exists on a $(k, d, \tau)$-semi-hypercube. 



\begin{restatable*}{theorem}{oblivroutinggeneral}\label{thm:obliv-routing-general}
    Let $G = (V, E)$ be a $(k, d, \tau)$-semi-hypercube for $\tau \leq \frac{1}{27d}$ and $k \geq 27 \cdot 2^{d}$. Then, there exists an oblivious routing $R$ on $G$ of length $\bigO(d^3)$ that can route any unit demand on $V$ with congestion $\bigO\left(kd \cdot 10^d\right)$. A path from this oblivious routing can be sampled in expected running time $\bigO(d^3)$.
\end{restatable*}

Unfortunately, for semi-hypercubes defined as they are, the term $k \cdot 2^{\Omega(d)}$ in the congestion in \Cref{thm:obliv-routing-general} is necessary. First, for a lower bound of $\Omega(kd)$, consider just a $(k, d, \tau)$-semi-hypercube formed by removing from a $(k, d)$-semi-hypercube all child clusters of the root cluster except two. Then, the degree of every vertex is still $\Omega(kd)$, but the edge set between the two child clusters of the root cluster is a matching, thus the average congestion of edges in that matching has to be $\Omega(kd)$. The construction for a lower bound of $k \cdot 2^{\Omega(d)}$ is more complex, and is given in \Cref{sec:obliv-routing-lower-bound}.

\begin{restatable*}{lemma}{oblivroutinglowerbound}\label{lem:obliv-routing-lower-bound}
    For $k \geq 2$ and $\tau \geq \frac{1}{k}$, there is a $(k, d, \tau)$-semi-hypercube $G^{\mathrm{bad}}_d = (V, E)$ on which there exists a unit demand, for which any routing has to have congestion $k \cdot 2^{\Omega(d)}$.
\end{restatable*}

\subsection{Non-Critical Case}\label{sec:noncritical-routing}

We warm up by proving that on a \emph{non-critical} semi-hypercube, there exists an asymptotically optimal oblivious routing that can be efficiently sampled from. 

\oblivroutingnoncritical

The oblivious routing of \Cref{lem:obliv-routing-noncritical} is analogous to the Valiant routing on a hypercube \cite{ValiantRouting}: pick a random midpoint, and greedily route from it to both endpoints. A greedy path on a semi-hypercube is defined like on a hypercube as the path formed by iteratively taking an edge to increase the length of the longest common prefix between the destination and the current path endpoint, which is unique if it exists.

The main difference compared to routing on a hypercube is that not all vertices in the graph have a greedy path to a given vertex. There is still a simple and perhaps obvious strategy: keep sampling midpoints, until we find one that has a greedy path to both of the desired endpoints. We show in \Cref{lem:greedy-path-reach-many} that in a non-critical semi-hypercube with the parameters of \Cref{lem:obliv-routing-noncritical}, for any fixed vertex, at least a $\frac{8}{9}$-fraction of the vertices have a greedy path to that vertex, and thus at least a $\frac{7}{9}$-fraction of vertices have greedy paths to both endpoints.

\begin{definition}[Greedy path]
    Let $G = (V, E)$ be a $(k, d, \tau)$-semi-hypercube. For a source, destination pair $s, t \in V$, the \textit{greedy ($s$, $t$)-path} $\greedy(s, t)$, if it exists, is the path constructed by repeatedly performing the following:
    \begin{itemize}
        \item Take the longest common prefix $\sigma$ of $t$ and the current endpoint $u$ of the path being constructed.
        \item If there is an edge $(u, v)$ from $u$ to a vertex $v$ in the child cluster of $\sigma$ containing $t$, append that edge to the path.
        \item Otherwise, declare the greedy path does not exist.
    \end{itemize}
    For a vertex $t \in V$, we denote the set of vertices $s \in V$ for which there exists a $(s, t)$-greedy path by $\calG_t$. For a cluster $\sigma$, we write $\calG_{t, \sigma} := \calG_t \cap V_{\sigma}$. 
\end{definition}
Since each edge in the greedy path increases the length of the longest common prefix of the current path endpoint and the destination vertex by at least one, any greedy path has length at most $d$. The uniqueness of the path, if it exists, holds as the edge set between the two child clusters of $\sigma$ containing $t$ and the current path endpoint respectively being a matching.

We now show in \Cref{lem:greedy-path-reach-many} the claim that in a non-degenerate $(k, d, \tau)$-semi-hypercube, at least a $\frac{8}{9}$-fraction of vertices have a greedy path to any fixed destination vertex. The full Lemma is more general, and will come useful in the next subsection, where we consider routing in a general $(k, d, \tau)$-semi-hypercube (that can contain critical clusters). It shows that even then, a large fraction of the vertices contained in the \textit{home cluster} of the destination vertex have greedy paths to it. Note that in a non-critical semi-hypercube, the home cluster of every vertex is the root cluster.

\begin{definition}[Home Cluster]
    The \textit{home cluster} $\home(v)$ of a vertex $v \in V$ in a $(k, d, \tau)$-semi-hypercube is the greatest-depth critical ancestor cluster of $v$, or the root cluster if no ancestor of $v$ is critical. 
\end{definition}

\begin{lemma} \label{lem:greedy-path-reach-many}
    Let $G = (V, E)$ be a $(k, d, \tau)$-semi-hypercube with $\frac{1}{\tau}, k \geq 27d$. For a vertex $v$, let $\sigma = \home(v)$. Then, for every non-critical ancestor cluster $\sigma'$ of $v$ that is either a descendant of $\sigma$ or $\sigma$ itself,
    \begin{equation*}
        |\mathcal{G}_{v, \sigma'}| \geq \left(1 - \frac{1}{9d} (d - |\sigma'|)\right) |V_{\sigma'}|,
    \end{equation*}
    and even if the cluster $\sigma$ is critical, $|\mathcal{G}_{v, \sigma}| \geq \frac{5}{9} |V_\sigma|$.
\end{lemma}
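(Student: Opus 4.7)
The plan is to induct on $d - |\sigma'|$, moving from leaves upward toward the home cluster. The base case $|\sigma'| = d$ forces $\sigma' = v$, and the empty path certifies $v \in \mathcal{G}_v$, so $|\mathcal{G}_{v,\sigma'}| = |V_{\sigma'}| = 1$, matching the claim since $d - |\sigma'| = 0$. For the inductive step, let $\sigma' + t$ denote the child of $\sigma'$ containing $v$. Because $\sigma'$ is either $\home(v)$ itself or a strict descendant of $\home(v)$ along the path to $v$, the cluster $\sigma' + t$ lies strictly below the deepest critical ancestor of $v$ and is therefore non-critical. Writing $\epsilon := (d - |\sigma'| - 1)/(9d)$, the inductive hypothesis yields $|\mathcal{G}_{v,\sigma'+t}| \geq (1 - \epsilon)|V_{\sigma'+t}|$.

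The key step is a matching-propagation argument for the siblings of $\sigma'+t$. Since the edge set $E[V_{\sigma'+i}, V_{\sigma'+t}]$ is a sub-matching of a perfect matching in the original $(k, d)$-semi-hypercube, a vertex $u \in V_{\sigma'+i}$ (for $i \neq t$) has a greedy $(u, v)$-path if and only if its unique original matching partner still lies in the current $V_{\sigma'+t}$ \emph{and} belongs to $\mathcal{G}_{v,\sigma'+t}$. Since the obstruction is exactly that the partner falls into $V_{\sigma'+t}^{\mathrm{orig}} \setminus \mathcal{G}_{v,\sigma'+t}$, a set of size $n_0 - |\mathcal{G}_{v,\sigma'+t}|$ (where $n_0 := k^{d - |\sigma'| - 1}$), and the matching is injective, at most that many vertices of $V_{\sigma'+i}$ can be excluded. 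This yields
\[
|\mathcal{G}_{v,\sigma'+i}| \;\geq\; |V_{\sigma'+i}| + |\mathcal{G}_{v,\sigma'+t}| - n_0.
\]
For a non-critical sibling this recovers essentially all of $|V_{\sigma'+i}|$, while for empty siblings and the at-most-one critical sibling we retreat to the trivial bound $|\mathcal{G}_{v,\sigma'+i}| \geq 0$.

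Summing these lower bounds and comparing to $|V_{\sigma'}| \geq (1-\tau)kn_0$ in the non-critical case of $\sigma'$ yields the inductive claim after a routine calculation: the per-sibling slack of $\tau n_0 + \epsilon |V_{\sigma'+t}|$, summed over up to $k$ siblings, plus an extra $n_0$ of slack for a potentially critical sibling, all fit inside the $(1/(9d))|V_{\sigma'}|$ budget thanks to $k \geq 27d$ and $\tau \leq 1/(27d)$. The main obstacle is the final claim for a critical $\sigma = \home(v)$, where the previous argument breaks because $|V_\sigma|$ need not be close to $kn_0$. Here the inductive bound still applies to the non-critical child $\sigma + t$, guaranteeing at least $(1 - (d - 1)/(9d))|V_{\sigma+t}| \geq (8/9)|V_{\sigma+t}|$ greedy-reachable vertices in $\sigma + t$ alone, and the same matching argument extracts further contributions from each non-empty sibling---including, via $|\mathcal{G}_{v,\sigma+t'}| \geq |V_{\sigma+t'}| + |\mathcal{G}_{v,\sigma+t}| - n_0$, from a critical sibling $\sigma + t'$ whenever the right-hand side is positive. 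A short case analysis on the structure of $\sigma$'s children (no other non-empty child, only non-critical siblings of $\sigma + t$, or a critical sibling present) shows that the aggregated greedy-reachable count always reaches at least $\tfrac{5}{9}|V_\sigma|$, driven in the delicate configurations by the fact that each non-empty sibling adds to both numerator and denominator at nearly the same rate once $\epsilon$ and $\tau$ are small enough.
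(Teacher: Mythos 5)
Your proposal is correct, and its skeleton — leaf-to-root induction, plus the observation that a vertex $u$ in a sibling child reaches $v$ greedily iff its unique original matching partner survives in the $v$-containing child and lies in $\mathcal{G}_{v,\sigma'+t}$, so at most $n_0 - |\mathcal{G}_{v,\sigma'+t}|$ vertices per sibling are excluded — is essentially the paper's argument (the paper phrases the same count as $|E[V_{\sigma'+i'},V_{\sigma'+i}]| - (|V_{\sigma'+i'}| - |\mathcal{G}_{v,\sigma'+i'}|)$, which your inclusion–exclusion form slightly sharpens), and your slack accounting for the noncritical case does fit in the $\frac{1}{9d}$ budget with the stated constants, exactly as in the paper. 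Where you genuinely diverge is the final claim for a critical home cluster $\sigma$: the paper splits on whether $\sigma$ has two or at least three nonempty children and, in the two-children case, invokes the second property of the $(k,d,\tau)$-semi-hypercube definition ($|E[V_{\sigma+t'},V_\sigma\setminus V_{\sigma+t'}]|\geq\frac{9}{10}k'|V_{\sigma+t'}|$) to get reachability from the critical sibling, yielding $\frac{7}{9}|V_\sigma|$; you instead apply the same matching bound $|\mathcal{G}_{v,\sigma+t'}|\geq|V_{\sigma+t'}|+|\mathcal{G}_{v,\sigma+t}|-n_0$ to the critical sibling (taking $0$ when it is negative) and close by a size/ratio case analysis. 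This does work: in the worst configuration (sibling of size exactly $n_0-|\mathcal{G}_{v,\sigma+t}|\leq(\epsilon+\tau)n_0$ contributing nothing, $|V_{\sigma+t}|=(1-\tau)n_0$, $\epsilon<\frac{1}{9}$, $\tau\leq\frac{1}{27}$) the ratio is $\frac{(1-\epsilon)(1-\tau)}{1+\epsilon(1-\tau)}\approx 0.77\geq\frac{5}{9}$, and adding noncritical siblings (each with ratio at least $1-\frac{\epsilon+\tau}{1-\tau}\approx 0.85$) only helps by the mediant inequality. So your route is marginally more elementary — it never uses the average-degree property, which the paper reserves for routing \emph{out of} the home cluster (\Cref{lem:out-edge-count}) — at the cost of a slightly weaker constant than the paper's $\frac{7}{9}$ in the two-children subcase, still comfortably above the claimed $\frac{5}{9}$. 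The only thing I would ask you to tighten is the last case analysis, which you state as "short" without writing it out; as verified above, it does go through.
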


\begin{proof}
    We prove the claim through induction. The base case $\sigma' = v$ holds trivially as any vertex has a greedy path to itself.
    
    Now, fix an ancestor cluster $\sigma'$ of $v$ that is either a descendant of $\sigma$ or is $\sigma$ itself, and suppose the claim holds for its child cluster $V_{\sigma' + i'}$ containing $v$. Let $V_{\sigma' + i}$ be an arbitrary noncritical child cluster of $V_{\sigma'}$ other than the child containing $v$. Consider some vertex $s$ in that child cluster. A greedy path from $s$ to $v$ exists if and only if
    \begin{enumerate}
        \item There is an edge from $s$ to some vertex $s'$ in the child cluster $V_{\sigma' + i'}$ containing $v$, and
        \item There is a greedy path from that $s'$ to $v$, i.e. $s' \in \calG_{v}$.
    \end{enumerate}
    Since the edge set between the two child clusters forms a matching, the vertex $s'$ if it exists is unique and not shared with any other vertex in $V_{\sigma' + i}$. Thus, the number of vertices in $V_{\sigma' + i}$ with a greedy path to $v$ is at least the number of edges between the two clusters minus the number of vertices in $V_{\sigma' + i'}$ that \emph{do not} have a greedy path to $v$:
    \begin{align*}
        |\mathcal{G}_{v, \sigma' + i}| &\geq |E[V_{\sigma' + i'}, V_{\sigma' + i}]| - \left(|V_{\sigma' + i'}| - |\mathcal{G}_{v, \sigma' + i'}|\right)\\
                                                        &\geq \left((1 - 2\tau) k^{d - |\sigma'| - 1}\right) - \left(\frac{1}{9d} (d - |\sigma'| - 1) k^{d - |\sigma'| - 1}\right)\\
                                                        &\geq \left(1 - \frac{1}{9d} (d - |\sigma'|) + \frac{1}{27d}\right) k^{d - |\sigma'| - 1}\\
                                                        &\geq \left(1 - \frac{1}{9d} (d - |\sigma'|) + \frac{1}{27d}\right) |V_{\sigma' + i}|.
    \end{align*}
    If $\sigma'$ has no critical child cluster, we are done. Otherwise, let $\sigma' + t_\sigma$ be that critical child cluster. Then,
    \begin{equation*}
        |\mathcal{G}_{v, \sigma'}| \geq \left(1 - \frac{1}{9d} (d - |\sigma'|) + \frac{1}{27d}\right) |V_{\sigma'}| - |V_{\sigma' + t_\sigma}| = \left(1 - \frac{1}{9d} (d - |\sigma'|)\right) |V_{\sigma'}| - \left(|V_{\sigma' + t_\sigma}| - \frac{1}{27d} |V_{\sigma'}|\right).
    \end{equation*}
    If the cluster $\sigma'$ in consideration is noncritical, we are then done, as a critical child cluster of a noncritical cluster makes up at most a $\frac{1}{k} \leq \frac{1}{27d}$ of the cluster's vertices: $\sigma'$ has size at least $(1 - \tau) k^{d - |\sigma'|}$ as a noncritical cluster, and $\sigma' + t_\sigma$ has size at most $(1 - \tau) k^{d - |\sigma' + t_\sigma|} = \frac{(1 - \tau)}{k} \cdot k^{d - |\sigma'|}$ as a critical cluster.
    
    It remains to consider the case where $\sigma'$ is critical, which can only occur when $\sigma' = \sigma$ by definition. We branch on whether the cluster $\sigma$ has two or more child clusters. If it has two, the number of edges between the child $V_{\sigma + i'}$ containing $v$ and the critical child $V_{\sigma + t_\sigma}$ (which is a distinct child by definition of $\sigma = \sigma'$) must be at least $\frac{9}{10} \cdot |V_{\sigma + t_\sigma}|$ by the definition of a $(k, d, \tau)$-semi-hypercube, thus
    \begin{equation*}
        |\mathcal{G}_{v, \sigma + t_\sigma}| \geq \frac{9}{10} |V_{\sigma + t_\sigma}| - \left(|V_{\sigma + i'}| - |\mathcal{G}_{v, \sigma + i'}|\right) \geq \frac{9}{10} |V_{\sigma + t_\sigma}| - \frac{1}{9} |V_{\sigma + i'}|,
    \end{equation*}
    and, as desired, $ |\mathcal{G}_{v, \sigma}|   \geq \frac{7}{9} |V_{\sigma + i'}| + \frac{9}{10} |V_{\sigma + t_\sigma}| \geq \frac{7}{9} |V_{\sigma}|$. Finally, suppose $\sigma$ has at least three nonempty child clusters. Then, $|V_{\sigma + t_\sigma}| \leq \frac{1}{3} |V_{\sigma}|$ as the other child clusters of $\sigma'$ are not critical but $\sigma + t_\sigma$ is, and 
    \begin{equation*}
        |\mathcal{G}_{v, \sigma}|   \geq \left(1 - \frac{1}{9}\right) \left(1 - \frac{1}{3}\right) |V_{\sigma}|
                                    = \frac{16}{27} |V_{\sigma}| \geq \frac{5}{9} |V_{\sigma}|.
    \end{equation*}
\end{proof}

Now that we have that at least a $\frac{7}{9}$-fraction of vertices are valid midpoints, it remains to show that a path set that for every vertex $t$ contains a uniformly random greedy path to $t$ has low expected congestion. The congestion of the final oblivious routing on any unit demand is then at most $\frac{9}{7} \cdot kd$ times this congestion, as every vertex in a $(k, d, \tau)$-semi-hypercube has degree at most $kd$.

The following Lemma shows this as a special case, with the full Lemma again being useful in the next subsection. The proof of the Lemma is like the analysis of the congestion of Valiant routing on a hypercube.

\begin{lemma}\label{lem:home-routing}
    Let $G = (V, E)$ be a $(k, d, \tau)$-hypercube for $\frac{1}{\tau}, k \geq 27d$, $\sigma$ a cluster and $H_\sigma := \{v \in V_{\sigma} : \home(v) = \sigma\}$ the set of its residents. Consider a path set that contains for each vertex $t \in H_\sigma$ a greedy path $\greedy(s, t)$ from an uniformly random source $s \in \calG_{t, \sigma}$ in the cluster. Then, for any edge $e$,
    \begin{itemize}
        \item If the edge is not contained in the cluster $\sigma$, or if the edge is contained in a critical descendant cluster of $\sigma$, no sampled path contains the edge.
        \item Otherwise, the expected number of paths containing the edge is
        \begin{itemize}
            \item $\bigO(\frac{1}{k})$ if the cluster $\sigma$ is not critical, and
            \item $\bigO(1)$ otherwise.
        \end{itemize}
    \end{itemize}
\end{lemma}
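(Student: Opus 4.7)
The plan is to adapt the Valiant-style routing analysis on hypercubes, using the greedy reachability lower bound of \Cref{lem:greedy-path-reach-many} to control the denominators $|\calG_{t,\sigma}|$. First, I will argue that any edge used by a sampled path must lie in $V_\sigma$ and not in any strict critical descendant of $\sigma$: since both $s, t \in V_\sigma$ and each greedy step walks a matching edge between two child clusters of an ancestor cluster of $t$, the entire path lies in $V_\sigma$, and every edge used has its LCP-cluster (smallest containing cluster) as an ancestor of $t$ lying in $V_\sigma$. The condition $\home(t) = \sigma$ then forbids any strict descendant of $\sigma$ that is an ancestor of $t$ from being critical, ruling out edges in strict critical descendants.

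For the congestion bound, fix an edge $e = (u, v) \in E[V_\sigma]$ with LCP-cluster $\sigma^*$ and $u \in V_{\sigma^*+j}$, $v \in V_{\sigma^*+j'}$, not in any strict critical descendant of $\sigma$. The key observation is that outside $V_{\sigma^*}$ the greedy rule produces identical steps for any destination in $V_{\sigma^*}$, since the rule depends only on the LCP of the current endpoint with the destination, and this LCP equals the LCP with $\sigma^*$ for any current endpoint outside $V_{\sigma^*}$. Hence for each source $s$, the greedy walk enters $V_{\sigma^*}$ at a well-defined entry vertex $w^{\sigma^*}(s) \in V_{\sigma^*}$ independent of the chosen destination in $V_{\sigma^*}$ (with $w^{\sigma^*}(s) = s$ if $s \in V_{\sigma^*}$). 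A short case analysis on which child of $V_{\sigma^*}$ the entry lands in shows that for $t \in V_{\sigma^*+j'}$, the greedy path from $s$ to $t$ uses $e$ if and only if $w^{\sigma^*}(s) = u$; denote this set of sources by $T_u$. Moreover, for $s \in T_u$ the walk continues from $v$ to $t$ if and only if $v \in \calG_t$, so the number of $s \in \calG_{t, \sigma}$ routing $t$ through $e$ equals $|T_u \cap V_\sigma| \cdot \mathbbm{1}[v \in \calG_t]$.

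Summing over $t$ and applying $|\calG_{t,\sigma}| \geq \tfrac{5}{9}|V_\sigma|$ from \Cref{lem:greedy-path-reach-many}, the expected congestion on $e$ is at most
\[
\frac{|T_u \cap V_\sigma| \cdot |V_{\sigma^*+j'}| + |T_v \cap V_\sigma| \cdot |V_{\sigma^*+j}|}{(5/9)\, |V_\sigma|}.
\]
I bound $|T_u \cap V_\sigma|$ by counting reverse greedy walks from $u$: edge levels along the forward walk are strictly increasing and lie in $\{|\sigma|, \ldots, |\sigma^*|-1\}$, so a reverse walk picks a subset of these levels and at each picked level chooses one of at most $k-1$ sibling clusters to come from, yielding $|T_u \cap V_\sigma| \leq k^{|\sigma^*|-|\sigma|}$. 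Combined with $|V_{\sigma^*+j'}| \leq k^{d-|\sigma^*|-1}$, the numerator is at most $2 k^{d-|\sigma|-1}$.

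Finally, I split on the structure of $\sigma$ to lower-bound $|V_\sigma|$. If $\sigma$ is non-critical, $|V_\sigma| \geq (1-\tau) k^{d-|\sigma|}$, giving expected congestion $O(1/k)$. If $\sigma$ is critical and has at least one nonempty non-critical child, that child alone contributes $|V_\sigma| \geq (1-\tau) k^{d-|\sigma|-1}$, yielding $O(1)$. If $\sigma$ is critical with only a critical child as its nonempty child, then $V_\sigma$ sits entirely inside a strict critical descendant of $\sigma$, so $H_\sigma = \emptyset$ and no paths are sampled at all, giving congestion $0$. The main subtlety I anticipate is the second step: rigorously verifying that outside $V_{\sigma^*}$ the greedy walk is destination-oblivious for all destinations in $V_{\sigma^*}$, so that $T_u$ coincides exactly with $(w^{\sigma^*})^{-1}(u)$ and the path to $t$ visits $u$ precisely when $w^{\sigma^*}(s) = u$.
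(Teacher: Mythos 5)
Your proposal is correct and follows essentially the same route as the paper's proof: restrict attention to destinations $t$ in the two child clusters of the edge's LCP-cluster (at most $2k^{d-|\sigma^*|-1}$ of them), bound the number of sources routing any fixed $t$ through the edge by $k^{|\sigma^*|-|\sigma|}$, divide by $\frac{5}{9}|V_\sigma|$ via \Cref{lem:greedy-path-reach-many}, and then lower-bound $|V_\sigma|$ by $(1-\tau)k^{d-|\sigma|}$ or $(1-\tau)k^{d-|\sigma|-1}$ according to whether $\sigma$ is critical. The only difference is that you spell out (correctly, via destination-obliviousness outside $V_{\sigma^*}$ and the reverse-walk count) the source-counting bound that the paper asserts without proof, and you explicitly dispose of the degenerate critical case where $H_\sigma=\emptyset$.
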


\begin{proof}
    A greedy path from a vertex $s \in V_\sigma$ to a vertex $t \in V_{\sigma}$ is contained in the cluster $V_\sigma$, thus cannot contain edges not contained in the cluster $\sigma$. If the edge $e$ is contained in some critical descendant cluster $\sigma'$ of $\sigma$, then note that $t$ cannot be contained in the cluster $\sigma'$ as otherwise the cluster $\sigma$ could not be $t$'s home cluster, and thus the lowest common ancestor cluster of $t$ and either of the edge $e$'s endpoints cannot be the cluster $\sigma'$, thus the edge cannot be on a greedy path to a vertex in $H_\sigma$.
    
    Now, fix an edge $e$ between two child clusters $V_{\sigma', i}$ and $V_{\sigma', i'}$ of some cluster $\sigma'$ that is either a descendant cluster of $\sigma$ or the cluster $\sigma$ itself. Then,
    \begin{itemize}
        \item The edge $e$ can only be contained in paths $\greedy(s, t)$ for $t$ in one of the child clusters $V_{\sigma', i}$ or $V_{\sigma', i'}$, which combined have size at most $2k^{d - |\sigma'| - 1}$.
        \item For any such $t \in V_{\sigma', i} \cup V_{\sigma', i'}$, the number of $s \in \calG_{t, \sigma}$ for which $e \in \greedy(s, t)$ is at most $k^{|\sigma'| - |\sigma|}$. By \Cref{lem:greedy-path-reach-many}, $|\calG_{t, \sigma}| \geq \frac{5}{9} |V_{\sigma}|$ for any $t \in H_\sigma$, thus the probability that some $s$ for which $e \in \greedy(s, t)$ is sampled is at most $\frac{5}{9} \cdot k^{|\sigma'| - |\sigma|}\ /\ |V_\sigma|$.
    \end{itemize}
    If $\sigma$ is not critical, then $|V_{\sigma}| \geq (1 - \tau) k^{d - |\sigma|} = \Theta(k^{d - |\sigma|})$, thus the expected number of paths using the edge is at most
    \begin{equation*}
        \frac{10}{9} \cdot k^{d - |\sigma'| - 1} \cdot k^{|\sigma'| - |\sigma|}\ /\ |V_\sigma| = \Theta(k^{d - |\sigma'| - 1}) \cdot \Theta(k^{|\sigma'| - |\sigma|})\ /\ \Theta(k^{d - |\sigma|}) = \Theta(1 / k).   
    \end{equation*}
    Otherwise, $|V_{\sigma}| \geq (1 - \tau) k^{d - |\sigma| - 1} = \frac{1}{k} \Theta\left(k^{d - |\sigma|}\right)$, and the expectation is $\bigO(1)$.
\end{proof}

\Cref{lem:obliv-routing-noncritical} now follows as a corollary of \Cref{lem:greedy-path-reach-many} and \Cref{lem:home-routing}:

\oblivroutingnoncritical*

\begin{proof}
    We define the oblivious routing $R_{s, t}$ as follows: for $s, t \in H_\sigma$ and $v \in \calG_{s, \sigma} \cap \calG_{t, \sigma}$, let $P_{s, v, t}$ be the path formed by concatenating $\reverse(\greedy(v, s))$ with $\greedy(v, t)$. The oblivious routing $R_{s, t}$ is a uniform distribution over such paths: $R_{s, t}(P_{s, v, t}) := 1 / |\calG_{s, \sigma} \cap \calG_{t, \sigma}|$ for all $s, t \in H_0$.
    
    Note that by \Cref{lem:greedy-path-reach-many}, $|\calG_{s, \sigma} \cap \calG_{t, \sigma}| \geq \frac{7}{9} |V_{\sigma}|$. Thus, the congestion this oblivious routing places on any edge is at most the congestion in a path set that contains each greedy path $\greedy(s, t)$ with weight
    \begin{equation*}
        \frac{\left(\sum_{s'} D(s', t) + D(t, s')\right)}{\frac{7}{9} |V|} \leq \frac{9kd}{7} \cdot \frac{1}{|V|} \leq \frac{9kd}{7} \cdot \frac{1}{|\calG_t|}.
    \end{equation*}
    By \Cref{lem:home-routing}, the congestion of this path set is $\bigO(d)$, as since the semi-hypercube is noncritical, the home cluster of every vertex is the root cluster, and the root cluster is not critical.

    Finally, it remains to analyze the length and sampling complexity. Since any greedy path has length at most $d$, the lengths of the paths supported by the oblivious routing are at most $2d$. To sample a path in expected time $\bigO(d)$, repeatedly sample a random vertex $v$, check if $v \in \calG_{s, \sigma} \cap \calG_{t, \sigma}$ by attempting to construct the greedy paths in $\bigO(d)$ time each, and resample if unsuccessful. The expected number of samples required is at most $\frac{9}{7} = \bigO(1)$. For constructing a greedy path in time $\bigO(d)$, it suffices to store the for each vertex an array of size $d \cdot k$ where the edge to sibling $i$ at depth $j$ appears at index $(i - 1) + k \cdot j$ and can thus be looked up in time $\bigO(1)$.
\end{proof}

\subsection{General Case}

We now proceed to proving the main result of the section.

\oblivroutinggeneral

We assume without loss of generality that the root cluster is not degenerate, as if it is, we may consider routing on its only child cluster, which up to vertex labeling is a $(k, d - 1, \tau)$-semi-hypercube (or in general if that cluster too is degenerate, on its \textit{representative cluster}, defined as the first nondegenerate descendant).

\begin{definition}[Representative Cluster]
In a $(k, d, \tau)$-semi-hypercube, the representative cluster $\rep(\sigma)$ of a degenerate cluster $\sigma$ is the least-depth nondegenerate descendant of the cluster, and the representative of a non-degenerate cluster $\sigma$ is the cluster $\sigma$ itself. In other words, the representative cluster $\rep(\sigma)$ can be defined recursively as follows:
\begin{itemize}
    \item If $\sigma$ is not degenerate, $\rep(\sigma) = \sigma$.
    \item Otherwise, let $\sigma + t$ be the only nonempty child of $\sigma$. Then, $\rep(\sigma) = \rep(\sigma + t)$.
\end{itemize}
\end{definition}

From the previous subsection, we know that oblivious routing between two vertices sharing the same home cluster can be done with low congestion over short paths. For routing between the entire vertex set, we first want to route from both endpoints to vertices for which the root cluster is the home cluster. As a measure of progress towards routing to this easily-manageable part of the graph, we define \textit{isolation}.

\begin{definition}[Isolation]
    The \textit{isolation} $\iso(v)$ of a vertex $v \in V$ in a $(k, d, \tau)$-semi-hypercube is
    \begin{equation*}
        \iso(v) := \sum_{\sigma \in \ancs(v)} (d - |\sigma|) \cdot I[\text{the cluster $\sigma$ is critical}].
    \end{equation*}
    For a cluster $\sigma$, the isolation of the cluster is the minimum isolation of a vertex in the cluster, i.e.
    \begin{equation*}
        \iso(\sigma) := \min_{v \in V_\sigma} \iso(v).
    \end{equation*}
\end{definition}

Recall that the home cluster $\sigma$ of a vertex $s$ is the critical ancestor cluster of $s$ of greatest depth (i.e. closest to $s$ in the cluster tree), thus $\iso(s) = \iso(\sigma)$. Additionally, note that the root cluster is always the minimum-isolation cluster in the graph. Combining these two, one can observe that repeatedly routing to a vertex of strictly lower isolation results in a path to a vertex with the the root cluster as the home cluster. As the isolation of a vertex can never exceed $\sum_{j = 0}^{d - 1} d - j$, doing this $\bigO(d^2)$ times suffices.

Since $\iso(s) = \iso(\sigma)$ for the home cluster $\sigma$ of $s$, any path to a vertex of lower isolation from $s$ must leave its home cluster. Thus, our strategy for routing from $s$ to a vertex of lower isolation is rather simple: sample a vertex $u$ in the home cluster of $s$ and a nonempty sibling of the home cluster of $s$, and check
\begin{itemize}
    \item if there is a greedy path from $u$ to $s$ (i.e. $u \in \calG_s$ holds),
    \item if there is an edge $(u, v)$ from $u$ to the sampled sibling cluster, and
    \item if the endpoint $v$ of the edge has strictly lower isolation than the current path endpoint $s$.
\end{itemize}
If the answer to each check is yes, we take the reverse of the greedy path from $u$ to $s$ and then the edge $(u, v)$, ending at the vertex $v$ of strictly lower isolation than $s$. Otherwise, we resample until a success. This sampling ensures that the first found valid edge $(u, v)$ is a uniformly random valid edge.

\begin{algorithm}[h]
    \caption{Oblivious Path Sampling} \label{alg:obliv-routing-general}
    \begin{algorithmic}[1]
        \Function{Greedy}{$s, t$}
            \State Let $P$ be an empty path
            \While{$s \neq t$}
                \State Let $\sigma = \mathrm{lcp}(s, t)$
                \State Let $i$ be such that $t \in V_{\sigma + i}$
                \If{there is an edge $(s, s')$ from $s$ to a vertex $s' \in V_{\sigma + i}$}
                    \State $P \gets \Call{concat}{P, (s, s')}$
                    \State $s \gets s'$
                \Else\ \Return $\bot$ \EndIf
            \EndWhile
            \State \Return $P$
        \EndFunction
        
        \State
        
        \Function{Escape}{$s$}
            \If{$\home(s) = \rep(\rt)$}
                \State \Return $(s)$
            \EndIf
            \State Let $\sigma = \home(s)$ and $\sigma'$ be the nondegenerate ancestor of $\sigma$ of greatest depth
            \State Let $S = \{i \in [k] : \text{the child cluster $\sigma' + i$ of $\sigma'$ is nonempty and noncritical}\}$
            \While{True}
                \State Sample a uniformly random pair $(u, i) \in V_{\sigma} \times S$
                \State Let $v$ be the unique vertex in $V_{\sigma' + i}$ such that $(u, v) \in E$ (or $\bot$ if none exists)
                \State Let $P = \Call{reverse}{\Call{Greedy}{u, s}}$ (or $\bot$ if no greedy path exists)
                \If{$v, P \neq \bot$ and $\iso(v) < \iso(\sigma)$}  \label{line:edge-condition-line}
                    \State \Return $\Call{concat}{P, (u, v), \Call{Escape}{v}}$
                \EndIf
            \EndWhile
        \EndFunction

        \State

        \Function{SamplePath}{$s$, $t$}
            \State Let $P_0 = \Call{Escape}{s}$
            \State Let $P_3 = \Call{reverse}{\Call{Escape}{t}}$
            \State Let $s'$ be the endpoint of $P_0$ and $t'$ the startpoint of $P_3$
            \While{True}
                \State Let $v$ be a uniformly random vertex in $V$
                \State Let $P_1 = \Call{reverse}{\Call{Greedy}{v, s'}}$ (or $\bot$ if none exists)
                \State Let $P_2 = \Call{Greedy}{u, t'}$ (or $\bot$ if none exists)
                \If{$P_1 \neq \bot$ and $P_2 \neq \bot$}
                    \State \Return $\Call{concat}{P_0, P_1, P_2, P_3}$
                \EndIf
            \EndWhile
        \EndFunction
    \end{algorithmic}
\end{algorithm}

The following Lemma guarantees that the probability the answer to each check is yes is at least $\frac{1}{9}$. In the analysis, in addition to isolation, we care about the \textit{count} of critical ancestors. We denote by $\cou(v)$ the critical ancestor count of a vertex $v$, i.e. we let
\begin{equation*}
    \cou(v) := |\{\sigma \in \ancs(v) : \text{the cluster $\sigma$ is critical}\}|.
\end{equation*}

\begin{lemma}\label{lem:out-edge-count}
    Suppose $k \geq 3 \cdot 2^{d} / (1 - \tau)$. Let $t$ be a vertex with a non-root home cluster, $\sigma = \home(t)$ its home cluster, $\sigma'$ the highest-depth non-degenerate ancestor of $\sigma$, and $k'$ the number of noncritical nonempty child clusters of $\sigma'$. Then, the number of edges $(u, v) \in E$ such that
    \begin{itemize}
        \item $u \in \calG_{t, \sigma}$ and $v \in V_{\sigma'} \setminus V_{\sigma}$, and
        \item $\iso(v) < \iso(\sigma)$,
    \end{itemize}
    is at least $\frac{1}{9} \cdot k' \cdot |V_{\sigma}|$.
\end{lemma}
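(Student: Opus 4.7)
The plan is to start from property~(2) of the semi-hypercube definition, which will give the bulk of the edges, and then subtract the edges that fail each of the two extra conditions on $u$ and $v$. First I would establish the structural setup: since $\sigma'$ is the deepest nondegenerate strict ancestor of $\sigma$, every cluster strictly between $\sigma'$ and $\sigma$ is degenerate and hence shares the same vertex set as $\sigma$. So if $\sigma'+j$ denotes the child of $\sigma'$ lying on the path to $\sigma$, then $V_{\sigma'+j} = V_\sigma$, and $|V_{\sigma'+j}| = |V_\sigma| < (1-\tau)k^{d-|\sigma|} \le (1-\tau)k^{d-|\sigma'+j|}$, making $\sigma'+j$ critical. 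By property~(1), $\sigma'+j$ is in fact the unique critical child of $\sigma'$, so property~(2) applies and delivers $|E[V_\sigma, V_{\sigma'}\setminus V_\sigma]| \ge \tfrac{9}{10}\, k'\, |V_\sigma|$.

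Next, I would subtract the edges whose $V_\sigma$-endpoint fails $u \in \calG_{t,\sigma}$. Applying \Cref{lem:greedy-path-reach-many} to the critical home cluster $\sigma$ yields $|\calG_{t,\sigma}| \ge \tfrac{5}{9}|V_\sigma|$; since each vertex of $V_\sigma$ has at most one matching edge into each noncritical nonempty sibling of $\sigma'+j$, the edges with a non-greedy source total at most $\tfrac{4}{9}\, k'\, |V_\sigma|$. It remains to bound the number of edges whose sibling endpoint $v$ lies in $B := \{ v \in V_{\sigma'}\setminus V_\sigma : \iso(v) \ge \iso(\sigma) \}$. Since matching edges between siblings have multiplicity one, this number is at most $|B|$, so to conclude it suffices to show $|B| \le \tfrac{31}{90}\, k'\, |V_\sigma|$.

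For the bound on $|B|$, note that for each such $v$ lying in some sibling $V_{\sigma'+i}$ ($i \ne j$), the cluster $\sigma^* := \home(v)$ is a critical strict descendant of the noncritical $\sigma'+i$, and unpacking the definition of $\iso$ shows that $\iso(\sigma^*) \ge \iso(\sigma)$ forces both $|\sigma^*| \ge |\sigma|$ and a sufficient amount of ``critical mass'' on the path from $\sigma'+i$ down to $\sigma^*$. Using property~(1) iteratively, the critical cluster with any given critical-ancestry pattern along that path is uniquely determined, so the number of candidate $\sigma^*$ at each depth is small. Combining these counts with the size bound $|V_{\sigma^*}| < (1-\tau)k^{d-|\sigma^*|}$ and the fact that $|V_\sigma| \ge (1-\tau)\, k^{d-|\sigma|-1}$---which holds because the child of $\sigma$ on the path to $t$ is noncritical by definition of $\sigma = \home(t)$, with the edge case $|\sigma| = d-1$ being trivial since sibling subtrees then contain only leaves (which cannot be critical) so $B = \emptyset$---gives the desired bound. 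Summing the three contributions produces at least $\tfrac{9}{10}\, k'\, |V_\sigma| - \tfrac{4}{9}\, k'\, |V_\sigma| - \tfrac{31}{90}\, k'\, |V_\sigma| = \tfrac{1}{9}\, k'\, |V_\sigma|$ edges satisfying all three conditions. The main obstacle will be this last step: showing that across all configurations of $|\sigma|$ and $|\sigma'|$ and of the critical-ancestry patterns, the combinatorial bound on $|B|$ stays below the tight constant $\tfrac{31}{90}\, k'\, |V_\sigma|$ throughout the parameter regime permitted by the hypothesis $k \ge 3\cdot 2^d/(1-\tau)$.
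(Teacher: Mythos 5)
Your proposal follows essentially the same route as the paper's proof: use property~(2) of \Cref{def:semi-hyper-cube} to get $\frac{9}{10}k'|V_\sigma|$ edges out of $V_\sigma$, subtract at most $\frac{4}{9}k'|V_\sigma|$ edges whose $V_\sigma$-endpoint is outside $\calG_{t,\sigma}$ (via the critical-cluster clause of \Cref{lem:greedy-path-reach-many} and the matching property), and subtract the vertices of $V_{\sigma'}\setminus V_\sigma$ with isolation at least $\iso(\sigma)$. Your steps 1 and 2 are correct, and your handling of the degenerate chain between $\sigma'$ and $\sigma$ (showing $V_{\sigma'+j}=V_\sigma$ and that $\sigma'+j$ is the unique critical child, so property~(2) really applies with the lemma's $k'$) is in fact spelled out more carefully than in the paper.

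The step you flag as the main obstacle is not actually an obstacle, and it closes along exactly the lines you sketch, uniformly in $|\sigma|$, $|\sigma'|$ and the criticality pattern. The paper counts bad vertices rather than candidate home clusters: with $p:=|\sigma|-|\sigma'|$, a vertex $v$ in a sibling subtree with $\iso(v)\ge\iso(\sigma)$ must have at least $p+1$ critical ancestors strictly below $\sigma'$ (since the sibling child itself is noncritical, so its critical ancestors sit at depths $\ge|\sigma'|+2$ and carry smaller weights than the chain of $p$ critical clusters above $t$). Hence $v$ corresponds to a walk down from the sibling that takes a critical child at least $p+1$ times; choosing the depths of the first $p+1$ critical steps costs at most $\binom{d-|\sigma'|-1}{p+1}\le 2^d$, the critical child is unique at those depths by property~(1), and there are at most $k$ choices elsewhere, giving at most $2^d\,k^{\,d-|\sigma'|-p-2}$ bad vertices per sibling. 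Your cluster-level variant (summing $|V_{\sigma^*}|<k^{d-|\sigma^*|}$ over candidate home clusters at each depth) gives the same quantity after a hockey-stick summation over $|\sigma^*|$. Combined with $|V_\sigma|\ge(1-\tau)k^{d-|\sigma|-1}$ (which, as you note, holds because the child of $\sigma$ containing $t$ is nonempty and noncritical), the bad set has size at most $\frac{2^d}{(1-\tau)k}\cdot k'\cdot|V_\sigma|\le\frac{1}{3}k'|V_\sigma|\le\frac{31}{90}k'|V_\sigma|$ under the hypothesis $k\ge 3\cdot 2^d/(1-\tau)$, which is precisely where that hypothesis is used; the paper then concludes with $\frac{9}{10}-\frac{4}{9}-\frac{1}{3}\ge\frac{1}{9}$, matching your arithmetic.
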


\begin{proof}
    By \Cref{lem:greedy-path-reach-many}, $|\mathcal{G}_{s, \sigma}| \geq \frac{5}{9} |V_{\sigma}|$. Since $\sigma$ is critical, $|E[V_\sigma, V_{\sigma'} \setminus V_\sigma]| \geq \frac{9}{10} \cdot k' \cdot |V_{\sigma}|$ by the definition of a $(k, d, \tau)$-semi-hypercube. Finally, as every vertex $v$ in a sibling cluster of $\sigma$ is the endpoint of at most one edge from the cluster $\sigma$ to its siblings, and each vertex in the cluster $\sigma$ is the endpoint of at most $k'$ of these edges, the number of edges satisfying both requirements is at least
    \begin{align*}
        &\left(\frac{9}{10} \cdot k' \cdot |V_{\sigma}|\right) - \left(\left(1 - \frac{5}{9}\right) \cdot k' \cdot |V_{\sigma}|\right) - \left|\left\{v \in V_{\sigma'} \setminus V_\sigma : \iso(v) \geq \iso(s)\right\}\right|\\
        \geq&\ \frac{4}{9} \cdot k' \cdot |V_{\sigma}| - \left|\left\{v \in V_{\sigma'} \setminus V_\sigma : \iso(v) \geq \iso(s)\right\}\right|. 
    \end{align*}
    We now bound the latter term, i.e. the number of vertices $v$ with isolation at least $\iso(s)$.
    
    First, we show that $\iso(v) \geq \iso(\sigma)$ implies $\cou(v) > \cou(s)$ for $v \in V_{\sigma'} \setminus V_{\sigma}$. 
    Note that sorting the non-shared ancestor clusters of $v$ and $s$ in increasing order of depth, a prefix of the ancestors of $s$ are critical, the rest noncritical, while the least-deep non-shared ancestor of $v$ cannot be critical (as $\sigma'$ can have at most one critical child cluster). As the isolation of a vertex is the sum of $d - |\sigma|$ over its critical ancestor clusters $\sigma$, the vertex $s$ must have strictly larger isolation than $v$ if $\cou(v) \leq \cou(s)$. As $\iso(s) = \iso(\sigma)$, the claim holds.

    Next, for some nonempty child cluster $\sigma''$ of $\sigma'$ other than the child cluster containing $s$, we want to upper bound the number of vertices $v \in V_{\sigma''}$ for which $\cou(v) > \cou(s)$.
    
    Note that every vertex $v$ satisfying $\cou(v) > \cou(s)$ in the cluster $\sigma''$ corresponds to a walk down the cluster tree from $\sigma''$, where we elect to go to a critical child at least $|\sigma| - |\sigma'| + 1$ times. Define $p := |\sigma| - |\sigma'|$, as we will use this quantity a lot. There are $\binom{d - |\sigma''|}{p + 1}$ ways to select the first $p + 1$ depths where the critical child is selected, and for any cluster, there is at most one critical child and at most $k$ noncritical children. Thus, the number of vertices in $V_{\sigma''}$ for which $\cou(v) > \cou(s)$ is at most
    \begin{equation*}
        \binom{d - |\sigma''|}{p + 1} \cdot k^{d - |\sigma''| - (p + 1)} \leq 2^{d} \cdot k^{d - |\sigma''| - (p + 1)} = \frac{2^{d}}{k} \cdot k^{d - |\sigma'| - (p + 1)},
    \end{equation*}
    and counting each child cluster $\sigma''$, the total number of such vertices is at most $k'$ times that many.
    
    Now, we need to lower bound $|V_\sigma|$. Since the cluster is the home cluster of a vertex, it has at least one noncritical child cluster, and thus has size at least $|V_\sigma| \geq (1 - \tau) k^{d - |\sigma| - 1} = (1 - \tau) k^{d - |\sigma'| - (p + 1)}$. Thus,
    \begin{align*}
        \left|\left\{v \in V_{\sigma'} \setminus V_\sigma : \iso(v) \geq \iso(s)\right\}\right| &\leq \frac{2^{d}}{k} \cdot k' \cdot k^{d - |\sigma'| - (p + 1)}\\
            &\leq \frac{2^{d}}{(1 - \tau) k} \cdot k' \cdot |V_\sigma|\\
            &\leq \frac{1}{3} \cdot k' \cdot  |V_\sigma|.
    \end{align*}
\end{proof}

We are ready to prove \Cref{thm:obliv-routing-general}.

\oblivroutinggeneral*

\begin{proof} 

The oblivious routing is the one defined by the distribution of paths returned by $\Call{SamplePath}{s, t}$ in \Cref{alg:obliv-routing-general}.

We first analyze the function escape. Let $\sigma$ be some non-root cluster, and consider a call to $\Call{Escape}{t}$ on some vertex $t$ for which $\home(t) = \rep(\sigma)$. We first want to bound the probability that a specific fixed edge $(u, v)$ is the first edge for which the conditions of \Cref{line:edge-condition-line} are satisfied. The conditions can only be satisfied when
\begin{itemize}
    \item $u \in \calG_{s}$ (as the greedy path from $u$ to $t$ has to exist)
    \item $u \in V_\sigma$ (as this is the set $u$ is sampled from in the function)
    \item $v \in V_{\sigma'} \setminus V_\sigma$ (as the cluster $\rep(\sigma)$ is a non-root home cluster, and thus $\sigma$ is critical and does not occur in the set $S$, which only contains nonempty, noncritical child clusters of $\sigma'$)
    \item $\iso(v) < \iso(\sigma)$
\end{itemize}
Note that these are exactly the conditions on the edges counted by \Cref{lem:out-edge-count}, thus the number of eligible edges is at least $\frac{1}{9} \cdot k' \cdot |V_{\sigma}|$, where $k'$ is the number of noncritical nonempty child clusters of $\sigma'$. The first edge for which the conditions of \Cref{line:edge-condition-line} are satisfied is uniformly random over eligible edges, as the selected edge is determined uniquely by a pair $(u, i)$, and both $u$ and $i$ are sampled uniformly at random. Thus, the probability that a specific eligible edge gets selected is at most $\frac{9}{k' \cdot |V_\sigma|} \leq \frac{9}{|V_\sigma|}$, and as any vertex $v$ is the endpoint of at most one eligible edge, as otherwise it would be incident to two vertices in $V_\sigma$, the probability a specific vertex $v$ is the endpoint of that edge is similarly at most $\frac{9}{|V_\sigma|}$.

We can now already argue the claimed time complexity and path lengths. We have shown that a valid $(u, v)$ exists for a sampled pair $(u, i)$ with probability at least $\frac{1}{9}$, thus the expected number of samples required to find a valid edge is $\bigO(1)$. Checking a sample only requires constructing one greedy path, which can be done in time $\bigO(d)$. The actual sampling of a vertex from a cluster can be done in time $\bigO(d)$ by storing the vertex set as a trie and storing on every vertex the size of its subtree. As the isolation decreases with each recursive call and is initially at most $\bigO(d^2)$, the total expected running time of escape and the length of the path returned are $\bigO(d^3)$ as desired. Finally, by \Cref{lem:obliv-routing-noncritical}, the oblivious routing between $s'$ and $t'$ takes $\bigO(d)$ time and returns a path of length $\bigO(d)$.

Now, we want to bound the contribution of the escape-parts of the oblivious routing to the congestion of routing a unit demand. For a cluster $\sigma$ and vertex $s \in V \setminus V_\sigma$, denote by $W_s(\sigma)$ the probability that a call to $\Call{Escape}{s}$ makes at some point a recursive call to $\Call{Escape}{v}$ for some $v \in V_\sigma$. For $s \in V_\sigma$, define $W_s(\sigma) = 1$. Then, for any vertex $s$ and cluster $\sigma''$, we claim the following
\begin{equation*}
    W_s(\sigma'') \leq \mathbb{I}[s \in V_{\sigma''}] + \sum_{\sigma' \in \ancs(\sigma'')} \mathbb{I}[\sigma' \text{ has a critical child } \sigma] \cdot \mathbb{I}[\iso(\sigma) > \iso(\sigma'')] \cdot \left(\frac{9}{|V_{\sigma}|} \cdot |V_{\sigma''}| \right) \cdot W_s(\sigma)
\end{equation*}
The claim clearly holds when $s \in V_{\sigma''}$ due to the first term. Otherwise, consider the first edge $(u, v)$ that results in a recursive call to $\Call{Escape}{v}$ on some vertex $v \in V_{\sigma''}$. Let $s'$ be the vertex the prior call to escape was made on, $\sigma'$ be the lowest common ancestor of $u$ and $v$, and $\sigma$ the child cluster of $\sigma'$ containing $u$. Then
\begin{itemize}
    \item $\sigma'$ must be an ancestor of $\sigma''$, as the vertex $v$ must be in the cluster $\sigma''$, but the vertex $u$ cannot be, as $u$ and $s'$ are contained in the same child cluster of $\sigma'$, and $s' \not\in V_{\sigma''}$,
    \item $\sigma$ must be the critical child of $\sigma'$, as it is not the root cluster and $\rep(\sigma) = \home(s')$,
    \item $\iso(\sigma) > \iso(\sigma'')$ must hold, as $\iso(v) \geq \iso(\sigma'')$ and $\iso(\sigma) > \iso(v)$ since the edge is valid,
    \item The valid edge is selected with probability at most $\frac{9}{|V_\sigma|}$, and there are $|V_{\sigma''}|$ options for the specific vertex $v$ in the cluster $\sigma''$, and
    \item The previous call to escape was made on $s' \in \sigma$,
\end{itemize}
which combined give the upper bound. Now, define $W(\sigma) := \sum_{s \in V} W_s(\sigma)$. Then, by summing over $s$ in the inequality (and using $\mathbb{I}[\iso(\sigma) > \iso(\sigma'')] \leq 1$), we get
\begin{align*}
    W(\sigma'')   &\leq |V_{\sigma''}| + \sum_{\sigma' \in \ancs(\sigma'')} \mathbb{I}[\sigma' \text{ has a critical child } \sigma] \cdot \mathbb{I}[\iso(\sigma) > \iso(\sigma'')] \cdot \left(\frac{9}{|V_{\sigma}|} \cdot |V_{\sigma''}| \right) \cdot W(\sigma)\\
                    &\leq |V_{\sigma''}| \left(1 + 9 \sum_{\sigma' \in \ancs(\sigma'')} \mathbb{I}[\sigma' \text{ has a critical child } \sigma] \cdot \frac{W(\sigma)}{|V_{\sigma}|}\right).
\end{align*}
Suppose that the cluster $\sigma''$ is either critical, or a vertex. Then, since any cluster $\sigma'$ can have at most one critical child, for any ancestor $\sigma'$ of $\sigma''$ and its critical child $\sigma$, we have $|\sigma| > |\sigma''|$. We can thus show by induction on $|\sigma''|$ that $W(\sigma'') \leq |V_{\sigma''}| \cdot 10^{|\sigma''|}$:
\begin{align*}
    W(\sigma'') &\leq |V_{\sigma''}| \left(1 + 9 \sum_{\sigma' \in \ancs(\sigma'')} \mathbb{I}[\sigma' \text{ has a critical child } \sigma] \cdot \frac{W(\sigma)}{|V_{\sigma}|}\right)\\
                &\leq |V_{\sigma''}| \left(1 + 9 \sum_{\sigma' \in \ancs(\sigma'')} \mathbb{I}[\sigma' \text{ has a critical child } \sigma] \cdot 10^{|\sigma|}\right) \leq |V_{\sigma''}| \cdot 10^{|\sigma''|}.
\end{align*}
Thus, in particular, we have obtained that when calling $\Call{Escape}{s}$ exactly once on each vertex $s$, the expected total number of calls to $\Call{Escape}{v}$ that get made for some fixed vertex $v$ is at most $W(v) \leq 10^d$.

It only remains to relate this to the congestion of obliviously routing a unit demand. We first bound the congestion caused by the greedy subpaths of the routing paths, then bound the congestion caused by the remaining edges $(u, v)$ concatenated to the paths in escape.

Consider the greedy subpaths of the half of a routing path for some pair $(s, t)$ until the midpoint. There is exactly one greedy subpath for every call to escape made, each ending on the corresponding vertex escape was called on: one for every call but the last recursive call in escape in the path returned by escape, and then one on the endpoint $s'$ of the path escape returned, for which $\home(s') = \rep(\rt)$.

Conditioning on one of these calls to escape being on some vertex $q$, the greedy subpath ending at $q$ with a home cluster $\sigma = \home(q)$ is from a vertex $u \in \calG_{q, \sigma}$, and the probability that the path comes from some particular $u \in V_{\sigma}$ is at most $\frac{1}{\frac{1}{9} |V|} = \frac{9}{|V_\sigma|}$ by \Cref{lem:greedy-path-reach-many} if $q$ is the first vertex for which $\sigma = \home(q) = \rep(\rt)$, i.e. $q = s'$, and otherwise at most $9 / |V_\sigma|$ by \Cref{lem:out-edge-count} as argued earlier.

Thus, the congestion the greedy subpaths in an oblivious routing of a unit demand place on any edge $e$ is at most the congestion of the edge in a path set that contains each greedy path $\greedy(s, t)$ for $t \in V$ and $s \in \calG_{t, \sigma}$ where $\sigma = \home(t)$ with weight
\begin{equation*}
    kd \cdot W(t) \cdot \frac{9}{|V_\sigma|} \leq 9kd \cdot 10^d \cdot \frac{1}{|\calG_t|}.
\end{equation*}
By \Cref{lem:home-routing}, the congestion of the edge in this path set is $\bigO(kd \cdot 10^d)$, as for the edge there is at most one cluster $\sigma$ that can be a home cluster (which can only occur if it is the root cluster or is critical), which contains the edge, and which has no critical descendant cluster containing the edge.

Finally, we analyze the congestion caused by the remaining edges on the routing paths. Fix again some edge $e = (u, v)$. Let $\sigma'$ be the lowest ancestor cluster of $u$ and $v$, i.e. the unique cluster for which the edge connects two vertices in two distinct child clusters $V_{\sigma' + i}$ and $V_{\sigma' + i'}$, assume WLOG that the latter child cluster is noncritical, and let $\sigma = \rep(\sigma' + i)$. Then, the edge can only appear as the selected edge in calls to escape on a vertex $t$ for which $\home(t) = \sigma$, and the probability the edge is selected by a call $\Call{Escape}{t}$ is at most $9 / |V_\sigma|$. Thus, the congestion caused to the edge by the oblivious routing is at most
\begin{equation*}
    kd \cdot \left(\sum_{t \in H_\sigma} W(t)\right) \cdot \frac{9}{|V_\sigma|} \leq 9kd \cdot 10^d
\end{equation*}
and we are done.

\end{proof}

\subsection{Lower Bound on Semi-Hypercube Routing Congestion} \label{sec:obliv-routing-lower-bound}

\oblivroutinglowerbound

\begin{proof}
For some starting depth $d_0 \geq 2$, we define $(k, d, \tau)$-semi-hypercubes $G^{\mathrm{bad}}_d$ for $d$ of the same parity as $d_0$ and $(k, d, \tau)$-semi-hypercubes $G^{\mathrm{spread}}_d$ for the other parity. Each graph has an associated hard subset $\mathrm{hard}(G^{\mathrm{bad}}_d)$ or $\mathrm{hard}(G^{\mathrm{spread}}_d)$ with a small size cut to the rest of the graph. We define $s_d$ as the size of the hard subset $\mathrm{hard}(G^{\mathrm{bad}}_d)$ for $d$ of the same parity as $d_0$ and $\mathrm{hard}(G^{\mathrm{spread}}_d)$ for the other parity, and by $c_d$ the size of the cut between the hard subset and the rest of the graph in $G^{\mathrm{bad}}_d$ for $d$ of the same parity as $d_0$ and in $G^{\mathrm{spread}}_d$ for the other parity.

For some starting depth $d_0$ selected later, the root cluster of the graph $G^{\mathrm{bad}}_{d_0}$ has two child clusters, the first of which and the hard vertex subset of the graph is a $k$-clique, and the second of which is a $(k, d_0 - 1)$-semi-hypercube. The matching between these child clusters is simply an arbitrary matching of size $k$. Thus, $s_1 = k$ and $c_1 = k$.

For $d > d_0$ of the same parity as $d_0$, the root cluster of the graph $G^{\mathrm{bad}}_d$ consists of two children, one of which is an arbitrary $(k, d - 1, \tau)$-semi-hypercube of size $s_{d - 1}$ and average degree at least $k - 1$, and the other of which is a copy of $G^{\mathrm{spread}}_{d - 1}$. The edge set between those two children simply matches vertices in the first child with the vertices in the hard subset of the second child, which by definition are equal size. The hard subset is the union of the whole first child with the hard subset of the second child, thus $s_d = 2s_{d - 1}$. This does not add any edges to the cut, thus $c_d = c_{d - 1}$.

For $d > d_0$ of the other parity, the root cluster of the graph $G^{\mathrm{bad}}_d$ has $G^{\mathrm{bad}}_{d - 1}$ as one child cluster, and $k - 1$ children which are simply $(k, d - 1)$-semi-hypercubes. The edge sets between those $k - 1$ children are arbitrary, and the edge set between the child $G^{\mathrm{bad}}_{d - 1}$ and any of the other $k - 1$ children matches the vertices not in the hard subset of $G^{\mathrm{bad}}_{d - 1}$ with arbitrary vertices in the other child, and does not match any vertex in the hard subset. The hard subset of the graph is simply the hard subset of the child $G^{\mathrm{bad}}_{d - 1}$, thus $s_d = s_{d - 1}$. This does not add any edges to the cut, thus $c_d = c_{d - 1}$.

We thus have for all $d' \geq d_0$ that $s_{d'} = k \cdot 2^{\lfloor (d' - d_0) / 2 \rfloor}$ and $c_{d'} = k$. The average degree of vertices in the hard vertex set of $G^{\mathrm{bad}}_d$ is $\Omega(k)$, and the hard vertex set is smaller than the set of rest of the vertices, thus there is a unit demand on $G^{\mathrm{bad}}_d$ that cannot be routed with congestion $o(k \cdot 2^{-(d' - d_0) / 2})$.

It remains to show that each graph $G^{\mathrm{bad}}_d$ and $G^{\mathrm{spread}}_d$ is a $(k, d, \tau)$-semi-hypercube for some $d_0 \geq d / 2$. First, note that the number of vertices in the graph $G^{\mathrm{spread}}_d$ is at least $(k - 1) \cdot k^{d - 1}$. Now, consider the graph $G^{\mathrm{bad}}_d$:
\begin{itemize}
    \item The number of vertices in the graph $G^{\mathrm{bad}}_d$ is at least $(k - 1) k^{d - 2}$, as it either has size $k + k^{d_0 - 1}$ if $d = d_0$, or it has $G^{\mathrm{spread}}_{d - 1}$ as a child cluster.
    \item The root cluster of the graph has at most one $\tau$-critical child cluster, as $G^{\mathrm{spread}}_{d - 1}$ has size $(k - 1) k^{d - 2} \geq (1 - \tau) k^{d - 1}$.
    \item That child has a matching to the only other child cluster of size equal to that child cluster's size.
    \item Finally, a $(k, d - 1, \tau)$-semi-hypercube of size $s_{d - 1}$ and average degree at least $k - 1$ exists, as each $s_{d'}$ is a multiple of $k$ for all $d' \geq d_0$, and $s_{d - 1} = k \cdot 2^{\lfloor (d - 1 - d_0) / 2 \rfloor} \leq k^{d - 1}$.
\end{itemize}
Finally, consider the graph $G^{\mathrm{spread}}_d$:
\begin{itemize}
    \item The root cluster of the graph has at most one $\tau$-critical child cluster.
    \item That child has a matching to each other child of size at least
    \begin{equation*}
        |V^{\mathrm{bad}}_d| - s_{d - 1} = |V^{\mathrm{bad}}_d| \left(1 - \frac{s_{d - 1}}{|V^{\mathrm{bad}}_d|}\right) \geq |V^{\mathrm{bad}}_d| \left(1 - \frac{s_{d - 1}}{(k - 1) k^{d - 3}}\right) \geq |V^{\mathrm{bad}}_d| \left(1 - \frac{1}{k^{d_0 - 4}}\right).
    \end{equation*}
    Thus, the matching is large enough for $d_0$ large enough, and in fact the counterexample would still exist even if the constant $\frac{9}{10}$ in the definition of a $(k, d, \tau)$-semi-hypercube was changed to $1 - n^{-(1 - \epsilon)}$.
\end{itemize}
\end{proof}

%

\section{Deterministic Dynamic Routing on a Semi-Hypercube}\label{sec:det-routing}



In the previous section, we showed the existence of easy-to-sample-from oblivious routings on semi-hypercubes, which enables routing of demands on the semi-hypercube with low congestion over short paths. However, in some applications, it is important to \emph{maintain} with low recourse a routing for a demand while the graph is undergoing updates. In this section, we give a deterministic algorithm for maintaining such a routing, at the cost of losing a factor of $2^{\bigO(d)}$ in path length to the oblivious routing. 

We specifically consider maintaining an integral routing of an integral demand $D$ consisting of demand pairs $(s, t, \id)$, where $s$ and $t$ are the endpoint vertices of the demand, and $\id$ is a unique identifier for the demand pair. The routing should assign a single path $P(\id)$ to each demand pair. The recourse of an update to the graph and/or the demand is the number of demand pairs with changed routing paths.

The routing in question is constructed recursively: first, contract each child cluster of the root cluster into a single vertex and pick a path for each demand pair on this graph, which is essentially a clique with capacitated edges, the capacities being the size of the matching between the two child clusters. Then, after selecting the root-cluster edges taken by each routing path, simply recursively route the induced demands inside the child clusters. 

Suppose first now that the graph is in fact a \emph{noncritical} semi-hypercube. Then, we can simply select for each demand pair a direct edge between the child clusters of the root cluster containing the demand endpoints\footnote{Of course, this is not the best way to route on a clique, but the factor $\Theta(k)$ in congestion that this loses is unavoidable on a general $(k, d, \tau)$-semi-hypercube regardless.}. Balancing the congestion of edges in any individual matching between child clusters, this at worst causes congestion $L / (1 - 2 \tau) = \Theta(L)$ on each matching edge when routing a demand of load $L$.

We must also consider the loads of the recursive demands: the load on a vertex will be the total congestion of root-cluster edges incident to it plus possibly up to $L$ from the portion of the original demand that was already contained in the same child cluster of the root cluster as the vertex. Thus, the load is at most $kL$, but this is too weak of a bound. Instead, note that the total congestion of root-cluster edges incident to the child cluster of the vertex is at most $L$ times the size of the child cluster. If the congestion of edges in any individual matching is balanced, the total congestion of edges incident to the vertex is thus at most $\bigO(L + k)$, where the additive $k$ appears as in the worst case, the congestion on all of the up-to-$(k - 1)$ root cluster edges incident to the vertex round up in congestion. Thus, as long as the load of the initial demand was $L = \Omega(k)$, we obtain a routing of congestion $L \cdot 2^{\bigO(d)}$.

To dynamically maintain this routing under changes to the graph and demand, it suffices to have a basic load balancer data structure for maintaining that the congestion of edges in any particular matching differ by at most one. When a demand pair is added to the data structure, we can simply assign it to the lowest congestion edge, and when a demand pair is removed, we only have to change the assigned edge of at most one other demand pair to maintain the property. When an edge is added to or removed from the data structure, we might have to change the assigned edges of up to congestion-many demand pairs. Since every edge appears in exactly one matching, we obtain an update recourse of
\begin{itemize}
    \item $1$ for adding a demand pair to the maintained demand, as the only change to the recursive demands is the addition of two recursive demands with the same $\id$.
    \item $2^{\bigO(d)}$ for removing a demand pair from the maintained demand, as when we remove the demand pair from a matching, the assigned edge of up to one other demand pair changes, which adds a removal of one demand pair and an addition of one demand pair in each of the child clusters the matching is connecting. Thus, the number of demand pairs removed grows by a constant factor at every depth processed.
    \item $L \cdot 2^{\bigO(d)}$ for adding or removing an edge from the graph, as the removed edge appears in exactly one matching, and removing it from that matching causes the removals and additions of congestion-many recursive demands.
\end{itemize}

Now, we drop the assumption that the graph is noncritical, and consider maintaining the routing on a general $(k, d, \tau)$-semi-hypercube. This introduces a complication in that when one of the demand endpoints is in a critical child cluster of the root cluster, we cannot simply select a direct edge connecting the child clusters containing the endpoints, as the matching between the critical child cluster and the other child could be very small or even empty.

The guarantee we do have is that the total edge set between the critical child cluster and the other child clusters is large. Thus, a natural approach is to first take an edge out of the critical child cluster, then take a direct edge between the non-critical child cluster containing the endpoint of that edge and the child cluster containing the other endpoint of the demand as before. Balancing congestion over the overall edge set between the critical child and the other child clusters, this would indeed result in a routing with congestion $L \cdot 2^{\bigO(d)}$.

However, this routing cannot be maintained with low recourse. Consider a graph update in which a child cluster of a cluster with no critical child clusters prior to the update becomes critical: one would need to change the routing of all demand pairs in the cluster with an endpoint in the critical child. There needs to be a "smooth transition" to a child cluster becoming critical.

For this, we set a congestion limit to the load balancer data structure of each matching, with any demand pairs we try to assign to the matching that would result in an average congestion higher than the limit instead being moved to the \textit{overflow demand} on the cluster. Setting this congestion limit higher than the maximum congestion that could result between two noncritical child clusters, we guarantee that all overflow demand pairs must have the critical child cluster as one endpoint. Note that we still cannot allow the congestions of edges in the matching to vary arbitrarily as long as they are below the limit, as even in a non-critical semi-hypercube, this would result in recursive demands on child clusters of load $k$ times larger than the load of the recursive demand on this cluster, as explained earlier.

Finally, to route the overflow demand inside the cluster, to avoid having to maintain which child cluster is the critical child cluster, which could again lead to the same recourse issue, for each demand pair in the overflow, we simply take a load-balanced edge out of \emph{both} of the child clusters containing the two endpoints, and finally take a load-balanced direct edge between the two child clusters those edges connected to. Thus, the routing path of a demand pair that is part of the overflow in a cluster $\sigma$ contains three cluster-$\sigma$-edges.

The specific result obtained is the following:

\begin{theorem}\label{thm:det-routing}
    Let $G = (V, E)$ be a $(k, d, \tau)$-semi-hypercube for $\tau \leq \frac{1}{4}$ and $D$ an integral $L$-load demand on $G$. There is a deterministic algorithm that maintains a routing of $D$ of congestion $\max(k, L) \cdot 2^{\bigO(d)}$ and length $2^{\bigO(d)}$ under changes to the demand and vertex/edge additions and deletions, subject to the requirement that $G$ remains a $(k, d, \tau)$-semi-hypercube and $D$ a $L$-load demand on $G$.

    In one update, let $\Delta_G$ be the total number of vertex and edge insertions and deletions, $\Delta_{D+}$ and $\Delta_{D-}$ the total number of added and deleted demand pairs respectively and $\Delta_{D} := \Delta_{D+} + \Delta_{D-}$. Then,
    \begin{itemize}
        \item Recourse: at most $(\max(k, L) \Delta_G + \Delta_{D-}) \cdot 2^{\bigO(d)} + \Delta_{D+}$ paths in the routing are added, removed or changed.
        \item Work: processing the batch takes $(\Delta_G + \Delta_D) \cdot 2^{\bigO(d)} \cdot L \poly(k)$ work. 
    \end{itemize}
\end{theorem}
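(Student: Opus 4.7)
The plan is to define recursively, for each cluster $\sigma$, a per-cluster data structure that maintains a routing of a cluster-$\sigma$ demand $D_\sigma$ using only cluster-$\sigma$ edges (i.e., edges between child clusters of $\sigma$). At the root, $D_\rt = D$; for a leaf, the routing is trivial. For an internal cluster $\sigma$ with child clusters $\sigma+1, \dots, \sigma+k$, the data structure picks for each demand pair in $D_\sigma$ either a \emph{direct} path consisting of a single cluster-$\sigma$ edge between the child clusters containing its endpoints, or an \emph{overflow} path consisting of three cluster-$\sigma$ edges, and then recursively feeds the induced demand into each child cluster's data structure via the endpoints of the chosen edges. Concretely, for each unordered pair $(i,i')$ of child clusters I will maintain a load-balancer on the matching $E_{\sigma,i,i'}$ that balances congestion to within an additive $1$, subject to a congestion cap $C := \Theta(L / (1-2\tau))$ chosen so that two \emph{noncritical} children can always absorb all $D_\sigma$-demand between them as direct paths. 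Any demand pair whose insertion into its matching would exceed $C$ is instead placed into the per-cluster \emph{overflow set} $O_\sigma$; for each such pair $(s,t,\id) \in O_\sigma$, the data structure uses load-balancers on the full edge sets $E[V_{\sigma+i}, V_\sigma \setminus V_{\sigma+i}]$ and $E[V_{\sigma+i'}, V_\sigma \setminus V_{\sigma+i'}]$ to pick two ``escape'' edges out of the endpoint clusters, and then a balanced direct edge between the intermediate siblings.

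\textbf{Congestion and length.} I will argue that the cluster-$\sigma$ congestion is $\max(k,L) \cdot 2^{\bigO(1)}$: direct-path congestion is bounded by $C = \Theta(L)$ by construction, and overflow-path congestion is bounded using the second defining property of a $(k,d,\tau)$-semi-hypercube (\Cref{def:semi-hyper-cube}), which guarantees that the edge set out of the critical child has total size $\Omega(k' \cdot |V_{\sigma+t}|)$; combined with the observation that by the choice of $C$ only the critical child can contribute to overflow, load-balancing yields an additional additive $\bigO(k)$ term per matching edge. The recursive demand $D_{\sigma+i}$ on each child has load equal to at most $L$ (inherited $D_\sigma$-load) plus the total cluster-$\sigma$ congestion on edges incident to $V_{\sigma+i}$, which is $\max(k,L) \cdot 2^{\bigO(1)}$. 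Thus the load grows by a constant factor per depth, giving a final congestion and path length of $\max(k,L) \cdot 2^{\bigO(d)}$.

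\textbf{Recourse.} I will analyze the three event types recursively. A demand-pair addition at the root contributes $+1$ to recourse at the current level and induces at most one addition in each of two children, so by induction total recourse is $\bigO(1)^d \cdot \Delta_{D+}$ which I absorb into the stated bound. A demand-pair removal may force the load-balancer to reassign one other pair to preserve the near-uniform invariant, creating an add/remove pair in two children; inductively this yields $2^{\bigO(d)}$ recourse per removal. An edge insertion/deletion affects exactly one matching and forces at most $C = \bigO(L)$ reassignments plus $\bigO(1)$ overflow-set changes; each such change recursively generates $2^{\bigO(d)}$ further changes, giving $\max(k,L) \cdot 2^{\bigO(d)}$ per edge update. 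Summing over the batch gives the claimed $(\max(k,L) \Delta_G + \Delta_{D-}) \cdot 2^{\bigO(d)} + \Delta_{D+}$ bound; the work bound follows by charging $L \poly(k)$ per local load-balancer operation (scanning the at most $k$ siblings of a child and their $O(k)$ matchings).

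\textbf{Main obstacle.} The key technical issue is ensuring a \emph{smooth transition} when a previously noncritical child cluster $\sigma+t$ becomes critical (or vice versa): naively, one would need to re-route every demand pair with an endpoint in $V_{\sigma+t}$, giving recourse proportional to $|V_{\sigma+t}|$ per such event. The overflow mechanism is precisely designed to avoid this: because the congestion cap $C$ is the \emph{same} regardless of which child is critical and the overflow is defined cap-triggered rather than criticality-triggered, no special action is required when a child's critical status flips. The delicate part is verifying that the cap $C$ and the load-balancer invariants jointly guarantee (a) that noncritical–noncritical matchings never reach $C$ (so overflow genuinely involves only the critical child, preserving the congestion argument for the second routing step), and (b) that the additive error of the per-matching load-balancer is absorbed by the factor $1/(1-2\tau)$; I expect this to be the most calculation-heavy step of the formal proof, but it reduces to checking one inequality per matching given that $\tau \le 1/4$.
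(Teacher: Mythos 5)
Your proposal follows essentially the same route as the paper: a recursive per-cluster routing in which each demand pair gets either a single load-balanced direct matching edge or, once a size-dependent cap is exceeded, an overflow path of three cluster edges (load-balanced escapes out of both endpoint child clusters plus a direct edge between the intermediate siblings), with the recursive load growing by a constant factor per depth and recourse charged through the load-balancer guarantees; your cap-triggered (rather than criticality-triggered) overflow is exactly the paper's mechanism for the smooth-transition issue you single out. One small accounting correction: you cannot absorb $\bigO(1)^d \cdot \Delta_{D+}$ into the stated recourse, since the bound allows only an additive $\Delta_{D+}$; instead observe that a pure insertion never forces the load balancers to reassign existing clients, so all recursively created pairs carry the new pair's identifier and only that one top-level path changes (the paper enforces this by processing additions in a separate, addition-only pass of the update).
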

Note that both guarantees are worst-case, and there is no limit to the number of batches processed. When a vertex is removed, the same update must remove all demand with it as an endpoint and every edge incident to it.

Since the recourse when adding a single demand pair is $1$ (i.e. the only routing path that changed is the routing path for the new demand pair), this algorithm can additionally be viewed as a deterministic streaming algorithm for routing on a $(k, d, \tau)$-semi-hypercube.

\subsection{Static Routing Algorithm}

In this subsection, we show that the (static) initialization of the dynamic deterministic routing data structure (\Cref{alg:det-routing-data}) produces a routing with congestion $\max(k, L) \cdot \bigO(1)^d$ and length $\bigO(1)^d$ for an integral $L$-load demand on a $(k, d, \tau)$-semi-hypercube. Then, in \Cref{sec:det-dyn-routing}, we show how to dynamically maintain such a routing under graph and demand updates.

Crucial to the dynamic algorithm is the following basic load-balancer data structure.

\begin{restatable}{lemma}{loadbalancerlemma}\label{lem:load-balancer}
There exists a data structure $\mathrm{LoadBalancer}$ that maintains a set of \textit{clients} $I$ and \textit{buckets} $E$ along with an assignment $B$ of clients to buckets, such that for every $e \in E$, the number of clients $\id \in I$ assigned to $e$ satisfies
\begin{equation*}
    \lfloor |I| / |E| \rfloor \leq |\{id \in I : B(id) = e\}| \leq \lceil |I| / |E| \rceil
\end{equation*}
under updates $\textsc{UpdateLoadBalancer}(B, I^+, I^-, E^+, E^-)$, where $I \gets (I \setminus I^-) \cup I^+$ and $E \gets (E \setminus E^-) \cup E^+ $, returning $I^{A+}$, $I^{A-}$ and $B^{\mathrm{rem}}$, where
\begin{itemize}
    \item $I^{A+}$ contains clients that were either added or had a changed assigned bucket ($I^+ \subseteq I^{A+}$).
    \item $I^{A-}$ contains clients that were either removed or had a changed assigned bucket ($I^- \subseteq I^{A-}$).
    \item $B^{\mathrm{rem}}$ is a mapping from $I^{A-}$ to the buckets those clients were assigned to before the update.
\end{itemize}

The load balancer data structure has 
\begin{itemize}
    \item Recourse $|I^{A+}| + |I^{A-}| \leq |I^+| + 3|I^-| + 2\lceil |I| / |E| \rceil \cdot (|E^-| + |E^+|)$ (for pre-update $|I| / |E|$).
    \item Work $\tilde{\bigO}(|I^{A+}| + |I^{A-}| + |E^+|)$ (where $\tilde{O}$ hides $\poly\log(|I| + |I^+| + |E| + |E^+|)$-factors). 
\end{itemize} 
\end{restatable}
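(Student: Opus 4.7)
The plan is to maintain a priority queue over the bucket set $E$ keyed by current load $|B^{-1}(e)|$, together with each bucket's list of assigned clients and the global assignment $B$. The balance invariant $\lfloor|I|/|E|\rfloor \leq |B^{-1}(e)| \leq \lceil|I|/|E|\rceil$ allows only two consecutive load levels at any given time, so it suffices to be able to query the current lightest and heaviest buckets, which a standard heap supports in $\tilde{O}(1)$ time per operation.

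To process an update, I will apply its four components as a sequence of primitive single-element operations, re-establishing the invariant between each. (i) To insert a client, assign it to the current lightest bucket, contributing $1$ to $I^{A+}$. (ii) To delete a client, unassign it from its bucket $e$; if $e$'s load then falls below the updated $\lfloor|I|/|E|\rfloor$, move a single client from the current heaviest bucket into $e$, contributing at most $3$ entries to $I^{A+}\cup I^{A-}$ in total (one for the deleted client, and one each in $I^{A+}$ and $I^{A-}$ for the compensating peer). (iii) To insert an empty bucket, pull clients one at a time from the current heaviest bucket until the new bucket's load reaches $\lfloor|I|/|E'|\rfloor$, costing at most $\lceil|I|/|E|\rceil$ moves. (iv) To delete a bucket, unassign its $\leq \lceil|I|/|E|\rceil$ clients and reassign each to a current lightest surviving bucket, with the same bound. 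Summing these four per-primitive contributions over the batch gives exactly the advertised recourse $|I^+| + 3|I^-| + 2\lceil|I|/|E|\rceil(|E^+|+|E^-|)$, and since each primitive costs $\tilde{O}(1)$ in the heap, the total work bound $\tilde{O}(|I^{A+}|+|I^{A-}|+|E^+|)$ follows as well (structural removals of buckets in $E^-$ are absorbed into the work of reassigning their clients).

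The main technical obstacle is verifying that each primitive restoration really does need only the claimed number of extra moves, rather than triggering a cascade. This reduces to a short case analysis on whether $|I|/|E|$ is integral before the primitive is applied: removing or inserting a single client or bucket shifts each of $\lfloor|I|/|E|\rfloor$ and $\lceil|I|/|E|\rceil$ by at most one, so only the bucket directly touched, and in the client-deletion case one compensating peer moved from a heaviest bucket, can end up outside the allowed load range after the primitive is applied. Once this local-restoration claim is checked for each of the four primitives, the overall recourse and work guarantees follow immediately by summing the per-primitive contributions across the batch.
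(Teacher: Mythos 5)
Your per-primitive decomposition does not achieve the claimed recourse bound, and the problem is exactly the cascade you set aside as a "technical obstacle." Two things go wrong. First, because you fully restore the balance invariant after every single-element primitive, a client displaced by one primitive can be displaced again by a later primitive of the same batch. Take the update that deletes $|E^-| = |E|-1$ buckets and nothing else: when you delete a bucket while $m$ buckets remain, its current load is about $|I|/m$, so the total number of client moves over the batch is about $\sum_{m=2}^{|E|} |I|/m = \Theta(|I|\log|E|)$, i.e.\ recourse $\Theta(|I|\log |E|)$, whereas the lemma promises at most $2\lceil |I|/|E|\rceil\cdot|E^-| \approx 2|I|$. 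No ordering of the primitives avoids this; it is inherent to re-balancing between bucket deletions. Second, the bound in the lemma is stated with the \emph{pre-update} ratio $|I|/|E|$, but in your scheme the intermediate ratio can drift above it (e.g.\ if client insertions are processed before bucket insertions, each new bucket must be filled to the inflated $\lfloor|I'|/|E'|\rfloor$, pulling that many existing clients out of old buckets), so even the per-primitive cost of a bucket insertion is not bounded by $\lceil|I|/|E|\rceil$ without a careful ordering argument that your proposal does not supply.

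The paper's proof avoids both issues by processing the batch as a whole: it pools all clients of removed buckets (and clients evicted because the target load changed) into one unassigned set, computes the new target load $L^{\mathrm{new}} = \lfloor|I^{\mathrm{new}}|/|E^{\mathrm{new}}|\rfloor$ once, and then fills the new and deficient buckets in a single pass, so every displaced client is moved exactly once; the low/high bucket sets $E^{\mathrm{low}}, E^{\mathrm{high}}$ make this a linear scan rather than a heap-driven sequence of local repairs. It also needs a charging argument you would have to replicate: when $L^{\mathrm{new}} > L$, the clients used to fill freshly added buckets come only from removed buckets or from $I^+$, so their cost is charged to $|E^-|$ and $|I^+|$ rather than to $|E^+|$ at the (larger) new load. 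Your heap-based data structure is fine for the work bound, but to salvage the recourse claim you would essentially have to abandon the "restore the invariant after every primitive" plan and move to this one-shot redistribution.
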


For a proof of \Cref{lem:load-balancer}, see \Cref{sec:appendix-loadbalancer}.

\begin{lemma}\label{lem:det-routing-init}
    Let $G = (V, E)$ be a $(k, d, \tau)$-semi-hypercube for $\tau \leq \frac{1}{4}$ and $D$ an $L$-load demand on $G$. Then, \Cref{alg:det-routing-data} initialized with $(V, E, D, k, d, \tau, L)$ produces a routing for $D$ of congestion $\max(k, L) \cdot 20^d$ and length $4^d$.
\end{lemma}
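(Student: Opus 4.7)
The plan is to prove the lemma by induction on $d$, with the inductive hypothesis that for any cluster $\sigma$, the algorithm invoked recursively on $G[V_\sigma]$ with an induced integral demand $D_\sigma$ of load $L_\sigma$ produces a valid routing of congestion $\max(k, L_\sigma) \cdot 20^{d - |\sigma|}$ and dilation $4^{d - |\sigma|}$. The base case $|\sigma| = d$ is a single vertex and is trivial.

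At cluster $\sigma$, each demand pair $(u, v, \id) \in D_\sigma$ with $u \in V_{\sigma + i}$, $v \in V_{\sigma + i'}$ is first offered a direct matching edge in $E_{\sigma, i, i'}$ via \textsc{LoadBalancer}, subject to a congestion cap $\kappa_\sigma = \Theta(L_\sigma / k + 1)$ chosen high enough that matchings between two noncritical nonempty children (which, by the $(1-2\tau) \geq \tfrac{1}{2}$-density of such matchings, are almost-perfect) can absorb their entire slice of $D_\sigma$ without hitting the cap. Demand pairs that would exceed the cap become \emph{overflow}, and must therefore involve the unique critical child $V_{\sigma + t}$ as an endpoint. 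Each overflow pair is then routed through three level-$\sigma$ matching edges: one load-balanced edge out of $V_{\sigma+i}$, one load-balanced edge out of $V_{\sigma+i'}$, and a load-balanced direct matching edge between the two intermediate child clusters those edges land in. The path at level $\sigma$ thus contributes at most $3$ matching edges plus recursive sub-paths through at most $4$ child clusters, so the dilation satisfies $\ell_{d-|\sigma|} \leq 4 \ell_{d-|\sigma|-1} + 3 < 4^{d-|\sigma|}$.

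For the congestion, matching edges strictly between two noncritical children carry at most $\kappa_\sigma = O(L_\sigma / k + 1)$ traffic by the cap. For edges incident to the critical child $V_{\sigma+t}$, property~2 of \Cref{def:semi-hyper-cube} guarantees $|E[V_{\sigma+t}, V_\sigma \setminus V_{\sigma+t}]| \geq \tfrac{9}{10} k' |V_{\sigma+t}|$ where $k'$ is the number of nonempty noncritical siblings, and the load balancer distributes the at most $L_\sigma \cdot |V_{\sigma+t}|$ overflow units evenly over this entire boundary, keeping per-edge congestion at $O(L_\sigma / k' + 1)$. Hence any child cluster $V_{\sigma+i}$ inherits an induced load at most $L_{\sigma+i} \leq L_\sigma + (k-1) \kappa_\sigma + O(L_\sigma) = O(L_\sigma + k)$, i.e.\ a constant-factor growth per level. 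Iterating gives $L_\sigma \leq \max(L, k) \cdot C^{|\sigma|}$ for a suitable absolute constant $C$, and since the congestion on any edge equals the congestion it acquires at its own level, the global congestion is $\max(k, L) \cdot O(C)^d$. A careful accounting of the constants in the cap and in the overflow-boundary bound yields $C \leq 20$, matching the stated $20^d$.

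The main obstacle is handling the critical child without individually bounding the possibly sparse or empty sibling matchings to it. The resolution is the algorithm's design choice of applying \textsc{LoadBalancer} to the \emph{entire} out-boundary of each child rather than matching-by-matching; the $\tfrac{9}{10}$-average-degree property then makes the boundary large enough to absorb all overflow with only $O(L_\sigma/k' + 1)$ congestion per edge. The remaining subtlety is ensuring that \emph{intermediate} child clusters---those traversed by overflow but containing neither endpoint---do not accumulate excessive induced load. This follows from a symmetric double-counting argument: the overflow demand arriving into an intermediate cluster $V_{\sigma+j}$ is upper-bounded by the load balancer's output on matching edges incident to $V_{\sigma+j}$, which keeps the induced load on $V_{\sigma+j}$ inside the same $O(L_\sigma + k)$ budget used for endpoint-containing children.
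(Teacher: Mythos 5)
Your overall architecture mirrors the paper's (recursive level-by-level routing, load-balanced direct matching edges plus a three-edge overflow mechanism spread over a child's entire out-boundary via the $\frac{9}{10}$-average-degree property, constant-factor load growth per level, and the dilation recurrence $\ell_m \leq 4\ell_{m-1}+3$), but there is a genuine quantitative gap in the overflow cap, and it propagates through the rest of your accounting. You set the per-edge cap to $\kappa_\sigma = \Theta(L_\sigma/k + 1)$ and claim this is "high enough that matchings between two noncritical nonempty children can absorb their entire slice of $D_\sigma$." It is not: the demand is adversarial subject only to the load bound, so the slice between a \emph{single} pair of noncritical children $V_{\sigma+i}, V_{\sigma+i'}$ can be as large as $L_\sigma \cdot |V_{\sigma+i}| \approx L_\sigma k^{d-|\sigma|-1}$, while a matching of at most $k^{d-|\sigma|-1}$ edges capped at $\Theta(L_\sigma/k+1)$ per edge absorbs only $\Theta(L_\sigma/k + 1)\cdot k^{d-|\sigma|-1}$ — a factor $k$ too little. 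Consequently your pivotal claim that overflow "must therefore involve the unique critical child" is false under your cap, and with it the bound of the overflow volume by $L_\sigma|V_{\sigma+t}|$, the per-edge bound $\bigO(L_\sigma/k'+1)$ on the critical child's boundary, and the treatment of intermediate clusters all lose their justification. The paper's threshold is effectively $\Theta(L_\sigma)$ per edge (the per-matching cutoff $4L_{|\sigma|}(|E_{\sigma,i,j}| - \tfrac14 k^{d-|\sigma|-1})$), which does exceed the worst-case noncritical slice since such matchings have at least $\tfrac12 k^{d-|\sigma|-1}$ edges.

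Raising your cap to the required $\Theta(L_\sigma)$ is not a free fix, because your induced-load bound for a child cluster is derived as $(k-1)\kappa_\sigma = \bigO(L_\sigma + k)$, i.e.\ it multiplies the cap by the number of incident matchings; with a $\Theta(L_\sigma)$ cap that argument only gives $\Theta(kL_\sigma)$, which ruins the $\max(k,L)\cdot 20^d$ bound. The paper avoids this by bounding the per-vertex sum of edge congestions through the \emph{actual} routed demand rather than the cap: the total direct demand leaving $V_{\sigma+i}$ is at most $2L_{|\sigma|}|V_{\sigma+i}|$, each matching to a noncritical sibling has at least $\tfrac12|V_{\sigma+i}|$ edges, so the sum of per-edge congestions over a vertex's incident edges is at most $\approx 4L_{|\sigma|} + k'$ (plus a separate $\approx 8L_{|\sigma|}$ term for the edge toward the critical child, obtained by a charging argument, and $\approx 6L_{|\sigma|}$ from the overflow boundary), totaling $\leq 20 L_{|\sigma|}$. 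Your proof needs this actual-congestion accounting (or an equivalent), not the cap-times-degree bound; as written, the constant-factor-per-level recursion, and hence the $20^d$ claim, does not go through.
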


\begin{proof}
    \Cref{alg:det-static-routing} constructs a routing for the demand in a top-down, recursive fashion, selecting for every demand pair $(a, b, \id) \in D$ an ordered list of up to three edges between child clusters of the root cluster, and creating "recursive" demands contained in child clusters of the root cluster to connect the endpoints of the demand and the endpoints of the root-cluster edges selected for the demand pair in order. For example, if edges $(a', c'), (e', f'), (d', b')$ are selected for the demand pair $(a, b, \id) \in D$ (with $a$ and $a'$, $c'$ and $e'$, $f'$ and $d'$, and $b'$ and $b$ sharing the same child clusters), the demands $(a, a', 2\id + 1)$, $(c', e', 2\id)$, $(f', d', 2\id)$, $(b', b, 2\id + 1)$ are added to the recursive demands, and the assigned path for the demand pair will equal the concatenation of the paths assigned to the recursive demands interspersed with the edges $(a', c')$, $(e', f')$ and $(d', b')$. It immediately follows that paths in the routing have length at most $4^d$.
    
    To analyze congestion, we bound the loads of the recursive demands at every level. Denote by $D_\sigma$ the recursive demand at cluster $\sigma$, with $D_{\rt} = D$. As the load of a vertex in the recursive demand $D_{\sigma + i}$ of a child cluster $\sigma + i$ of a cluster $\sigma$ equals the load of the vertex in the demand $D_{\sigma}$ of $\sigma$ plus the congestion of cluster-$\sigma$ edges incident on $v$, the congestion of any cluster-$\sigma$ edge is at most the maximum load of a child clusters' recursive demand $D_{\sigma + i}$.

    We will show by induction on $|\sigma|$ that the load of $D_\sigma$ is at most $L_{|\sigma|} := \max(k, L) \cdot 20^{|\sigma|}$. This holds for the root cluster $\sigma = \rt$ as the initial demand $D$ has load $L$. 

    \begin{algorithm}[H]
    \caption{$\textsc{DynDetRouter}$: data, initialization, path queries} \label{alg:det-routing-data}
    \begin{algorithmic}[1]
        \Class{DynDetRouter}
            \Data
                \LeftComment{Constants}
                \State Width $k \in \mathbb{N}$
                \State Dimension $d \in \mathbb{N}$
                \State Threshold $\tau \in [0, \frac{1}{4}]$
                \State Max load $L \in \mathbb{N}$
                \LeftComment{Variables}
                \State Demands $D^{\mathrm{base}}_{\sigma, i, j}, D^{\mathrm{ovf}}_{\sigma, i, j}$
                \State Load balancers $B_{\sigma, i}^{\mathrm{ovf}}$
                \State Load balancers $B_{\sigma, i, j}^{\mathrm{dir}}$
            \EndData
            
            \LeftComment{Initializes the data structure with $(k, d, \tau)$-semi-hypercube $G = (V, E)$ and $L$-load demand $D$}
            \LeftComment{Afterwards, $\textsc{GetPath}(\epsilon, (a, b, \id))$ returns path assigned to $(a, b, \id) \in D$}
            \Function{Initialize}{$V$, $E$, $D$, $k$, $d$, $\tau$, $L$}
                \For{$\sigma \in [k]^{< d}$ }
                    \For{$i \in [k]$ }
                        \State $\Call{UpdateLoadBalancer}{B_{\sigma, i}^{\mathrm{ovf}}, \emptyset, \emptyset, E_{\sigma, i, -i}, \emptyset}$
                        \For{$j \in [k]$, $i < j$ }
                            \State $\Call{UpdateLoadBalancer}{B_{\sigma, i, j}^{\mathrm{dir}}, \emptyset, \emptyset, E_{\sigma, i, j}, \emptyset}$
                        \EndFor
                    \EndFor
                \EndFor
                \State $\Call{StaticRecRoute}{\epsilon, D}$.
            \EndFunction
            \State
            \LeftComment{Returns path within cluster $\sigma$ assigned to demand pair $(a, b, \id)$}
            \Function{GetPath}{$\sigma$, $(a, b, \id)$}
                \LeftComment{If the demand pair is part of overflow, get paths out of endpoint clusters}
                \If{$\id \in D^{\mathrm{ovf}}_{\sigma}$}
                    \State Let $i, j$ be the child clusters $V_{\sigma + i} \ni a$, $V_{\sigma + j} \ni b$ containing $a, b$
                    \State Let $(a', c') \gets B^{\mathrm{ovf}}_i(\id)$
                    \State Let $(b', d') \gets B^{\mathrm{ovf}}_j(\id)$
                    \State Let $P_{\mathrm{tail}} = \Call{concat}{\Call{GetPath}{\sigma + i, (a, a', 2\id + 1)}, (a', c')}$
                    \State Let $P_{\mathrm{head}} = \Call{concat}{(d', b'), \Call{GetPath}{\sigma + j, (b', b, 2\id + 1)}}$
                    \State $a \gets c'$, $b \gets d'$
                \Else
                    \State Let $P_{\mathrm{tail}}$ and $P_{\mathrm{head}}$ be empty paths
                \EndIf
                \State
                \LeftComment{Compute directly routed part}
                \State Let $i, j$ be the child clusters $V_{\sigma + i} \ni a$, $V_{\sigma + j} \ni b$ containing $a, b$
                \If{$i = j$}
                    \State \Return $\Call{concat}{P_{\mathrm{tail}}, \Call{GetPath}{\sigma + i, (a, b, 2\id)}, P_{\mathrm{head}}}$
                \Else
                    \State Let $(a', b') \gets B^{\mathrm{dir}}_{i, j}(\id)$
                    \State Let $P_{\mathrm{tail}} \gets \Call{concat}{P_{\mathrm{tail}}, \Call{GetPath}{\sigma + i, (a, a', 2\id)}}$
                    \State Let $P_{\mathrm{head}} \gets \Call{concat}{\Call{GetPath}{\sigma + j, (b', b, 2\id)}, P_{\mathrm{head}}}$
                    \State \Return $\Call{concat}{P_\mathrm{tail}, (a', b'), P_{\mathrm{head}}}$
                \EndIf
            \EndFunction
            \algstore{DynDetRouter}
    \end{algorithmic}
\end{algorithm}

\addtocounter{algorithm}{-1}
\begin{algorithm}[H]
    \caption{$\textsc{DynDetRouter}$: Static Routing} \label{alg:det-static-routing}
    \begin{algorithmic}[1]
            \algrestore{DynDetRouter}
            \Function{StaticRecRoute}{$\sigma$, $D$}
                \LeftComment{Step 1: split the demand into direct demand and overflow}
                \For{$i, j \in [k]$, $i < j$ }
                    \State Let $L_{|\sigma|} := \max(k, L) \cdot 20^{|\sigma|}$
                    \State Let $T^{\mathrm{ovf}} \gets \max(0, 4L_{|\sigma|} \cdot (|E_{\sigma, i, j}| - \frac{1}{4} k^{d - |\sigma| - 1}))$
                    \State Let $D_{i, j}$ be the subdemand of $D$ from $V_{\sigma + i}$ to $V_{\sigma + j}$
                    \State $D^{\mathrm{base}}_{\sigma, i, j}$ $\gets$ a subdemand of $D_{i, j}$ of size $\min(|D_{i, j}|, T^{\mathrm{ovf}})$
                    \State $D^{\mathrm{ovf}}_{\sigma, i, j} \gets D_{i, j} \setminus D^{\mathrm{base}}_{\sigma, i, j}$
                \EndFor
                \State Let $D^{\mathrm{dir}} \gets D \setminus \bigcup_{i < j} D^{\mathrm{ovf}}_{\sigma, i, j}$
                \State
                \LeftComment{Step 2: route overflow out of its endpoint clusters}
                \For{$i \in [k]$ }
                    \State Let $D^{\mathrm{ovf}}_{i, -i} \gets \bigcup_{j \neq i} D^{\mathrm{ovf}}_{i, j}$
                    \State $\Call{UpdateLoadBalancer}{B^{\mathrm{ovf}}_{\sigma, i}, D^{\mathrm{ovf}}_{i, -i}, \emptyset, \emptyset, \emptyset}$
                \EndFor
                \State
                \State Let $D_{\sigma + i}$ be an empty demand for every $i \in [k]$
                \For{$i, j \in [k]$, $i < j$ }
                    \For{$(a, b, \id) \in D^{\mathrm{ovf}}_{i, j}$ }
                        \State Let $(a', c') \gets B^{\mathrm{ovf}}_{\sigma, i}(\id)$
                        \State Let $(b', d') \gets B^{\mathrm{ovf}}_{\sigma, j}(\id)$
                        \State Add $(a, a', 2\id + 1)$ to $D_{\sigma + i}$
                        \State Add $(b', b, 2\id + 1)$ to $D_{\sigma + j}$
                        \State Add $(c', d', \id)$ to $D^{\mathrm{dir}}$
                    \EndFor
                \EndFor
                \State
                \LeftComment{Step 3: route direct demand}
                \For{$i, j \in [k]$, $i < j$ }
                    \State Let $D^{\mathrm{dir}}_{i, j}$ be the subdemand of $D^{\mathrm{dir}}$ from $V_{\sigma + i}$ to $V_{\sigma + j}$
                    \State $\Call{UpdateLoadBalancer}{B^{\mathrm{dir}}_{\sigma, i, j}, D^{\mathrm{dir}}_{i, j}, \emptyset, \emptyset, \emptyset}$
                    \For{$(a, b, \id) \in D^{\mathrm{dir}}_{i, j}$ }
                        \State Let $(a', b') \gets B^{\mathrm{dir}}_{\sigma, i, j}(\id)$
                        \State Add $(a, a', 2\id)$ to $D_{\sigma + i}$
                        \State Add $(b', b, 2\id)$ to $D_{\sigma + j}$
                    \EndFor
                \EndFor
                \State
                \LeftComment{Recursively route demand contained within clusters}
                \If{$|\sigma| + 1 < d$}
                    \For{$i \in [k]$ }
                        \State Let $D^{\mathrm{dir}}_i$ be the subdemand of $D^{\mathrm{dir}}$ contained in $V_{\sigma + i}$
                        \For{$(a, b, \id) \in D^{\mathrm{dir}}_i$ }
                            \State Add $(a, b, 2\id)$ to $D_{\sigma + i}$
                        \EndFor
                        \State \Call{StaticRecRoute}{$\sigma + i, D_{\sigma + i}$}
                    \EndFor
                \EndIf
            \EndFunction
            \algstore{DynDetRouter}
    \end{algorithmic}
\end{algorithm}

    
    Consider the recursive call to cluster $\sigma$ in \Cref{alg:det-static-routing}. The edges between child clusters of $V_\sigma$ are chosen as follows:
    \begin{enumerate}
        \item[Step $1$.] First, for every pair $\sigma + i$, $\sigma + j$ of child clusters, if the amount $|D_{\sigma, i, j}|$ of demand pairs in $D_\sigma$ between the child clusters exceeds the overflow threshold $T^{\mathrm{ovf}}_{\sigma, i, j} := \max(0, 4L_{|\sigma|} \cdot (|E_{\sigma, i, j}| - \frac{1}{4} k^{d - |\sigma| - 1}))$, the excess demand is designated as \textit{overflow} demand: $D^{\mathrm{ovf}}_{\sigma, i, j}$ is a subdemand of $D_{\sigma, i, j}$ of size $\max(0, |D_{\sigma, i, j}| - T^{\mathrm{ovf}}_{\sigma, i, j})$. The demand $D^{\mathrm{dir}}_\sigma$ contains the demand pairs in $D_\sigma$ that are not part of the overflow.
        \item[Step $2$.] Then, for every child cluster $\sigma + i$, the load balancer data structure $B^{\mathrm{ovf}}_{\sigma, i}$ is used with clients $D^{\mathrm{ovf}}_{\sigma, i, -i} := \bigcup_{j \neq i} D^{\mathrm{ovf}}_{\sigma, i, j}$ and buckets $E_{\sigma, i, -i}$ to assign to every overflow demand pair $(a, b, \id) \in D^{\mathrm{ovf}}_{\sigma, i, -i}$ an edge $B^{\mathrm{ovf}}_{\sigma, i}(\id) \in E_{\sigma, i, -i}$, such that every edge in $E_{\sigma, i, -i}$ has either $\lfloor |D^{\mathrm{ovf}}_{\sigma, i, -i}| / |E_{\sigma, i, -i}| \rfloor$ or $\lceil |D^{\mathrm{ovf}}_{\sigma, i, -i}| / |E_{\sigma, i, -i}| \rceil$ demand pairs assigned to it.

        Then, for each demand pair $(a, b, \id) \in D^{\mathrm{ovf}}_{\sigma, i, j}$, these edges $(a', c') = B^{\mathrm{ovf}}_{\sigma, i}$ and $(b', d') = B^{\mathrm{ovf}}_{\sigma, j}$ are set as the first and last cluster-$\sigma$ edges of that demand pair, and $(c', d', \id)$ is added to $D^{\mathrm{dir}}_{\sigma}$.
        \item[Step $3$.] Finally, the load balancer data structure $B^{\mathrm{dir}}_{\sigma, i, j}$ is used with clients $D^{\mathrm{dir}}_{\sigma, i, j}$ and buckets $E_{\sigma, i, j}$ to assign to every demand pair $(a, b, \id) \in D^{\mathrm{dir}}_{\sigma, i, j}$ an edge $B^{\mathrm{dir}}_{\sigma, i, j}(\id) \in E_{\sigma, i, j}$, such that every edge in $E_{\sigma, i, j}$ has either $\lfloor |D^{\mathrm{dir}}_{\sigma, i, j}| / |E_{\sigma, i, j}| \rfloor$ or $\lceil |D^{\mathrm{dir}}_{\sigma, i, j}| / |E_{\sigma, i, j}| \rceil$ demand pairs assigned to it. For every demand pair $(a, b, \id) \in D^{\mathrm{dir}}_{\sigma, i, j}$, the edge $(a', b')$ is set as either the only (if the demand pair was not part of the overflow) or the second (if it was) cluster-$\sigma$ edge of the demand pair.
    \end{enumerate}
    
    Now, we start analyzing congestion and load. First, consider the congestion caused to edges in the second step: for a cluster-$\sigma$ edge $e \in E_{\sigma, i, -i}$, let $C^{\mathrm{ovf}}_{e, i} := |\{(a, b, \id) \in D^{\mathrm{ovf}}_{\sigma, i, -i} : e = B^{\mathrm{ovf}}_{\sigma, i, -i}(\id)\}|$, and for $e \in E_{\sigma, i, j}$, let $C^{\mathrm{ovf}}_e := C^{\mathrm{ovf}}_{e, i} + C^{\mathrm{ovf}}_{e, j}$. Let $k' + 1$ be the number of nonempty child clusters of $\sigma$. For every child cluster $\sigma + i$, including the critical child cluster if one exists, we have
    \begin{equation*}
        |E_{\sigma, i, -i}| \geq \frac{1}{2} \cdot k' \cdot |V_{\sigma + i}| \geq \frac{k'}{2L_{|\sigma|}} \cdot \sum_{j \neq i} |D_{\sigma, i, j}|,
    \end{equation*}
    thus for every edge $e \in E_{\sigma, i, -i}$, we have
    \begin{equation*}
        C^{\mathrm{ovf}}_{e, i} \leq |\{(a, b, \id) \in D^{\mathrm{ovf}}_{\sigma, i, -i} : e = B^{\mathrm{ovf}}_{\sigma, i, -i}(\id)\}| \leq \left\lceil \frac{2L_{|\sigma|}}{k'} \right\rceil,
    \end{equation*}
    and for every vertex $v \in V_{\sigma + i}$, we have
    \begin{equation*}
        \sum_{\substack{e\ \in\ E_{\sigma, i, -i}\\\text{$e$ incident on $v$}}} C^{\mathrm{ovf}}_{e} \leq 2k' \cdot \left\lceil \frac{2L_{|\sigma|}}{k'} \right\rceil \leq 2k' + 4L_{|\sigma|} \leq 6L_{|\sigma|}.
    \end{equation*}
    
    Next, consider the congestion caused to edges in the third step: for a cluster-$\sigma$ edge $e \in E_{\sigma, i, j}$, let $C^{\mathrm{dir}}_{e} := |\{a, b, \id) \in D^{\mathrm{dir}}_{\sigma, i, j} : e = B^{\mathrm{dir}}_{\sigma, i, j}(\id)\}|$, and let $D^{\mathrm{base}}_{\sigma, i, j}$ be the subdemand of $D^{\mathrm{dir}}_{\sigma, i, j}$ containing the demand pairs that were not part of overflow (i.e. were added to $D^{\mathrm{dir}}_{\sigma, i, j}$ in step 1).
    
    Note that for any demand pair $(a, b, \id) \in D^{\mathrm{ovf}}_\sigma$ in the overflow, either $a$ or $b$ must be in the critical child cluster $\sigma + t$ of $\sigma$ (and if the cluster has no critical child cluster, there cannot be any overflow), as $|E_{\sigma, i, j}| \geq \frac{1}{2} k^{d - |\sigma| - 1}$ for any pair of noncritical child clusters $\sigma + i$ and $\sigma + j$. Consider any demand pair $(a, b, \id) \in D^{\mathrm{ovf}}_{\sigma, t, j}$, and let $(a', c') := B^{\mathrm{ovf}}_{\sigma, t, -t}(\id)$ and $(b', d') := B^{\mathrm{ovf}}_{\sigma, j, -j}(\id)$ be the cluster-$\sigma$ edges selected for the demand pair in step $2$. Now, either $c'$ and $d'$ are both in noncritical child clusters of $\sigma$, or we can \textit{charge} the demand pair to the edge $(a', c')$. Thus, for every noncritical child cluster $\sigma + i$, we have
    \begin{equation*}
        |D^{\mathrm{dir}}_{\sigma, t, i} \setminus D^{\mathrm{base}}_{\sigma, t, i}| \leq \sum_{e \in E_{\sigma, t, i}} C^{\mathrm{ovf}}_{e, i} \leq |E_{\sigma, t, i}| \cdot \left\lceil\frac{2L_{|\sigma|}}{k'} \right\rceil,
    \end{equation*}
    and, combined with the definition of the overflow threshold,
    \begin{itemize}
        \item if $|E_{\sigma, t, i}| \geq \frac{1}{4} |V_{\sigma + t}|$, then
        \begin{equation*}
            |D^{\mathrm{dir}}_{\sigma, t, i}| \leq |E_{\sigma, t, i}| \cdot \left(\frac{|D_{\sigma, t, i}|}{|E_{\sigma, t, i}|} + \left\lceil\frac{2L_{|\sigma|}}{k'}\right\rceil\right) \leq |E_{\sigma, t, i}| \cdot \left(\frac{4|D_{\sigma, t, i}|}{|V_{\sigma + t}|} + \left\lceil\frac{2L_{|\sigma|}}{k'}\right\rceil\right),
        \end{equation*}
        \item and if $|E_{\sigma, t, i}| \leq \frac{1}{4} |V_{\sigma + t}|$, then $T^{\mathrm{ovf}}_{\sigma, t, i} = 0$, and
        \begin{equation*}
            |D^{\mathrm{dir}}_{\sigma, t, i}| \leq |E_{\sigma, t, i}| \cdot \left\lceil\frac{2L_{|\sigma|}}{k'}\right\rceil \leq |E_{\sigma, t, i}| \cdot \left(\frac{4|D_{\sigma, t, i}|}{|V_{\sigma, t}|} + \left\lceil\frac{2L_{|\sigma|}}{k'}\right\rceil\right),
        \end{equation*}
    \end{itemize}
    and for $e \in E_{\sigma, t, i}$ we have $C^{\mathrm{dir}}_{e} \leq 4|D_{\sigma, t, i}| / |V_{\sigma + t}| + 2L_{|\sigma|} / k' + 2$. Thus, the sum of $C^{\mathrm{dir}}_e$ over the up-to-$k'$ cluster-$\sigma$ edges incident to any $v \in V_{\sigma + t}$ is at most
    \begin{equation*}
        \sum_{i \neq t} \frac{4|D_{\sigma, t, i}|}{|V_{\sigma + t}|} + \frac{2L_{|\sigma|}}{k'} + 2 \leq 4L_{|\sigma|} + 2L_{|\sigma|} + 2k' \leq 8L_{|\sigma|}.
    \end{equation*}
    Finally, for any noncritical child clusters $\sigma + i$, we have $\sum_{j \neq i, t} |D^{\mathrm{dir}}_{\sigma, i, j}| \leq L_{|\sigma|} |V_{\sigma + i}| + |D^{\mathrm{ovf}}_\sigma| \leq 2L_{|\sigma|} |V_{\sigma + i}|$ as $|V_{\sigma + i}| \geq |V_{\sigma + t}|$. Since $|E_{\sigma, i, j}| \geq \frac{1}{2} |V_{\sigma + i}|$ for all $j \neq t$, the sum of $C^{\mathrm{dir}}_e$ over cluster-$\sigma$ edges incident on some vertex $v \in V_{\sigma, i}$ without an endpoint in $V_{\sigma + t}$ is at most $4L_{|\sigma|} + k' \leq 5L_{|\sigma|}$. The edge to the critical cluster, if it exists, has $C^{\mathrm{dir}}_e \leq 8L_{|\sigma|}$, thus the sum of $C^{\mathrm{dir}}_{e}$ over the cluster-$\sigma$ edges incident to $V_{\sigma, i}$ is at most $13L_{|\sigma|}$.

    Thus, the congestion of any cluster-$\sigma$ edge is at most $19L_{|\sigma|}$, and the loads of the recursive demands $D_{\sigma + i}$ are at most $L_{|\sigma|} + 6L_{|\sigma|} + 13L_{|\sigma|} \leq 20L_{|\sigma|} = L_{|\sigma + i|}$ as desired.
\end{proof}

\subsection{Deterministically Maintaining a Routing}\label{sec:det-dyn-routing}

Using the load balancer data structure, the routing can be made dynamic in a straight-forward way. Updates are performed in the same top-down, recursive fashion as with static initialization. Each cluster receives sets of demand changes $D^+_\sigma$ and $D^-_\sigma$ along with the edge updates $E^+_\sigma$, $E^-_\sigma$, reassigns cluster-$\sigma$-edges for some demand pairs (always including those assigned to a newly-removed edge or a added/removed demand pair), and produces recursive demand changes $D^+_{\sigma + i}$, $D^-_{\sigma + i}$ for child clusters. After the recursive updates, the update function returns the set $I^{\Delta}$ of identifiers $\id$ of demand pairs for which the assigned path changed.

By the properties of the load balancer data structure, it is straightforward to check that the dynamic update maintains a routing that by \Cref{lem:det-routing-init} routes the demand with congestion $\max(k, L) \cdot 20^d$ and length $4^d$. The main content of this section is dedicated to showing the following bound on the recourse of the dynamic update:

\begin{lemma}\label{lem:det-routing-recourse}
    Each call $\textsc{DynamicRecUpdate}(\sigma, E^+, E^-, D^+, D^-)$ satisfies
    \begin{equation*}
        \sum_{i \in [k]} |D^+_{\sigma + i}| + |D^-_{\sigma + i}| \leq 72(|D^+_\sigma| + |D^-_\sigma|) + 320 \max(k, L) \cdot 20^{|\sigma|} \cdot (|E^+_\sigma| + |E^-_\sigma|)
    \end{equation*}
    Further, when the only change is the addition of some demand pairs (i.e. $E^+, E^-$ and $D^-$ are all empty), $\sum_{i \in [k]} |D^-_{\sigma + i}| = 0$ and for all $\id' \in \bigcup_{i \in [k]} D^+_{\sigma + i}$ we have $\lfloor \id' / 2 \rfloor \in D^+_\sigma$; the only demand pairs for which the assigned path changes are those in $D^+_{\sigma + i}$.
\end{lemma}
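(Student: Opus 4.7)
The plan is to first specify how $\textsc{DynamicRecUpdate}(\sigma, E^+, E^-, D^+, D^-)$ processes updates at level $\sigma$, then bound its recourse by tracking how each input change cascades through three stages: (i) the overflow/direct split, (ii) the load balancers $B^{\mathrm{dir}}_{\sigma, i, j}$ and $B^{\mathrm{ovf}}_{\sigma, i}$, and (iii) the recursive demand updates $D^\pm_{\sigma+i}$. The statement mirrors this three-stage cascade and its proof should follow the same structure.

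At level $\sigma$, the algorithm proceeds in three phases. First, it applies the edge changes $E^\pm_\sigma$ and demand changes $D^\pm_\sigma$ to the stored subdemands $D_{\sigma, i, j}$ and edge sets $E_{\sigma, i, j}$. Second, because the overflow threshold $T^{\mathrm{ovf}}_{\sigma, i, j} = \max(0, 4L_{|\sigma|}(|E_{\sigma, i, j}| - \tfrac{1}{4}k^{d-|\sigma|-1}))$ scales linearly with $|E_{\sigma, i, j}|$, edge changes can force pairs to move between $D^{\mathrm{base}}_{\sigma, i, j}$ and $D^{\mathrm{ovf}}_{\sigma, i, j}$; a FIFO convention is adopted so that a pre-existing pair is never moved unless forced by an edge change, and a newly-added pair lands in overflow exactly when the threshold is already saturated. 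Third, each affected load balancer is updated via a single \textsc{UpdateLoadBalancer} call whose client/bucket changes reflect the previous two phases, and each returned reassignment is translated into recursive demand updates: the removal of the old $(a, a', 2\id)$- or $(a, a', 2\id+1)$-type endpoint-to-boundary pair in one child cluster, together with the addition of the new one, and symmetrically on the other side.

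The recourse bound then follows by combining \Cref{lem:load-balancer} with these input counts. For each direct load balancer $B^{\mathrm{dir}}_{\sigma, i, j}$, the ratio $\lceil |I|/|E|\rceil$ is at most $4L_{|\sigma|}+1$ by the definition of $T^{\mathrm{ovf}}_{\sigma, i, j}$, and for each overflow load balancer $B^{\mathrm{ovf}}_{\sigma, i}$ it is at most $2L_{|\sigma|}/k' + 1$; both facts were already used inside the proof of \Cref{lem:det-routing-init}. Every demand change in $D^+_\sigma \cup D^-_\sigma$ triggers at most a constant number of client additions/removals across the direct and overflow load balancers (at most one each in the direct LB for $(i, j)$, the overflow LB for $\sigma+i$, the overflow LB for $\sigma+j$, and the intermediate direct LB for $(i', j')$). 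Every edge change in $E^+_\sigma \cup E^-_\sigma$ contributes exactly one bucket change to one direct LB and one to each of two overflow LBs, and additionally triggers up to $4L_{|\sigma|}$ base$\leftrightarrow$overflow category transitions per edge change due to the threshold shift (each such transition is counted as one client removal in one LB and one client addition in another). Plugging these into the $|I^+|+3|I^-|+2\lceil|I|/|E|\rceil(|E^+|+|E^-|)$ bound from \Cref{lem:load-balancer}, and noting that each returned load balancer reassignment generates at most two recursive demand changes, yields a bound of $O(1)(|D^+_\sigma|+|D^-_\sigma|) + O(L_{|\sigma|})(|E^+_\sigma|+|E^-_\sigma|)$ on $\sum_i |D^+_{\sigma+i}|+|D^-_{\sigma+i}|$; tracking the constants carefully through the cascade gives the claimed $72$ and $320$.

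For the addition-only claim, edges do not change, so $T^{\mathrm{ovf}}_{\sigma, i, j}$ does not change; by the FIFO convention no pre-existing pair moves between base and overflow, and every new pair lands entirely in one category. Each affected load balancer thus sees $I^-=\emptyset$ and $E^\pm=\emptyset$, and the \Cref{lem:load-balancer} bound collapses to $|I^{A+}|=|I^+|$ and $|I^{A-}|=0$, so no existing client is reassigned. Each new pair $(a, b, \id) \in D^+_\sigma$ therefore contributes only recursive demand additions, all with identifiers of the form $2\id$ or $2\id+1$ as prescribed by \Cref{alg:det-static-routing}, and the only paths in the overall routing that change are those of the newly added pairs. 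The main technical obstacle I anticipate is the two-step cascade (threshold shift $\to$ category transition $\to$ load balancer reassignment) triggered by each edge update; verifying that the FIFO convention suffices to keep existing pairs in place when no edges change, and that the $O(L_{|\sigma|})$-factor blowup per edge change does not compound across the two load balancers that each edge affects, is where the constants $72$ and $320$ must be carefully checked.
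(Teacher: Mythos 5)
Your overall strategy is the same as the paper's: analyze the update in three stages (base/overflow re-split, load-balancer updates, translation of reassignments into recursive demands), bound each stage via the recourse guarantee of \Cref{lem:load-balancer} together with the congestion bounds established in the static analysis (\Cref{lem:det-routing-init}), and handle the addition-only case by noting that thresholds do not move so no existing client is ever reassigned. However, as written there is a genuine gap: the entire quantitative content of the lemma is the chain of inequalities that produces the specific constants $72$ and $320$, and you defer exactly that with ``tracking the constants carefully through the cascade gives the claimed $72$ and $320$.'' This is not a bookkeeping afterthought. The constants arise from a compounding effect you only gesture at: the reassignments returned by the \emph{overflow} load balancers $B^{\mathrm{ovf}}_{\sigma,i}$ in step 2 are re-injected as client additions \emph{and removals} into the \emph{direct} load balancers $B^{\mathrm{dir}}_{\sigma,i,j}$ in step 3, and \Cref{lem:load-balancer} weights removals by a factor $3$; chaining the step-1 bound ($3(|D^+|+|D^-|)+8L_{|\sigma|}(|E^+|+|E^-|)$) through step 2 ($18(\cdot)+72L_{|\sigma|}(\cdot)$) and step 3 (a further factor $4$ plus $32L_{|\sigma|}(\cdot)$) is what yields $72$ and $320$. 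Without carrying this out, the stated inequality is not established, and your per-stage accounting (``at most a constant number of client additions/removals,'' ``each reassignment generates at most two recursive demand changes'') cannot be checked against the claimed constants.

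Two of your asserted intermediate bounds are also not justified. For the direct load balancer you claim $\lceil |I|/|E|\rceil \le 4L_{|\sigma|}+1$ ``by the definition of $T^{\mathrm{ovf}}$,'' but $D^{\mathrm{dir}}_{\sigma,i,j}$ contains, in addition to the thresholded base demand, the rerouted overflow pairs $(c',d',\id)$ produced in step 2; bounding their contribution requires the congestion analysis of the static lemma (which gives $C^{\mathrm{dir}}_e \le 8L_{|\sigma|}$, not $4L_{|\sigma|}+1$), and with looser or different per-balancer ratios the final constants change. Second, the ``FIFO convention'' is something you introduce rather than a property of the algorithm being analyzed; the actual guarantee that pre-existing pairs never migrate between base and overflow in the addition-only case comes from the required-fill computation ($\mathrm{rf}$) in step 1 of \textsc{DynamicRecUpdate}, and a correct proof of the second claim (including that all new recursive identifiers are of the form $2\id$ or $2\id+1$ with $\id \in D^+_\sigma$) has to argue from that mechanism, not from a convention assumed on top of it. The addition-only argument you give is otherwise the right outline, but the main inequality needs the explicit cascade to count as a proof.
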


\begin{proof}
    We analyze the steps one by one.

    \addtocounter{algorithm}{-1}
\begin{algorithm}[H]
    \caption{$\textsc{DynDetRouter}$: Dynamic Routing Update (step 1/3)} \label{alg:det-dyn-routing-step1}
    \begin{algorithmic}[1]
            \algrestore{DynDetRouter}
            \LeftComment{Updates the routing with a change to the vertex and edge sets and demand}
            \LeftComment{Done in two calls for easy proof of streaming property}
            \Function{DynamicUpdate}{$V^+$, $V^-$, $E^+$, $E^-$, $D^+$, $D^-$}
                \State Let $I^{\Delta} \gets \Call{DynamicRecUpdate}{\epsilon, E^+, E^-, \emptyset, D^-}$
                \State \Call{DynamicRecUpdate}{$\epsilon, \emptyset, \emptyset, D^+, \emptyset$}
                \State \Return $I^{\Delta} \cup \{\id : (a, b, \id) \in D^+\}$
            \EndFunction
            \State
            \Function{DynamicRecUpdate}{$\sigma$, $E^+$, $E^-$, $D^+$, $D^-$}
                \LeftComment{Step 1: split the demand into direct demand and overflow}
                \State Let $D^{\mathrm{base+}}_{i, j}$, $D^{\mathrm{base-}}_{i, j}$, $D^{\mathrm{ovf+}}_{i, j}$ and $D^{\mathrm{ovf-}}_{i, j}$ be empty demands
                \For{$i, j \in [k]$, $i < j$ }
                    \LeftComment{Process removed demand}
                    \State Let $D^+_{i, j}$ and $D^-_{i, j}$ be the subdemands of $D^+$ and $D^-$ from $V_{\sigma + i}$ to $V_{\sigma + j}$
                    \State Let $D^{\mathrm{ovf}-}_{i, j} \gets D^{\mathrm{ovf}}_{\sigma, i, j} \cap D^-_{i, j}$ and $D^{\mathrm{base}-}_{i, j} \gets D^{\mathrm{base}}_{\sigma, i, j} \cap D^-_{i, j}$
                    \State Let $D^{\mathrm{ovf}}_{\sigma, i, j} \gets D^{\mathrm{ovf}}_{\sigma, i, j} \setminus D^{\mathrm{ovf}-}_{\sigma, i, j}$ and $D^{\mathrm{base}}_{\sigma, i, j} \gets D^{\mathrm{base}}_{\sigma, i, j} \setminus D^{\mathrm{base}-}_{\sigma, i, j}$
                    \State
                    \LeftComment{Update threshold and process added demand}
                    \State Let $L_{|\sigma|} := \max(k, L) \cdot 20^{|\sigma|}$
                    \State Let $T^{\mathrm{ovf}} \gets \max(0, 4L_{|\sigma|} \cdot (|E_{\sigma, i, j} \cup E^+_{\sigma, i, j} \setminus E^-_{\sigma, i, j}| - \frac{1}{4} k^{d - |\sigma| - 1}))$
                    \State Let $\mathrm{rf} \gets \min(T^{\mathrm{ovf}} - |D^{\mathrm{base}}_{\sigma, i, j}|, |D^{\mathrm{ovf}}_{\sigma, i, j}| + |D^+_{i, j}|)$
                    \If{$\mathrm{rf} \geq 0$}
                        \State Let $D^{\mathrm{base+}}_{i, j}$ be a subdemand of $D^{\mathrm{ovf}}_{\sigma, i, j} \cup D^+_{i, j}$ of size $\mathrm{rf}$
                        \State Let $D^{\mathrm{ovf-}}_{i, j} \gets D^{\mathrm{ovf-}}_{i, j} \cup (D^{\mathrm{ovf}}_{\sigma, i, j} \cap D^{\mathrm{base+}}_{i, j})$
                        \State Let $D^{\mathrm{ovf+}}_{i, j} \gets D^+_{i, j} \setminus D^{\mathrm{base}+}_{i, j}$
                    \Else
                        \State Let $D^{\mathrm{ovf}+}_{i, j}$ be a subdemand of $D^{\mathrm{base}}_{\sigma, i, j}$ of size $-\mathrm{rf}$
                        \State Let $D^{\mathrm{base}-}_{i, j} \gets D^{\mathrm{base}-}_{i, j} \cup D^{\mathrm{ovf}+}_{i, j}$
                        \State Let $D^{\mathrm{ovf}+}_{i, j} \gets D^{\mathrm{ovf}+}_{i, j} \cup D^+_{i, j}$
                    \EndIf
                    \State $D^{\mathrm{ovf}}_{\sigma, i, j} \gets D^{\mathrm{ovf}}_{\sigma, i, j} \setminus D^{\mathrm{ovf}-}_{\sigma, i, j}$
                    \State $D^{\mathrm{base}}_{\sigma, i, j} \gets D^{\mathrm{base}}_{\sigma, i, j} \setminus D^{\mathrm{base}-}_{\sigma, i, j}$
                \EndFor
                \LeftComment{Update direct demand}
                \State Let $D^{\mathrm{dir}+} \gets \left(\bigcup_{i} D^{+}_{i, i}\right) \cup (\bigcup_{i < j} D^{\mathrm{base}+}_{i, j})$
                \State Let $D^{\mathrm{dir}-} \gets \left(\bigcup_{i} D^{-}_{i, i}\right) \cup (\bigcup_{i < j} D^{\mathrm{base}-}_{i, j})$
            \algstore{DynDetRouter}
    \end{algorithmic}
\end{algorithm}

    \begin{enumerate}
        \item[Step 1.] This step updates the split of the demand into base and overflow demand.
        
        Consider one pair of child clusters $i, j \in [k]$. The threshold $T^{\mathrm{ovf}} := \max(0, 4L_{|\sigma|} \cdot (|E_{\sigma, i, j}| - \frac{1}{4}k^{d - |\sigma| - 1})$ changes by at most $|\Delta(T^{\mathrm{ovf}})| \leq 4L_{|\sigma|} \cdot (|E^+_{\sigma, i, j}| + |E^-_{\sigma, i, j}|)$, and the required fill $\mathrm{rf}$ is bounded by $|\mathrm{rf}| \leq |D^+_{\sigma, i, j}| + |D^-_{\sigma, i, j}| + |\Delta(T^{\mathrm{ovf}})|$, as before the update either $T^{\mathrm{ovf}} = |D^{\mathrm{base}}_{\sigma, i, j}|$ or $|D^{\mathrm{ovf}}_{\sigma, i, j}| = 0$ held, and $T^{\mathrm{ovf}} \geq |D^{\mathrm{base}}_{\sigma, i, j}|$ always holds. Now, if the required fill $\mathrm{rf}$ is nonnegative, we have
        \begin{itemize}
            \item $|D^{\mathrm{base}-}_{i, j}| \leq |D^-_{i, j}|$
            \item $|D^{\mathrm{base}+}_{i, j}| \leq |\mathrm{rf}|$
            \item $|D^{\mathrm{ovf}-}_{i, j}| \leq |D^-_{i, j}| + |\mathrm{rf}|$
            \item $|D^{\mathrm{ovf}+}_{i, j}| \leq |D^+_{i, j}|$
        \end{itemize}
        Otherwise, if the required fill is negative,
        \begin{itemize}
            \item $|D^{\mathrm{base}-}_{i, j}| \leq |D^-_{i, j}| + |\mathrm{rf}|$
            \item $|D^{\mathrm{base}+}_{i, j}| \leq 0$
            \item $|D^{\mathrm{ovf}-}_{i, j}| \leq |D^-_{i, j}|$
            \item $|D^{\mathrm{ovf}+}_{i, j}| \leq |D^+_{i, j}| + |\mathrm{rf}|$
        \end{itemize}
        thus, in either case,
        \begin{align*}
            |D^{\mathrm{base}+}_{i, j}| + |D^{\mathrm{base}-}_{i, j}| + |D^{\mathrm{ovf}+}_{i, j}| + |D^{\mathrm{ovf}-}_{i, j}| &\leq 2(|D^+_{i, j}| + |D^-_{i, j}| + |\mathrm{rf}|)\\
            &\leq 3(|D^+_{i, j}| + |D^-_{i, j}|) + 8L_{|\sigma|} \cdot (|E^+_{\sigma, i, j}| + |E^-_{\sigma, i, j}|)
        \end{align*}
        and in particular, summing over $i, j \in [k]$ and noting that demand pairs in $D^+_{i, i}$ and $D^-_{i, i}$ do not contribute to the terms $ |D^{\mathrm{base}+}_{i, j}| + |D^{\mathrm{base}-}_{i, j}| + |D^{\mathrm{ovf}+}_{i, j}| + |D^{\mathrm{ovf}-}_{i, j}|$, we get
        \begin{equation} \label{eq:detroute-recourse-bound-1}
            |D^{\mathrm{dir}+}| + |D^{\mathrm{dir}-}| + \sum_{i, j} |D^{\mathrm{ovf}+}_{i, j}| + |D^{\mathrm{ovf}-}_{i, j}| \leq 3(|D^+| + |D^-|) + 8L_{|\sigma|} \cdot (|E^+_{\sigma}| + |E^-_{\sigma}|).
        \end{equation}
        Finally, if $E^+$, $E^-$ and $|D^-|$ are all empty, the thresholds cannot change, and each demand pair in $D^+$ can appear in at most one of $D^{\mathrm{base}+}_{i, j}$, $D^{\mathrm{ovf}+}_{i, j}$ and not at all in $D^{\mathrm{base}-}_{i, j}$ and $D^{\mathrm{ovf}-}_{i, j}$, thus after the step, $D^{\mathrm{dir}-}$ and each $D^{\mathrm{base}-}_{i, j}, D^{\mathrm{ovf}-}_{i, j}$ are empty, and
        \begin{equation*}
            |D^{\mathrm{dir}+}| + \sum_{i, j} |D^{\mathrm{ovf}+}_{i, j}| \leq |D^+|.
        \end{equation*}

    \addtocounter{algorithm}{-1}
\begin{algorithm}[H]
    \caption{$\textsc{DynDetRouter}$: Dynamic Routing Update (step 2/3)} \label{alg:det-dyn-routing-step2}
    \begin{algorithmic}[1]
            \algrestore{DynDetRouter}
                \LeftComment{Step 2: route overflow out of its endpoint clusters}
                \State Let $I^+$ and $I^-$ be empty sets and $E^{\mathrm{rem}}_i$ empty maps.
                \For{$i \in [k]$ }
                    \State Let $D^{\mathrm{ovf}+}_{i, -i} \gets \bigcup_{j \neq i} D^{\mathrm{ovf}+}_{\min(i, j), \max(i, j)}$
                    \State Let $D^{\mathrm{ovf}-}_{i, -i} \gets \bigcup_{j \neq i} D^{\mathrm{ovf}-}_{\min(i, j), \max(i, j)}$
                    \State $(I^+_i, I^-_i, B^{\mathrm{rem}}_i) \gets \Call{UpdateLoadBalancer}{B^{\mathrm{ovf}}_{\sigma, i}, D^{\mathrm{ovf}+}_{i, -i}, D^{\mathrm{ovf}-}_{i, -i}, E^+_{\sigma, i, -i}, E^-_{\sigma, i, -i}}$
                    \State $I^+ \gets I^+ \cup I^+_i$
                    \State $I^- \gets I^- \cup I^-_i$
                \EndFor
                \State
                \State Let $D^+_{\sigma + i}$ and $D^-_{\sigma + i}$ be empty demands for every $i \in [k]$
                \For{$(a, b, \id) \in I^-$ }
                    \State Let $V_{\sigma, i} \ni a$ and $V_{\sigma, j} \ni b$ be the child clusters containing $a$ and $b$
                    \State Let $(a', c') \gets B^{\mathrm{rem}}_i(\id)$
                    \State Let $(b', d') \gets B^{\mathrm{rem}}_j(\id)$
                    \State Add $(a, a', 2\id + 1)$ to $D^-_{\sigma + i}$
                    \State Add $(b', b, 2\id + 1)$ to $D^-_{\sigma + j}$
                    \State Add $(c', d', \id)$ to $D^{\mathrm{dir}-}$
                \EndFor
                \For{$(a, b, \id) \in I^+$ }
                    \State Let $V_{\sigma, i} \ni a$ and $V_{\sigma, j} \ni b$ be the child clusters containing $a$ and $b$
                    \State Let $(a', c') \gets B^{\mathrm{ovf}}_{\sigma, i}(\id)$
                    \State Let $(b', d') \gets B^{\mathrm{ovf}}_{\sigma, j}(\id)$
                    \State Add $(a, a', 2\id + 1)$ to $D^+_{\sigma + i}$
                    \State Add $(b', b, 2\id + 1)$ to $D^+_{\sigma + j}$
                    \State Add $(c', d', \id)$ to $D^{\mathrm{dir}+}$
                \EndFor
            \algstore{DynDetRouter}
    \end{algorithmic}
\end{algorithm}

        \item[Step 2.] This step updates the routing of overflow out of its endpoint clusters.

        Consider one cluster $V_{\sigma + i}$. By \Cref{lem:det-routing-init}, the maximum congestion $C^{\mathrm{ovf}}_e$ caused to any edge by routing the overflow demand is at most $6L_{|\sigma|}$, thus by the recourse of the load balancer data structure, 
        \begin{equation} \label{eq:detroute-recourse-bound-2}
            |I^+_i| + |I^-_i| \leq |D^{\mathrm{ovf}+}_{i, -i}| + 3|D^{\mathrm{ovf}-}_{i, -i}| + 12L_{|\sigma|}(|E^{+}_{\sigma, i, -i}| + |E^{-}_{\sigma, i, -i}|).
        \end{equation}
        After updating the load balancers and producing the sets $I^+$ and $I^-$, demand pairs in $I^+$ and $I^-$ are added to $D^{\mathrm{dir}-}$ or $D^{\mathrm{dir}+}$ respectively, and for each such demand pair, a corresponding demand is added to two recursive demands. Thus, as every edge $e \in E_\sigma$ appears in at most two sets $E_{\sigma, i, -i}$, after the second step, distinguishing the state of the update to $D^{\mathrm{dir}}$ after steps $1$ and $2$ by the subscripts $1$ and $2$, using \Cref{eq:detroute-recourse-bound-2} and \Cref{eq:detroute-recourse-bound-1}, we have 
        \begin{align} \label{eq:detroute-recourse-bound-3}
        \begin{split}
            |D^{\mathrm{dir}+}_2| + |D^{\mathrm{dir}-}_2|
                &\leq |I^+| + |I^-| + |D^{\mathrm{dir}+}_1| + |D^{\mathrm{dir}-}_1|\\
                &\leq 2 \left(\sum_{i, j} |D^{\mathrm{ovf}+}_{i, j}| + 3|D^{\mathrm{ovf}-}_{i, j}|\right) + 24L_{|\sigma|}(|E^{+}_{\sigma}| + |E^{-}_{\sigma}|) + |D^{\mathrm{dir}+}_1| + |D^{\mathrm{dir}-}_1|\\
                &\leq 18(|D^+| + |D^-|) + 72L_{|\sigma|}(|E^{+}_{\sigma}| + |E^{-}_{\sigma}|).
        \end{split}
        \end{align}
        When the update consists solely of added demand pairs, we have $|I^+| = |D^{\mathrm{ovf}+}|$ and $\{\id \in I^+\} = \{\id \in D^{\mathrm{ovf}+}\}$. After the second step, $D^{\mathrm{dir}-}$ is empty, $|D^{\mathrm{dir}+}| \leq |D^+|$ and $\{\id \in D^{\mathrm{dir}+}\} = \{\id \in D^+\}$.

\addtocounter{algorithm}{-1}
\begin{algorithm}[H]
    \caption{$\textsc{DynDetRouter}$: Dynamic Routing Update (step 3/3)} \label{alg:det-dyn-routing-step3}
    \begin{algorithmic}[1]
            \algrestore{DynDetRouter}
                \LeftComment{Step 3: route direct demand}
                \For{$i, j \in [k]$, $i < j$ }
                    \State Let $D^{\mathrm{dir}+}_{i, j}$ and $D^{\mathrm{dir}-}_{i, j}$ be the subdemands of $D^{\mathrm{dir}+}$ and $D^{\mathrm{dir}-}$ from $V_{\sigma + i}$ to $V_{\sigma + j}$
                    \State $(I^+_{i, j}, I^-_{i, j}, B^{\mathrm{rem}}_{i, j}) \gets \Call{UpdateLoadBalancer}{B^{\mathrm{dir}}_{\sigma, i, j}, D^{\mathrm{dir}+}_{i, j}, D^{\mathrm{dir}-}_{i, j}, E_{\sigma, i, j}^+, E_{\sigma, i, j}^-}$
                    \For{$(a, b, \id) \in I^-_{i, j}$}
                        \State Let $(a', b') \gets B^{\mathrm{rem}}_{i, j}(\id)$
                        \State Add $(a, a', 2\id)$ to $D^-_{\sigma + i}$
                        \State Add $(b', b, 2\id)$ to $D^-_{\sigma + j}$
                    \EndFor
                    \For{$(a, b, \id) \in I^+_{i, j}$}
                        \State Let $(a', b') \gets B^{\mathrm{dir}}_{\sigma, i, j}(\id)$
                        \State Add $(a, a', 2\id)$ to $D^+_{\sigma + i}$
                        \State Add $(b', b, 2\id)$ to $D^+_{\sigma + j}$
                    \EndFor
                \EndFor
                \State
                \LeftComment{Recursively route demand contained within clusters}
                \LeftComment{Return indices of demand pairs for which assigned path changed}
                \State Let $I^{\Delta} \gets \{\id : (a, b, \id) \in D^{\mathrm{dir}+} \cup D^{\mathrm{dir}-} \cup \bigcup_{i, j} I^+_{i, j} \cup I^-_{i, j}\}$
                \If{$|\sigma| + 1 < d$}
                    \For{$i \in [k]$ }
                        \State Let $D^{\mathrm{dir}+}_i$ and $D^{\mathrm{dir}-}_i$ be the subdemands of $D^{\mathrm{dir}+}$ and $D^{\mathrm{dir}-}$ contained in $V_{\sigma + i}$
                        \For{$(a, b, \id) \in D^{\mathrm{dir}-}_i$ }
                            \State Add $(a, b, 2\id)$ to $D^-_{\sigma + i}$
                        \EndFor
                        \For{$(a, b, \id) \in D^{\mathrm{dir}+}_i$ }
                            \State Add $(a, b, 2\id)$ to $D^+_{\sigma + i}$
                        \EndFor
                        \State Let $I^{\Delta}_i \gets \Call{DynamicRecUpdate}{\sigma + i, E^+, E^-, D^+_{\sigma + i}, D^-_{\sigma + i}}$
                        \For{$\id' \in I^{\Delta}_i$}
                            \State Add $\lfloor \id' / 2 \rfloor$ to $I^{\Delta}$
                        \EndFor
                    \EndFor
                \EndIf
                \State \Return $I^{\Delta}$
            \EndFunction
        \EndClass
    \end{algorithmic}
\end{algorithm}

        \item[Step 3.] By \Cref{lem:det-routing-init}, the maximum congestion $C^{\mathrm{dir}}_{e}$ caused to any edge $e$ by the third step is at most $8L_{|\sigma|}$. Thus, for every pair $\sigma + i$, $\sigma + j$ of child clusters, we have 
        \begin{equation} \label{eq:detroute-recourse-bound-4}
            |I^+_{i, j}| + |I^-_{i, j}|
                \leq |D^{\mathrm{dir}+}_{i, j}| + 3|D^{\mathrm{dir}-}_{i, j}| + 16L_{|\sigma|} (|E^+_{\sigma, i, j}| + |E^-_{\sigma, i, j}|).
        \end{equation}
        Then, for every demand pair in $I^+_{i, j}$ or $I^-_{i, j}$, a corresponding demand pair is added to the two recursive demands $D^+_{\sigma + i}$ and $D^-_{\sigma + j}$. Finally, for any demand pair in $D^{\mathrm{dir}+}$ or $D^{\mathrm{dir}-}$ contained in a child cluster, a corresponding demand pair is added to the recursive demand. Thus,
        \begin{align*}
            \sum_{i \in [k]} |D^+_{\sigma + i}| + |D^-_{\sigma + i}| &\leq 2(|I^+| + |I^-|) + 2\sum_{i, j} |I^+_{i, j}| + |I^-_{i, j}|\\
            &\leq 2(|I^+| + |I^-|) + 2 (|D^{\mathrm{dir+}}| + 3|D^{\mathrm{dir-}}|) + 32 L_\sigma (|E^+_\sigma| + |E^-_\sigma|)\\
            &\leq 4 (18(|D^+| + |D^-|) + 72L_{|\sigma|}(|E^{+}_{\sigma}| + |E^{-}_{\sigma}|)) + 32 L_\sigma (|E^+_\sigma| + |E^-_\sigma|)\\
            &= 72 (|D^+| + |D^-|) + 320 L_{|\sigma|}(|E^{+}_{\sigma}| + |E^{-}_{\sigma}|))\\
        \end{align*}
        where we first use \Cref{eq:detroute-recourse-bound-4} in the second inequality and then \Cref{eq:detroute-recourse-bound-3} in the third.
        
        Finally, if the update consists only of added demand pairs, it immediately follows combined with steps 1 and 2 that only $\id \in D^{\mathrm{dir}+} \cup D^{\mathrm{dir}-}$ can appear in the positive recursive demand updates $D^+_{\sigma + i}$, and that the negative recursive demand updates $D^-_{\sigma + i}$ are all empty. 
    \end{enumerate}
\end{proof}

Now, to obtain \Cref{thm:det-routing}, we simply apply \Cref{lem:det-routing-recourse} inductively on $|\sigma|$.

\begin{proof}(Of \Cref{thm:det-routing}.)
    By \Cref{lem:det-routing-init}, we have $\sum_{i \in [k]} |D^+_{\sigma + i}| + |D^-_{\sigma + i}| \leq 72 (|D^+_\sigma| + |D^-_\sigma|) + 320 \cdot \max(k, L) \cdot 20^{|\sigma|}  \cdot (|E^+_\sigma| + |E^-_\sigma|)$. Thus, for all $d' \in [d]$,
    \begin{equation*}
        \sum_{\sigma \in [k]^{d'}} |D^+_{\sigma}| + |D^-_{\sigma}| \leq 72^{d'} (|D^+| + |D^-|) + 320 \cdot \max(k, L) \cdot 72^{d'} \sum_{\sigma \in [k]^{\leq d'}} |E^+_\sigma| + |E^-_\sigma|
    \end{equation*}
    and in particular for $d' = d$,
    \begin{equation*}
        \sum_{\sigma \in [k]^{d}} |D^+_\sigma| + |D^-_\sigma| \leq 72^d (|D^+| + |D^-|) + 320 \cdot \max(k, L) \cdot 72^d \cdot (|E^+| + |E^-|)
    \end{equation*}
    as these (trivial) demands on single vertices determine all top-level demand pairs for which the assigned path changed, combining the above with the addition-only specific bound, the recourse is as desired.
    
    The total work of the algorithm has only $\poly(k, d)$ overhead on the sizes of these sets. 
\end{proof}

\section{Pruning in General Graphs Through Embedding} \label{sec:pruning-through-embedding}

In this section, we show how to use the self-pruning data structure to worst-case prune a general graph while maintaining an efficiently-sampleable oblivious routing on the graph, and a second result for maintaining an explicit routing for a dynamic demand. The former result is the following:

\begin{theorem}\label{thm:embedded-oblivious-routing}
Let $G = (V, E)$ be a graph, $H = (V'', E_H)$ a $(k, d)$-semi-hypercube on $V'' \subseteq V$ for $k \geq 16 d$ with an embedding $\Pi$ of the edge set $E_H$ into $G$ of congestion $\kappa$ and path length $h$, such that every vertex in $V$ is on some embedding path. Then, there is a deterministic data structure for maintaining a remaining vertex set $V' \subseteq V$ (initially $V' = V$) and an oblivious routing $R$ on $G[V']$ with
\begin{itemize}
    \item congestion $\kappa h \cdot k \cdot 2^{\bigO(d)}$ on $1$-load demands on $V'$,
    \item path length $\bigO(h \cdot d^3)$, and
    \item efficiently sampleable paths: a path can be sampled from $R(s, t)$ with expected work $\bigO(h \cdot d^3)$ 
\end{itemize}
under pruning updates: the query to the data structure gives an edge $e^{\mathrm{del}} \in E[V']$ to be deleted. The algorithm responds by selecting vertices $V_{\mathrm{trim}} \subseteq V'$ to \textit{prune}. The remaining vertex set is updated to $V' \gets V' \setminus (V_{\mathrm{trim}} \cup \{v\})$ and the edge set to $E \gets E - e^{\mathrm{del}}$.

The data structure has the following guarantees:
\begin{itemize}
    \item Worst-case pruning ratio: $|V_{\mathrm{trim}}| \leq \kappa h \cdot k \cdot \bigO(\log n)^{2d}$.
    \item Work per update: $\kappa h \cdot \poly(k \log^d(n))$.
\end{itemize}
\end{theorem}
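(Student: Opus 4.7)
The plan is to use the semi-hypercube self-pruning of \Cref{thm:shc-strong-pruning} as a black box and push it through the embedding $\Pi$, combined with the oblivious routing of \Cref{thm:obliv-routing-general} on the remaining semi-hypercube. I initialize the self-pruning data structure on $H$ with threshold $\tau = \Theta(1/d)$, maintaining a set $V_H \subseteq V''$ such that $H[V_H]$ is always a $(k, d, \tau)$-semi-hypercube. Since every $v \in V$ lies on some embedding path, I fix once and for all an arbitrary embedding path $\pi(v) \in \Pi$ containing $v$, together with an endpoint $\mathrm{rep}(v) \in V''$ of that path. The $G$-side vertex set is maintained as $V' := \{v \in V : \pi(v) \text{ is intact and } \mathrm{rep}(v) \in V_H\}$, which is easy to keep up to date with the usual reverse indices from edges to embedding paths and from hypercube vertices to their $\mathrm{rep}$-preimages.

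To process a deletion of $e^{\mathrm{del}} \in E[V']$, I enumerate the set $B$ of at most $\kappa$ hypercube edges whose embedding path contains $e^{\mathrm{del}}$ (this is where the congestion bound of $\Pi$ is invoked). For each broken hypercube edge $e_H = (u, w) \in B$ with both endpoints still in $V_H$, I call $\textsc{Delete}(u)$ on the self-pruning data structure (either endpoint works). I then prune from $V'$ every $G$-vertex whose fixed embedding path is in $B$, together with every $G$-vertex whose representative was just removed from $V_H$ by the self-pruning data structure. A path query for a pair $s, t \in V'$ is answered by sampling, via \Cref{thm:obliv-routing-general}, a hypercube path $P_H$ from $\mathrm{rep}(s)$ to $\mathrm{rep}(t)$ in $H[V_H]$, projecting each hypercube edge of $P_H$ to its embedding path in $G$, and prepending/appending the subpaths of $\pi(s), \pi(t)$ connecting $s$ and $t$ to their representatives; this gives length $h \cdot \bigO(d^3) + 2h = \bigO(h d^3)$ and expected sampling work $\bigO(h d^3)$.

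The nontrivial accounting is translating bounds between the $H$-side and the $G$-side. Each hypercube vertex $u \in V''$ is the representative of at most $\deg_H(u) \cdot h \leq kdh$ vertices of $V$, so a $1$-load demand on $V'$ induces on $V_H$ a demand that is $\bigO(h)$ times the degree node-weighting on $H[V_H]$; \Cref{thm:obliv-routing-general} then routes it with hypercube-edge congestion $\bigO(h \cdot kd \cdot 10^d)$, and projecting through $\Pi$ multiplies by $\kappa$ to yield the advertised $\kappa h k \cdot 2^{\bigO(d)}$. For the pruning ratio, each $G$-edge deletion triggers at most $\kappa$ calls to $\textsc{Delete}$, each pruning $\bigO(\log n / \tau)^d = \bigO(d \log n)^d$ further hypercube vertices; pulling back through $\mathrm{rep}$ (factor of $kdh$) and adding the direct $\kappa h$ loss from broken embedding paths yields the claimed $\kappa h k \cdot \bigO(\log n)^{2d}$, using $d = \bigO(\sqrt{\log n / \log \log n})$ to absorb polynomial-in-$d$ factors into the $(\log n)^d$ exponent.

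The main obstacle I anticipate is the consistency argument: I must ensure that $H[V_H]$ really coincides with the restriction of the embedded semi-hypercube to the $V''$-vertices that survive in $V'$, so that the $H$-side routing maps to legal paths in $G[V']$. The key observation is that whenever an embedding path $\pi(u, w)$ breaks, either its endpoints are already outside $V_H$ or one of them is explicitly fed to the self-pruning data structure in the same deletion step; either way $(u, w)$ is absent from $H[V_H]$ after the update, matching the fact that its image in $G$ is now broken. Once this invariant is in place, the work bound $\kappa h \cdot \poly(k \log^d n)$ follows by straightforward bookkeeping from the $\poly(k (\log n / \tau)^d)$ self-pruning work bound of \Cref{thm:shc-strong-pruning} invoked $\kappa$ times per update.
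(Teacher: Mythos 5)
There is a genuine gap, and it is exactly at the point you flag as "the main obstacle": your fix-once-and-for-all assignment $v \mapsto \pi(v), \mathrm{rep}(v)$, with $V' := \{v : \pi(v)\text{ intact and } \mathrm{rep}(v) \in V_H\}$, only enforces consistency at the level of hypercube \emph{edges} (a broken embedding path implies its hypercube edge leaves $H[V_H]$), but not at the level of \emph{vertices of $G[V']$}. The routing paths you return consist of connecting subpaths of $\pi(s), \pi(t)$ and projections $\Pi(e_H)$ of hypercube edges that survive in $H[V_H]$, and all interior vertices of these paths must still lie in $V'$ for the path to live in the induced graph $G[V']$. Under your rule they need not: a vertex $x$ in the middle of a surviving embedding path $\Pi(e_H)$ (or in the middle of $\pi(s)$ between $s$ and $\mathrm{rep}(s)$) is pruned as soon as \emph{its own} fixed path $\pi(x)$ breaks or $\mathrm{rep}(x)$ leaves $V_H$, even though $e_H$ is untouched. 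After that, every sampled path projecting through $\Pi(e_H)$ passes through $x \notin V'$, so the maintained object is not an oblivious routing on $G[V']$ at all; the congestion and validity claims fail. Patching this within the fixed-assignment framework is not benign either: to keep routes legal you would have to remove from $H[V_H]$ every hypercube edge whose embedding path contains a pruned $G$-vertex, but each such removal prunes more hypercube vertices, whose $\mathrm{rep}$-preimages you then prune from $V'$, which invalidates further embedding paths --- a cascade inside a single update with no obvious worst-case bound.

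The paper avoids this by defining $V'$ to be exactly the set of vertices lying on \emph{some} embedding path of an edge still in $E_H[V'']$, and by letting the connecting path/representative of each vertex be reassigned dynamically (maintaining the reverse indices $U_v, U_e$ you already mention). With that definition, a vertex is pruned from $V'$ only when it lies on no surviving embedding path, so pruning it can never punch a hole in a path that the routing still uses, every surviving embedding path stays inside $G[V']$, and the cascade cannot occur; the pruning count is then bounded by $(h+1)\,|E^-|$ with $|E^-| \le kd\,|V^-|$ and $|V^-| \le 2\kappa \cdot \bigO(\log n / \tau)^d$ from \Cref{thm:shc-strong-pruning}, giving the stated $\kappa h \cdot k \cdot \bigO(\log n)^{2d}$. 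The rest of your accounting --- $\tau = \Theta(1/d)$, at most $\bigO(\kappa)$ calls to $\textsc{Delete}$ per deleted edge, the projected demand being $\bigO(h)$ times a degree-respecting demand on $H[V'']$ so that \Cref{thm:obliv-routing-general} plus a $\kappa$-factor for projection yields congestion $\kappa h k \cdot 2^{\bigO(d)}$, and the length/sampling and work bounds --- matches the paper and is fine once the maintenance rule for $V'$ and the connecting paths is corrected as above.
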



Suppose that $V = V''$. To obtain the result, directly attempting to maintain that $H[V']$ is a semi-hypercube unfortunately does not work: when an edge $e^{\mathrm{del}}$ is deleted from the graph $G$, we need to remove from $H$ every edge $e'$ whose embedding path contains $e^{\mathrm{del}}$ by pruning an endpoint of $e'$. However, this would result in more embedding paths not being contained in $G[V']$ anymore.

The solution is to maintain $V''$ as a subset of the remaining vertex set $V'$, guaranteeing that $H[V'']$ is a semi-hypercube, and that the embedding path of every edge in $H[V'']$ is contained in $G[V']$.

Routing on the semi-hypercube is easy, using \Cref{sec:rand-routing} for the oblivious routing or \Cref{sec:det-routing} for the explicit routing for a dynamic demand. To route in the whole remaining graph $G[V']$, we want to be able to route from vertices in $V' \setminus V''$ to $V''$ with low congestion. This can be done easily if the vertex $v \in V' \setminus V''$ is on some embedding path of an edge in $H[V'']$, as the endpoints of the embedding path are in $V''$, and the set of embedding paths have low congestion and are short (thus only a few vertices need to use any particular embedding path).

We indeed maintain the set $V'$ to be exactly the set of vertices on some embedding path of an edge in $H[V'']$. By this definition, when a vertex leaves $V'$, no embedding path in $H[V'']$ is no longer contained in $G[V']$. Thus, with the deletion of some edge $e^{\mathrm{del}}$, we can simply first update $V''$, then remove from $V'$ the vertices no longer on an embedding path, of which there are at most $kd \cdot h$ times the number of vertices that left $V''$, as each vertex in a $(k, d, \tau)$-semi-hypercube has degree at most $kd$. Since self-pruning has a worst-case pruning ratio guarantee, the pruning guarantee of $V'$ is worst-case as well.

In this section, we first show \Cref{thm:embedded-oblivious-routing} in \Cref{sec:pruning-obliv-maintain}, then its dynamic demand counterpart \Cref{thm:embedded-deterministic-routing} in \Cref{sec:pruning-explicit-maintain}. In \Cref{sec:appendix-general-graph-embedding}, we obtain the corollaries of the two theorems in general graphs without the structure of a given embedded semi-hypercube by combining the two results with multi-commodity flow to embed the initial semi-hypercube.

\subsection{Worst-Case Pruning With Oblivious Routing} \label{sec:pruning-obliv-maintain}

\begin{proof}(of \Cref{thm:embedded-oblivious-routing}). The claimed data structure is given in \Cref{alg:general-obliv-pruning-p1}. In this data structure, we maintain an additional vertex set $V'' \subseteq V'$ for which we guarantee that $H[V'']$ always remains a $(k, d, \tau)$-semi-hypercube for $\tau = \frac{1}{4350d}$. 
    The vertex sets $V'$ and $V''$ will be related as follows:
    \begin{itemize}
        \item For every edge $e' \in E_H[V'']$, the embedding path $\Pi(e')$ is contained in $G[V']$
        \item For every vertex $v \in V'$, there is some edge $e' \in E_H[V'']$ such that $v$ is on the embedding path $\Pi(e')$.
    \end{itemize}
    Since the embedding path $\Pi(e')$ of an edge $e'$ always has the same endpoints as $e'$ by definition, there thus is in $G[V']$ for each vertex $v \in V'$ a path to a vertex $v' \in V''$ that is the prefix of some embedding path $\Pi(e')$. We maintain for each vertex $v \in V'$ one arbitrary such path $P''_v$ that we call the \textit{connecting path} of $v$, as it connects $v$ to the easy-to-route-in $V''$. We call the path endpoint $v' \in V''$ the \textit{representative} of $v$.
    
    The routing path for a demand pair $(u, v, \id)$ then consist of three parts: the connecting path of $u$, the projection $\Pi(P')$ of an oblivious routing path $P'$ on the semi-hypercube sampled as in \Cref{sec:rand-routing}, and then the reverse of the connecting path of $v$.

\begin{algorithm}[H]
    \caption{$\textsc{PruneOblivRouter}$: Initialization and Path queries} \label{alg:general-obliv-pruning-p1}
    \begin{algorithmic}[1]
        \Class{PruneOblivRouter}
            \Data
                \State Constants $V, E, \Pi$
                \State Variables $V'$, $V''$, $E''$, $U_v$, $U_e$, $\conn_v$
                \State $\textsc{SelfPruningSHC}\ \mathrm{pruner}$
            \EndData

            \Function{Initialize}{$V, V'', E, E_H, \Pi, k, d$}
                \State $V' \gets V$
                \State $E'' \gets E_H$
                \For{$e'' \in E''$}
                    \For{vertices $v$ on $\Pi(e'')$} add $e''$ to $U_v$ \EndFor
                    \For{$e \in \Pi(e'')$} add $e''$ to $U_e$ \EndFor
                \EndFor
                \For{$v \in V$} $\conn_v \gets \text{ arbitrary member of } U_v$ \EndFor
                \State $\Call{pruner.initialize}{E'', k, d, \frac{1}{4350 d}}$
            \EndFunction

            \State
            \LeftComment{Returns endpoint in $V''$ of connecting path $P''(v)$ of vertex $v \in V'$}
            \Function{getRep}{$v$}
                \If{$v \in V''$} \Return $v$
                \Else\ \Return first endpoint $u'$ of the edge $(u', v') = \conn_v$ \EndIf
            \EndFunction
            
            \State
            \LeftComment{Returns connecting path $P''(v)$ of vertex $v \in V'$}
            \Function{getConnection}{$v$}
                \If{$v \in V''$} \Return empty path from $v$ to $v$
                \Else\ \Return the reverse of the prefix of $\Pi(\conn_v)$ ending at $v$ \EndIf
            \EndFunction

            \State
            \LeftComment{Samples a path between $u$ and $v$ from the oblivious routing}
            \Function{SamplePath}{$u, v$}
                \State Let $P''_u := \Call{getConnection}{u}$
                \State Let $P''_v := \Call{reverse}{\Call{getConnection}{v}}$
                \State Let $P' := \Call{pruner.SamplePath}{\Call{getRep}{u}, \Call{getRep}{v}}$ 
                \State \Return $\Call{concat}{P''_u, \Pi(P'), P''_v}$
            \EndFunction

            \algstore{PruneRouter}
\end{algorithmic}
\end{algorithm}

We can now already analyze the congestion and path length of the routing. Consider some $1$-load demand $D$ on $V'$. This demand has a natural projection $D'$ to a demand supported on $V''$, by replacing each demand endpoint with its representative. As the demand $D$ has load $1$, the demand $D'$ is at most $h$ times a unit demand on $H[V'']$, as each embedding path has length $h$, thus there are at most $h - 1$ vertices in $V' \setminus V''$ on it. Thus, by \Cref{thm:obliv-routing-general}, the oblivious routing of $D'$ on $H[V'']$ has congestion $h \cdot k \cdot 2^{\bigO(d)}$ and length $\bigO(d^3)$, where the $2^{\bigO(d)}$-term absorbs the multiplicative $d$-term in congestion. Projecting this routing of $D'$ on $H[V'']$ back to $G[V']$, the congestion increases by a $\kappa$-factor and the length by a $h$-factor, thus the resulting congestion is $\kappa h \cdot k \cdot 2^{\bigO(d)}$ and the length $\bigO(h \cdot d^3)$. Finally, each of the connecting paths has length at most $h$, and the connecting paths cause at most $\kappa h$ congestion to any edge, which does not affect the asymptotic congestion and length of the final routing.

\addtocounter{algorithm}{-1}
\begin{algorithm}[H]
\caption{$\textsc{PruneOblivRouter}$: Pruning} \label{alg:general-obliv-pruning-p2}
\begin{algorithmic}[1]
            \algrestore{PruneRouter}
            
            \Function{PruneStep}{$e^{\mathrm{del}}$}
                \LeftComment{Delete vertices incident to an edge in $H[V'']$ whose embedding path contains the deleted edge}
                \LeftComment{Prune to maintain $H[V'']$, determining vertices and edges $V^-$ and $E^-$ to remove}
                \State Let $V^{\mathrm{aff}} := \Call{endpoints}{U_{e^{\mathrm{del}}}}$
                \State Let $V^- \gets \emptyset$
                \For{$v^{\mathrm{aff}} \in V^{\mathrm{aff}}$}
                    \State $V^- \gets V^- \cup \Call{pruner.Delete}{v^{\mathrm{aff}}}$
                \EndFor
                \State Let $E^- \gets$\ the set of edges in $E''$ incident to some $v \in V^-$
                \State $V'' \gets V'' \setminus V^-$
                \State $E'' \gets E'' \setminus E^-$
                \State
                \LeftComment{Build maps of edges to remove from $U_e$'s and $U_v$'s respectively, then update $U_e$, $U_v$}
                \State Let $\mathrm{edge\_dels} := \bigcup_{e' \in E^-} \{(e \rightarrow e') : e \in \Pi(e')\}$
                \State Let $\mathrm{vertex\_dels} := \bigcup_{e' \in E^-} \{(v \rightarrow e') : \text{$v$ is on $\Pi(e')$}\}$
                \For{$e \in \Call{keys}{\mathrm{edge\_dels}}$} $U_e \gets U_e \setminus \mathrm{edge\_dels}(e)$ \EndFor
                \For{$v \in \Call{keys}{\mathrm{vertex\_dels}}$} $U_{v} \gets U_{v} \setminus \mathrm{vertex\_dels}(v)$ \EndFor
                \State
                \LeftComment{Prune from $V'$ the vertices no longer on an embedding path}
                \State Let $V^{\mathrm{trim}} := \{\text{key } v \in \mathrm{vertex\_dels} : U_v = \emptyset\}$
                \State $V' \gets V' \setminus V^{\mathrm{trim}}$
                \State
                \LeftComment{Update connecting paths}
                \For{$v \in \Call{keys}{\mathrm{vertex\_dels}}$}
                    \If{$v \not\in V^{\mathrm{trim}}$ and $\conn_v \not\in U_v$} $\conn_v \gets \text{ arbitrary member of } U_v$ \EndIf
                \EndFor
                \State
                \LeftComment{Return vertices pruned from $V'$}
                \State \Return $V^{\mathrm{trim}}$
            \EndFunction
        \EndClass
    \end{algorithmic}
\end{algorithm}
    
    We now consider the maintenance of the sets $V'$ and $V''$ during pruning steps. When an edge $e^{\mathrm{del}}$ is deleted, the embedding paths containing $e^{\mathrm{del}}$ are no longer contained in $G[V']$. Let $V^{\mathrm{aff}}$ be the set of vertices in $V''$ incident to some edge $e' \in E_H[V'']$ with an embedding path $\Pi(e')$ containing $e^{\mathrm{del}}$. This vertex set has size at most $2 \kappa$, as the congestion of the embedding is $\kappa$.

    We want to remove the vertices $V^{\mathrm{aff}}$ from $V''$. To maintain that the set $V''$ remains a $(k, d, \tau)$-semi-hypercube, we maintain the worst-case pruning data structure of \Cref{thm:shc-strong-pruning} on $H$. When we remove $V^{\mathrm{aff}}$, this data structure additionally gives some additional vertices to be pruned; let $V^-$ be the total resulting set of vertices to remove. By \Cref{thm:shc-strong-pruning}, the size of this set is at most
    \begin{equation*}
        |V^-| = \bigO\left(\frac{\log(n)}{\tau}\right)^d |V^{\mathrm{aff}}| = \kappa \cdot \bigO(d \log n)^d \leq \kappa \cdot \bigO(\log n)^{2d}.
    \end{equation*}
    
    Let $E^-$ be the set of edges in $E_H[V'']$ incident to some vertex in $V^-$, this is the set of edges that leave $H[V'']$ when the vertices $V^-$ are removed from $V''$. Note that every edge in $E_H[V'']$ whose embedding path contained $e^{\mathrm{del}}$ is part of these leaving edges $E^-$. The set $E^-$ has size at most $kd \cdot |V^-|$.

    As a result of the edges $E^-$ leaving $E_H[V'']$, some vertices $v \in V'$ no longer appear on any embedding path $\Pi(e')$ of an edge $e' \in E_H[V''] \setminus E^-$. We let the set of pruned vertices $V^{\mathrm{prune}}$ be exactly the set of these vertices. By definition, applying $V' \gets V' \setminus V^{\mathrm{prune}}$ does not result in any embedding path of an edge in $E_H[V''] \setminus E^-$ no longer being contained in $G[V']$, and for every remaining vertex $v \in V'$, there will be some edge in $E_H[V'']$ whose embedding path $v$ appears on. Since there are at most $h + 1$ vertices on any embedding path, the size of $V^{\mathrm{prune}}$ is at most
    \begin{equation*}
        |V^{\mathrm{prune}}| \leq (h + 1) \cdot |E^-| \leq \kappa h \cdot k \cdot \bigO(\log n)^{2d},
    \end{equation*}
    where the $\bigO(1)^d$-term absorbs the multiplicative $d$.

    To bound work, note that the pruning data structure is called $|V^{\mathrm{aff}}| \leq 2\kappa$ in an update, thus the work taken by those calls is $\kappa \cdot \poly(k \log^d(n))$. The work outside those calls to the pruning data structure is dominated by updating the sets $U_v$ and $U_e$, which takes work $|\mathrm{vertex\_dels}| \poly \log(n) \leq (\kappa h \cdot k \cdot \bigO(\log n)^{2d}) \poly(\log n)$. Thus, the update takes work $\kappa h \cdot \poly(k \log^d(n))$. 
\end{proof}

\subsection{Worst-Case Pruning With Explicit Routing of Dynamic Demand} \label{sec:pruning-explicit-maintain}

In this section, instead of an oblivious routing, we consider maintaining on $G[V']$ the routing of a demand of a bounded load $L$ undergoing dynamic updates. For this, we additionally maintain on the graph $G[V'']$ an instance of the dynamic deterministic routing data structure of \Cref{sec:det-dyn-routing}. 

\begin{theorem}\label{thm:embedded-deterministic-routing}
Let $G = (V, E)$ be a graph, $H = (V'', E_H)$ a $(k, d)$-semi-hypercube on $V'' \subseteq V$ for $k \geq 16 d$ with an embedding $\Pi$ of the edge set $E_H$ into $G$ of congestion $\kappa$ and path length $h$, such that every vertex in $V$ is on some embedding path. Let $D$ be a $L$-load demand on $G$. Then, there is a deterministic algorithm for maintaining 
\begin{itemize}
    \item a remaining vertex set $V' \subseteq V$ (initially $V' = V$), such that $G[V']$ is a $\kappa h \cdot k \cdot 2^{\bigO(d)}$-congestion $h \cdot \bigO(d^3)$-length router for the all-one node weighting $\mathbbm{1}_{V'}$ on the remaining vertex set, and
    \item an explicit routing of $D$ over paths in $G[V']$ of congestion $L \cdot \kappa h \cdot k \cdot 2^{\bigO(d)}$ and length $h \cdot 2^{\bigO(d)}$, supporting queries for the path $P$ assigned to a demand pair with work $|P| \poly \log(n)$,
\end{itemize}
under updates each consisting of two steps:
\begin{itemize}
    \item Pruning step: Here, the query to the data structure gives an edge $e^{\mathrm{del}} \in E[V']$ to be deleted. The algorithm responds by selecting vertices $V_{\mathrm{trim}} \subseteq V'$ to \textit{prune}. The remaining vertex set is updated to $V' \gets V' \setminus (V_{\mathrm{trim}} \cup \{v\})$ and the edge set to $E \gets E - e^{\mathrm{del}}$. 
    \item Rerouting step: Here, the query to the data structure gives sets of demand pairs $D^+$ and $D^-$ to respectively add and remove from the demand, such that $D \gets D \cup D^+ \setminus D^-$ is supported with load $L$ on the post-pruning-step remaining vertex set $V'$. The algorithm responds by updating the routing, and returning the set $I^{\Delta}$ of identifiers of demand pairs for which the assigned path changed.
\end{itemize}
The algorithm has the following worst-case guarantees:
\begin{itemize}
    \item Pruning step recourse: $|V_{\mathrm{trim}}| \leq \kappa h \cdot k \cdot \bigO(\log n)^{2d}$.
    \item Rerouting step recourse: $|I^{\Delta}| \leq L \cdot \kappa h \cdot k^2 \cdot \bigO(\log n)^{2d} + |D^-| \cdot \bigO(1)^d + |D^+|$.
    \item Work per update: $(L \cdot \poly(\kappa h \cdot k \log^d(n))) \cdot (L + |D^+| + |D^-|)$.
\end{itemize}
\end{theorem}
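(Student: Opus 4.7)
I would mirror the construction used in the proof of \Cref{thm:embedded-oblivious-routing}, but additionally maintain an instance of the deterministic dynamic router from \Cref{thm:det-routing} on the semi-hypercube $H[V'']$ tracking an explicit routing of a \emph{projected} demand $D''$ on $V''$. As before, I keep $V'' \subseteq V'$ with $H[V'']$ a $(k,d,\tau)$-semi-hypercube via the self-pruning data structure, such that (i) for every $e' \in E_H[V'']$ the embedding path $\Pi(e')$ lies in $G[V']$, and (ii) every $v \in V'$ lies on some $\Pi(e')$. Each $u \in V'$ has a representative $\mathrm{rep}(u) \in V''$: the first endpoint of the embedding path realizing its connecting path. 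The projected demand consists of one triple $(\mathrm{rep}(u), \mathrm{rep}(v), \id)$ for each $(u, v, \id) \in D$. A path query for $(u, v, \id)$ returns the concatenation of $u$'s connecting path, the $\Pi$-projection of the path assigned to $\id$ by the det-router on $H[V'']$, and the reverse of $v$'s connecting path; each component is computable in work linear in its length times $\poly\log n$, giving the claimed $|p|\poly\log(n)$ query time.

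\textbf{Quality bounds.} A vertex $u' \in V''$ is the representative of at most $(h+1)\deg_{H[V'']}(u') \le (h+1)kd$ vertices of $V'$, since all such vertices sit on embedding paths of length $h$ incident to $u'$ in $E_H[V'']$. Hence $D''$ has load at most $L(h+1)kd$. Applying \Cref{thm:det-routing} with this load bound, its routing on $H[V'']$ has congestion $\max(k, Lhkd)\cdot 2^{O(d)}$ and path length $2^{O(d)}$. Composing with the embedding $\Pi$ (congestion $\kappa$, length $h$) and the two length-$\le h$ connecting paths yields the advertised $L\kappa h k\cdot 2^{O(d)}$ congestion and $h\cdot 2^{O(d)}$ length on $G[V']$ after absorbing polynomial $d$-factors into $2^{O(d)}$. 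The claim that $G[V']$ is a $(\kappa h k\cdot 2^{O(d)})$-congestion $(h\cdot \bigO(d^3))$-length router for $\mathbbm{1}_{V'}$ follows exactly as in \Cref{thm:embedded-oblivious-routing} by using the oblivious routing of \Cref{thm:obliv-routing-general} on $H[V'']$ for the router existential.

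\textbf{Update handling and recourse.} For the pruning step, I carry out the identical procedure of \Cref{thm:embedded-oblivious-routing}: extract $V^{\mathrm{aff}}$, feed it to the self-pruning data structure to obtain $V^- \subseteq V''$ with $|V^-|\le \kappa\cdot \bigO(\log n)^{2d}$, determine the lost edges $E^-$ of $H[V'']$ with $|E^-|\le kd\cdot|V^-|$, and prune from $V'$ the vertices no longer on any surviving embedding path (at most $\kappa h k\cdot \bigO(\log n)^{2d}$ of them). Simultaneously, I invoke the det-router update with the edge removals $E^-$, and, for each surviving $v \in V'$ whose connecting path became invalid, reassign $\mathrm{rep}(v)$ and register the corresponding $\id$-remove/$\id$-insert pairs as $D''$-updates. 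The rerouting step simply projects the caller's $D^+, D^-$ onto $D''$ and invokes the det-router update, then returns the union of the indices reported changed by the det-router and the indices whose connecting-path prefix/suffix was modified in the pruning step. Plugging the bound $|E^-| \le kd\kappa\cdot \bigO(\log n)^{2d}$ and the $D''$-load $L(h+1)kd$ into the recourse guarantee of \Cref{thm:det-routing} yields a pruning-induced contribution of $L\kappa h k^2\cdot \bigO(\log n)^{2d}$, while the caller's $D^-, D^+$ contribute $|D^-|\cdot\bigO(1)^d$ and $|D^+|$ respectively, matching the stated bound. Work bounds follow by multiplying each unit of recourse by the $\poly(\kappa h\cdot k\log^d n)$-per-operation cost of the self-pruning and det-router structures.

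\textbf{Main obstacle.} The delicate point is controlling the recourse generated by representative reassignments in the pruning step: deleting one embedding path can force up to $O(h)$ of its internal vertices in $V'\setminus V''$ to pick new representatives, and each reassignment is, for the det-router, a full delete-then-insert of all load-$L$ demand at that vertex. I expect the correct charging argument to route each reassignment to the specific edge of $E^-$ that invalidated its connecting path, so that the $h$-factor enters the recourse exactly once (producing the $k^2$, not $k^3$, in the bound). Verifying this bookkeeping carefully --- in particular that no identifier is double-charged when a vertex lies on several pruned embedding paths --- is where the proof will require the most care.
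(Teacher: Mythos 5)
Your proposal is correct and follows essentially the same route as the paper: maintain $V''\subseteq V'$ via the self-pruning structure, project the demand through connecting-path representatives onto $H[V'']$, run the deterministic dynamic router of \Cref{thm:det-routing} there, and compose with the embedding, with the load bound $L(h+1)kd$ on the projected demand and $|E^-|\le kd\,|V^-|$ driving the recourse. The ``main obstacle'' you flag is resolved exactly as you anticipate: the paper simply bounds the number of connecting-path reassignments per pruning step by $(h+1)|E^-|$ (each vertex is reassigned at most once, so over-counting across multiple pruned embedding paths is harmless), feeds the resulting at-most-$L$ demand changes per reassignment into the router's recourse bound, and the dominant $L\kappa h k^2$ term comes from the load $L'=\bigO(Lhkd)$ times $|E^-|$, just as in your accounting; the only cosmetic difference is that the paper defers the router update (with the accumulated $V^-,E^-,{D'}^\pm$) entirely to the rerouting step rather than invoking it during the pruning step.
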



\begin{proof}
    Exactly as in \Cref{alg:general-obliv-pruning-p1}, we maintain an additional vertex set $V'' \subseteq V'$ for which we guarantee that $H[V'']$ always remains a $(k, d, \tau)$-semi-hypercube for $\tau = \frac{1}{4350d}$. We also again maintain for each vertex $v \in V'$ a connecting path that is the prefix of some embedding path, and again define the representative $v' \in V''$ of $v$ as the endpoint of that path. These connecting paths give a natural projection of the demand $D$ on $V'$ to a demand $D'$ on $V''$, with a demand pair $(u, v, \id)$ being projected to $(u', v', \id)$ where $u'$ is the representative of $u$ and $v'$ the representative of $v$. We maintain a semi-hypercube dynamic routing data structure of \Cref{sec:det-dyn-routing} to maintain a routing of the demand $D'$ on $H[V'']$.
    
    The routing path for a demand pair $(u, v, \id)$ then consist of three parts: the connecting path of $u$, the projection $\Pi(P')$ of the routing path $P'$ for the projected demand pair $(u', v', \id)$ in the semi-hypercube routing data structure, and then the reverse of the connecting path of $v$.  
    
    The routing properties of the remaining vertex set $V'$ follow from \Cref{thm:embedded-oblivious-routing}, as the set is maintained exactly the same way between the two algorithms.

    \begin{algorithm}[H]
    \caption{$\textsc{PruneRouter}$: Initialization and Path queries} \label{alg:general-explicit-pruning-p1}
    \begin{algorithmic}[1]
        \Class{PruneRouter}
            \Data
                \State Constants $V, E, \Pi$
                \State Variables $V'$, $V''$, $E''$, $U_v$, $U_e$, $\conn_v$, $V^-$, $E^-$, $D$, ${D'}^+$, ${D'}^-$
                \State $\textsc{SelfPruningSHC}\ \mathrm{pruner}$
                \State $\textsc{DynDetRouter}\ \mathrm{router}$
            \EndData

            \Function{Initialize}{$V, V'', E, E_H, \Pi, D, k, d$}
                \State $V' \gets V$
                \State $E'' \gets E_H$
                \State $V^-, E^-, {D'}^+, {D'}^- \gets \emptyset$
                \For{$e'' \in E''$}
                    \For{vertices $v$ on $\Pi(e'')$} add $e''$ to $U_v$ \EndFor
                    \For{$e \in \Pi(e'')$} add $e''$ to $U_e$ \EndFor
                \EndFor
                \For{$v \in V$} $\conn_v \gets \text{ arbitrary member of } U_v$ \EndFor
                \State Let $\tau := \frac{1}{4350 d}$
                \State $\Call{pruner.initialize}{E'', k, d, \tau}$
                \State $\Call{router.initialize}{V'', E'', D, k, d, \tau}$
            \EndFunction

            \State
            \LeftComment{Returns endpoint in $V''$ of connecting path $P''(v)$ of vertex $v \in V'$}
            \Function{getRep}{$v$}
                \If{$v \in V''$} \Return $v$
                \Else\ \Return first endpoint $u'$ of the edge $(u', v') = \conn_v$ \EndIf
            \EndFunction
            
            \State
            \LeftComment{Returns connecting path $P''(v)$ of vertex $v \in V'$}
            \Function{getConnection}{$v$}
                \If{$v \in V''$} \Return empty path from $v$ to $v$
                \Else\ \Return the reverse of the prefix of $\Pi(\conn_v)$ ending at $v$ \EndIf
            \EndFunction

            \State
            \LeftComment{Returns path $P$ assigned to demand pair $(u, v, \id) \in D$}
            \Function{GetPath}{$u, v, \id$}
                \State Let $P''_u := \Call{getConnection}{u}$
                \State Let $P''_v := \Call{reverse}{\Call{getConnection}{v}}$
                \State Let $P' := \Call{router.GetPath}{\epsilon, (\Call{getRep}{u}, \Call{getRep}{v}, \id)}$
                \State \Return $\Call{concat}{P''_u, \Pi(P'), P''_v}$
            \EndFunction

            \algstore{PruneRouter}
    \end{algorithmic}
    \end{algorithm}

    \addtocounter{algorithm}{-1}
    \begin{algorithm}[H]
    \caption{$\textsc{PruneRouter}$: Pruning and Rerouting Steps} \label{general-explicit-pruning-p2}
    \begin{algorithmic}[1]
            \algrestore{PruneRouter}
            
            \Function{PruneStep}{$e^{\mathrm{del}}$}
                \LeftComment{Delete vertices incident to an edge in $H[V'']$ whose embedding path contains the deleted edge}
                \LeftComment{Prune to maintain $H[V'']$, determining vertices and edges $V^-$ and $E^-$ to remove}
                \State Let $V^{\mathrm{aff}} := \Call{endpoints}{U_{e^{\mathrm{del}}}}$
                \State $V^- \gets \Call{pruner.Delete}{V^{\mathrm{aff}}} \cup V^{\mathrm{aff}}$
                \State $E^- \gets$\ the set of edges in $E''$ incident to some $v \in V^-$
                \State $V'' \gets V'' \setminus V^-$
                \State $E'' \gets E'' \setminus E^-$
                \State
                \LeftComment{Build maps of edges to remove from $U_e$'s and $U_v$'s respectively, then update $U_e$, $U_v$}
                \State Let $\mathrm{edge\_dels} := \bigcup_{e' \in E^-} \{(e \rightarrow e') : e \in \Pi(e')\}$
                \State Let $\mathrm{vertex\_dels} := \bigcup_{e' \in E^-} \{(v \rightarrow e') : \text{$v$ is on $\Pi(e')$}\}$
                \For{$e \in \Call{keys}{\mathrm{edge\_dels}}$} $U_e \gets U_e \setminus \mathrm{edge\_dels}(e)$ \EndFor
                \For{$v \in \Call{keys}{\mathrm{vertex\_dels}}$} $U_{v} \gets U_{v} \setminus \mathrm{vertex\_dels}(v)$ \EndFor
                \State
                \LeftComment{Prune from $V'$ the vertices no longer on an embedding path}
                \State Let $V^{\mathrm{trim}} := \{\text{key } v \in \mathrm{vertex\_dels} : U_v = \emptyset\}$
                \State $V' \gets V' \setminus V^{\mathrm{trim}}$
                \State
                \LeftComment{Update connecting paths and get change to $D'$ from doing so}
                \State Let $D^\Delta_u := \{(u, v, \id) \in D : u \in \Call{keys}{\mathrm{vertex\_dels}} \text{ and } \conn_u \not\in U_u\}$
                \State Let $D^\Delta_v := \{(u, v, \id) \in D : v \in \Call{keys}{\mathrm{vertex\_dels}} \text{ and } \conn_v \not\in U_v\}$
                \State Let $D^\Delta := D^\Delta_u \cup D^\Delta_v$
                \State ${D'}^- \gets \{(\Call{getRep}{u}, \Call{getRep}{v}, \id) : (u, v, \id) \in D^\Delta\}$
                \For{$v \in \Call{keys}{\mathrm{vertex\_dels}}$}
                    \If{$v \not\in V^{\mathrm{trim}}$ and $\conn_v \not\in U_v$} $\conn_v \gets \text{ arbitrary member of } U_v$ \EndIf
                \EndFor
                \State ${D'}^+ \gets \{(\Call{getRep}{u}, \Call{getRep}{v}, \id) : (u, v, \id) \in D^\Delta\}$
                \State
                \LeftComment{Return vertices pruned from $V'$}
                \State \Return $V^{\mathrm{trim}}$
            \EndFunction

            \State

            \Function{RerouteStep}{$D^+, D^-$}
                \LeftComment{Push changes to the demand $D$ to changes in the demand $D'$}
                \LeftComment{The pushed removed demand ${D'}^-_{\mathrm{tmp}}$ should be removed either from $D' \setminus {D'}^-$ or ${D'}^+$}
                \State Let ${D'}^+_{\mathrm{tmp}} := \{(\Call{getRep}{u}, \Call{getRep}{v}, \id) : (u, v, \id) \in D^+\}$
                \State Let ${D'}^-_{\mathrm{tmp}} := \{(\Call{getRep}{u}, \Call{getRep}{v}, \id) : (u, v, \id) \in D^-\}$
                \State Let ${D'}_\cap = {D'}^+ \cap {D'}^-_{\mathrm{tmp}}$
                \State ${D'}^+ \gets {D'}^+ \cup {D'}^+_{\mathrm{tmp}} \setminus {D'}_\cap$
                \State ${D'}^- \gets {D'}^- \cup {D'}^-_{\mathrm{tmp}} \setminus {D'}_\cap$
                \State $D \gets D \cup D^+ \setminus D^-$
                \State
                \LeftComment{Update routing of the demand $D'$, return indices $R$ of demand pairs for which routing changed}
                \State Let $R \gets \Call{router.DynamicUpdate}{\emptyset, V^-, \emptyset, E^-, {D'}^+, {D'}^-}$
                \State $V^-, E^-, {D'}^+, {D'}^- \gets \emptyset$
                \State \Return $R$
            \EndFunction
        \EndClass
    \end{algorithmic}
    \end{algorithm}

    We can now already analyze the congestion and path length of the routing, as in the proof of \Cref{thm:embedded-oblivious-routing}. First, as the demand $D$ has load $L$, the demand $D'$ has load at most $L \cdot kd \cdot h$, as each vertex in $V''$ is incident to at most $kd$ edges in $E_H[V'']$ as $H$ as a $(k, d)$-semi-hypercube has regular degree $kd$, and the embedding path of each of those edges contains at most $h - 1$ vertices in $V'' \setminus V$. The dynamic semi-hypercube routing data structure thus routes this demand on $H[V'']$ with congestion $L \cdot h \cdot k \cdot 2^{\bigO(d)}$ and length $2^{\bigO(d)}$, where the $2^{\bigO(d)}$-term absorbs the multiplicative $d$-term in congestion. Projecting this routing of $D'$ on $H[V'']$ back to $G[V']$, the congestion increases by a $\kappa$-factor and the length by a $h$-factor, thus the resulting congestion is $L \cdot \kappa h \cdot k \cdot 2^{\bigO(d)}$ and length $h \cdot 2^{\bigO(d)}$. Finally, each of the connecting paths has length at most $h$, and the connecting paths cause at most $\kappa L$ congestion to any edge, which does not affect the asymptotic congestion and length of the final routing.

    The maintenance of the sets $V'$ and $V''$ during pruning steps is done exactly as in \Cref{alg:general-obliv-pruning-p2}, thus the same worst-case pruning guarantee $|V^{\mathrm{prune}}| \leq \kappa h \cdot k \cdot \bigO(\log n)^{2d}$ holds.
    
    Consider now the rerouting step. The path for a demand pair $(u, v, \id)$ changes if either the connecting path of $u$ or $v$ changes or the routing of $(u', v', \id')$ in the semi-hypercube routing data structure changes (which always occurs when the demand pair is added or removed). The number of vertices $u$ for which the connecting path changes during a pruning step is at most $(h + 1) \cdot |E^-|$, as we only change the connecting path if the embedding path the previous connecting path was a prefix of no longer exists due to the edge it is embedding leaving $E_H[V'']$, thus equivalently being part of $E^-$. Each changing connecting path results in at most $L$ added and removed demand pairs to $D'$. Denoting the demand pairs added to $D'$ by ${D'}^+$ and the demand pairs removed by ${D'}^-$, we thus have
    \begin{align*}
        |{D'}^+| &\leq |D^+| + L (s + 1) \cdot |E^-|,\\
        |{D'}^-| &\leq |D^-| + L (s + 1) \cdot |E^-|.
    \end{align*}
    Consider now the update $\Call{DynamicUpdate}{\emptyset, V^-, \emptyset, E^-, {D'}^+, {D'}^-}$ to the semi-hypercube routing data structure. Denoting by $L' = \bigO(L \cdot h \cdot kd)$ the load bound on $D'$, by \Cref{thm:det-routing}, the number of demand pairs in $D'$ whose routing path this affects is at most
    \begin{align*}
        |I^{\Delta}| &\leq (L' (|V^-| + |E^-|) + |{D'}^-|) \cdot \bigO(1)^d + |{D'}^+|\\
            &= (L' |E^-| + |D^-|) \cdot \bigO(1)^d + |D^+|\\
            &\leq L' \cdot \kappa \cdot k \cdot \bigO(\log n)^{2d} + |D^-| \cdot \bigO(1)^d + |D^+|\\
            &\leq L \cdot \kappa h \cdot k^2 \cdot \bigO(\log n)^{2d} + |D^-| \cdot \bigO(1)^d + |D^+|.
    \end{align*}

    It remains to bound the work of each update. The work of the update is dominated by the work of the calls to the pruning and routing data structures. The pruning data structure is called up to $|V^{\mathrm{aff}}| \leq 2\kappa$ times, thus the total work taken by those calls is at most $\kappa \cdot \poly(k \log^d(n))$. The work taken by the dynamic routing data structure update is
    \begin{align*}
        &(|E^-| + |V^-| + |{D'}^+| + |{D'}^-|) \cdot 2^{\bigO(d)} \cdot L' \poly(k)\\
        \leq& (L \cdot \kappa h \cdot \poly(k) \cdot \bigO(\log n)^{2d} + |D^+| + |D^-|) \cdot 2^{\bigO(d)} \cdot L' \poly(k)\\
        \leq& (L \cdot h \cdot \poly(k)) \cdot \left(L \cdot \kappa h \cdot \bigO(\log n)^{2d} + (|D^+| + |D^-|) \cdot 2^{\bigO(d)}\right)\\
        \leq& (L \cdot \poly(\kappa h \cdot k \log^d(n))) \cdot (L + |D^+| + |D^-|).
    \end{align*}
    This last term thus bounds the total work of an update.
\end{proof}
%

\section{Formal Self-Pruning} \label{sec:pruning-formal-proof}


In this section, we formally prove the properties of \Cref{alg:efficient-self-pruning}, a worst-case work-bounded version of \Cref{alg:shc-strong-self-pruning}. Specifically, we show the following result:

\selfpruning

The pruning strategy of \Cref{alg:efficient-self-pruning} is the same as that of \Cref{alg:shc-strong-self-pruning}, an overview of which is given in \Cref{sec:selfpruning-overview}. To obtain non-amortized work for each update, only one notable change is necessary: \textit{shadow mark sets}.

In \Cref{alg:shc-strong-self-pruning}, when a cluster $\sigma$ initially becomes the target child cluster of its parent child cluster $\sigma'$ and the set of marked vertices $M_\sigma$ are initialized inside $\Call{Retarget}{\sigma'}$, the construction of the mark set takes work proportional to the size of the cluster $\sigma$. Thus, a straightforward implementation of the algorithm would only give asymptotic, not worst-case, bounds on the work per update.

To solve this, we maintain every possible mark set that could be assigned to the cluster, from the start of the algorithm. Recall that $M_\sigma$ is either the set of vertices with cluster-$\sigma'$-degree strictly less than some value (as the degree is always an integer, specifically strictly less than $\lceil \frac{19}{20} k' \rceil$, where $k'$ is the number of sibling clusters the cluster $\sigma$ has), or the marked vertex set $M_{\sigma'}$ of the parent cluster in case the parent cluster is degenerate. Thus, we maintain sets $M^{\mathrm{shadow}}_{\sigma, \sigma', j}$, containing the subset of vertices in the cluster $\sigma$, with cluster-$\sigma'$-degree strictly less than $j$. We additionally store for each cluster its mark source information $\mathrm{minfo}_\sigma$, which just stores the ancestor cluster $\sigma'$ and the $j$ that determine the mark set $M_\sigma$. We initialize $\mathrm{minfo}_\sigma \gets \bot$ for all $\sigma$ as the mark set has not been initialized, and define $M^{\mathrm{shadow}}_{\sigma, \bot} = \emptyset$. 

These shadow mark sets also allow querying how many of the marked vertices in a cluster $\sigma$ are in a particular child cluster $\sigma + i$, which $\Call{Retarget}{\sigma}$ needs to know to select the next target child based on marks. Specifically, we have $|M_{\sigma + i, \mathrm{minfo}_\sigma}^{\mathrm{shadow}}| = |M_\sigma \cap V_{\sigma + i}|$. We additionally maintain the sizes $\mathrm{size}_\sigma := |V_\sigma|$ of clusters $\sigma$, which are used to check if a cluster is empty and for retarget to find the smallest nonempty child cluster. Finally, the change to \textsc{Retarget} to maintain the set of child clusters $S_\sigma$ that have not become target child clusters yet is just for simpler analysis, specifically for clusters $\sigma$ with leaf children, i.e. with $|\sigma| = d - 1$, when processing a delete-operation on a non-target child on that cluster, as in this case, when retargeting, the smallest child that has not been a target child yet might in fact be empty.


We next present in \Cref{sec:formal-self-pruning-algo} the algorithm, along with a proof of \Cref{thm:shc-strong-pruning} assuming \Cref{lem:child-sizes} showing that the algorithm maintains that each cluster has at most one critical child and and \Cref{lem:simple-mark-ratio} showing that the algorithm maintains that each cluster has a low ratio of marked vertices. We prove \Cref{lem:child-sizes} in \Cref{sec:pruning-proof-sizes}, and finally prove \Cref{lem:simple-mark-ratio} in \Cref{sec:pruning-proof-marks}. 

\begin{restatable*}{lemma}{childsizes}\label{lem:child-sizes}
    In \Cref{alg:efficient-self-pruning}, for any cluster $\sigma$, after any sequence of \textsc{Delete}-operations,
    \begin{itemize}
        \item No cluster $\sigma + i$ for $i \in S_\sigma$ is empty or $\tau$-critical.
        \item When the current target child $\sigma + t_\sigma$ became the target child, it was not $\tau$-critical.
    \end{itemize}
\end{restatable*}

\begin{restatable*}{lemma}{simplemarkratio}\label{lem:simple-mark-ratio}
In \Cref{alg:efficient-self-pruning}, initially and after every Delete-operation, for any nonempty cluster $\sigma$,
\begin{equation*}
    \frac{|M^{\mathrm{shadow}}_{\sigma, \mathrm{minfo}_\sigma}|}{|V_\sigma|} \leq \frac{1}{20}.
\end{equation*}
\end{restatable*}

\subsection{The Algorithm} \label{sec:formal-self-pruning-algo}

\begin{algorithm}[H]
    \caption{Efficient semi-hypercube self-pruning (part 1/2)}
    \label{alg:efficient-self-pruning}
    \begin{algorithmic}[1]
        \Class{EfficientSelfPruningSHC}
            \Data
                \State Constants $k, d, \rho, \mathrm{rcycle} \in \mathbb{N}$, $\tau \in (0, \frac{1}{4350 d})$, $E \subseteq \binom{[k]^d}{2}$
                \State Vertex set $V \subseteq [k]^d$
                \State Cluster sizes $\mathrm{size}_\sigma \in \{0, 1, \dots, k^{d - |\sigma|}\}$, for $\sigma \in [k]^{\leq d}$
                \State Target child clusters $t_\sigma \in [k]$, for $\sigma \in [k]^{< d}$
                \State Non-target child indices $S_\sigma \subseteq [k]$, for $\sigma \in [k]^{< d}$
                \State Shadow mark sets $M^{\mathrm{shadow}}_{\sigma, \sigma', j} \subseteq V_\sigma$, for $\sigma \in [k]^{\leq d}$, $\sigma' \in \ancs(\sigma)$, $j \in [k]$
                \State Mark set parameters $\mathrm{minfo}_\sigma \in \{\bot\} \cup (\ancs(\sigma) \times [k])$, for $\sigma \in [k]^{\leq d}$
            \EndData

            \Function{Initialize}{$k, d, \tau, E$}
                \State $\mathrm{rcycle} := 2d + 1$
                \State $\rho := \left\lceil 2H_k \cdot \mathrm{rcycle} / \tau \right\rceil$
                \State $V \gets [k]^d$
                \State $\mathrm{size}_\sigma \gets k^{d - |\sigma|}$ for all $\sigma \in [k]^{\leq d}$
                \State $t_\sigma \gets 1$ for all $\sigma \in [k]^{< d}$
                \State $S_\sigma \gets [k] \setminus \{1\}$ for all $\sigma \in [k]^{< d}$
                \State $M^{\mathrm{shadow}}_{\sigma, \sigma', j} \gets \emptyset$ for all $\sigma \in [k]^{\leq d}$, $\sigma' \in \ancs(\sigma)$, $j \in [k]$
                \State $\mathrm{minfo}_\sigma \gets \bot$ for all $\sigma \in [k]^{\leq d}$
            \EndFunction

            \State

            \LeftComment{Checks if the current target child has become empty, and selects new target if it has}
            \LeftComment{Maintaining the set $S_\sigma$ rather than taking all nonempty children is done for a simpler proof}
            \Function{Retarget}{$\sigma$}
                \While{$S_\sigma \neq \emptyset$ and $|\mathrm{size}_{\sigma + t_\sigma}| = 0$}
                    \If{$S_\sigma = \{i\}$}
                        \State $t_{\sigma} \gets i$
                        \State $\mathrm{minfo}_{\sigma + i} \gets \mathrm{minfo}_{\sigma}$
                    \ElsIf{$S_\sigma \neq \emptyset$}
                        \If{$|S_\sigma| = 1 \text{ (mod } \mathrm{rcycle}\text{)}$}
                            \State $t_{\sigma} \gets \arg\min_{i \in S_\sigma} \mathrm{size}_{\sigma + i}$
                        \Else
                            \State $t_{\sigma} \gets \arg\max_{i \in S_\sigma} |M_{\sigma + i, \mathrm{minfo}_\sigma}^{\mathrm{shadow}}|$
                        \EndIf
                        \State $\mathrm{minfo}_{\sigma + t_\sigma} \gets (\sigma, \lceil \frac{19}{20} (|S| - 1) \rceil)$
                    \EndIf
                    \State $S_\sigma \gets S_\sigma \setminus \{t_\sigma\}$
                \EndWhile
            \EndFunction
        \algstore{ParallelPruning}
    \end{algorithmic}
\end{algorithm}

\addtocounter{algorithm}{-1}
\begin{algorithm}[H]
    \caption{Efficient semi-hypercube self-pruning (part 2/2)}
    \label{alg:efficient-self-pruning-p2}
    \begin{algorithmic}[1]
        \algrestore{ParallelPruning}

            \Function{Trim}{$\sigma$}
                \If{$\sigma$ is a leaf}
                    \State \Call{ProcessRemoval}{$\sigma$}
                    \State \Return $\sigma$
                \Else
                    \State Let $v_{\mathrm{trim}} := \Call{Trim}{\sigma + t_\sigma}$
                    \State $\Call{Retarget}{\sigma}$
                    \State \Return $v_{\mathrm{trim}}$
                \EndIf
            \EndFunction

            \State
            
            \Function{Delete}{$v$}
                    \State \Call{ProcessRemoval}{$v$}
                \State Let $\sigma \gets v$, $V^\mathrm{rem} \gets \{v\}$
                \While{$|\sigma| > 0$}
                    \State $\sigma \gets \Call{Parent}{\sigma}$
                    \State $\Call{Retarget}{\sigma}$
                    \State Let $\mathrm{trim\_cou} := \Call{min}{\rho \cdot |V^\mathrm{rem}|, \mathrm{size}_{\sigma}}$
                    \For{$\mathrm{trim\_cou}$ times} $V^\mathrm{rem} \gets V^\mathrm{rem} \cup \{\Call{Trim}{\sigma}\}$ \EndFor
                \EndWhile
                \State \Return $V^\mathrm{rem} \setminus \{v\}$
            \EndFunction

            \State
            
            \Function{ProcessRemoval}{$v^{\mathrm{rem}}$}
                \State Let $V^{\mathrm{aff}} := \{v \in V : (v^{\mathrm{rem}}, v) \in E\}$
                \For{$v \in V^{\mathrm{aff}}$}
                    \State Let $\sigma$ be the lowest common ancestor of $v$ and $v^{\mathrm{rem}}$
                    \State Let $\mathrm{deg} := |\{u \in V_\sigma : (v, u) \in E\}|$
                    \For{descendant cluster $\sigma''$ of $\sigma$ containing $v$} $M^{\mathrm{shadow}}_{\sigma'', \sigma, \mathrm{deg}} \gets M^{\mathrm{shadow}}_{\sigma'', \sigma, \mathrm{deg}} \cup \{v\}$ \EndFor
                \EndFor
                \State $V \gets V \setminus v^{\mathrm{rem}}$
                \For{$\sigma \in (\{v^{\mathrm{rem}}\} \cup \ancs(v^{\mathrm{rem}}))$} $\mathrm{size}_\sigma \gets \mathrm{size}_\sigma - 1$ \EndFor
            \EndFunction
        \EndClass
    \end{algorithmic}
\end{algorithm}





\begin{lemma}\label{lem:selfpruning-work}
    The function $\Call{Delete}{V^{\mathrm{del}}}$ in \Cref{alg:efficient-self-pruning} has the following guarantees:
    \begin{itemize}
        \item Worst-case pruning ratio: the set of pruned vertices has size $|V^{\mathrm{prune}}| \leq \mathcal{O}\left(\frac{\log(n)}{\tau}\right)^d$.
        \item Worst-case work: $\Call{Delete}{V^{\mathrm{del}}}$ takes work $\poly(k \cdot (\log(n) / \tau)^d)$ and has depth $\poly \log(n)$.
    \end{itemize}
\end{lemma}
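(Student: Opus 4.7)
My plan is to derive both bounds from a single inductive claim: that after \Call{Delete}{$v$} has finished processing the $j$\textsuperscript{th} ancestor of $v$ in its outer while loop, the set $V^{\mathrm{rem}}$ satisfies $|V^{\mathrm{rem}}| \leq (1+\rho)^j$. The base case $j=0$ holds because initially $V^{\mathrm{rem}} = \{v\}$, and the inductive step is immediate from the line $\mathrm{trim\_cou} := \min(\rho \cdot |V^{\mathrm{rem}}|, \mathrm{size}_\sigma)$, which ensures that at most $\rho \cdot |V^{\mathrm{rem}}|$ vertices are appended per iteration, multiplying the size by at most $(1+\rho)$. Since there are exactly $d$ iterations, $|V^{\mathrm{prune}}| = |V^{\mathrm{rem}}| - 1 \leq (1+\rho)^d$. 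Plugging in $\rho = \lceil 2 H_k \cdot (2d+1) / \tau \rceil = \bigO(d \log k / \tau) = \bigO(\log n / \tau)$ (using $\log n = d \log k$), this yields the claimed worst-case pruning ratio $\bigO(\log n / \tau)^d$.

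\textbf{Work bound.} The total number of top-level \textsc{Trim}-invocations issued across all $d$ outer iterations is at most $\sum_{j=1}^{d} \rho \cdot (1+\rho)^{j-1} \leq (1+\rho)^d$. Each such \textsc{Trim} descends at most $d$ cluster-tree levels, calling \textsc{Retarget} once per level. An invocation of \textsc{Retarget} does a single pass through $S_\sigma$ to compute an $\arg\min$ over sizes or an $\arg\max$ over shadow-mark counts, costing $\bigO(k)$ work; the \texttt{while} in \textsc{Retarget} only iterates more than once in pathological scenarios that \Cref{lem:child-sizes} precludes, since every cluster indexed by $S_\sigma$ is guaranteed to be nonempty. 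Each \textsc{Trim} terminates in one call to \textsc{ProcessRemoval} at a leaf, which iterates over the at-most-$kd$ neighbors of the removed vertex and updates $\bigO(d)$ shadow mark sets per neighbor (one for each descendant cluster of the lowest common ancestor on the path to that neighbor), for $\bigO(kd^2)$ work per removal. Combining, each \textsc{Trim} costs $\bigO(kd^2)$ and the total work of \Call{Delete}{$v$} is $\bigO((1+\rho)^d \cdot kd^2) = \poly(k \cdot (\log n / \tau)^d)$, as required.

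\textbf{Depth bound.} The $d$ iterations of the outer while loop are inherently sequential (each reads sizes mutated by the previous iteration's trims), but within a single iteration the \texttt{trim\_cou} trims operate along the target chain $\sigma, \sigma + t_\sigma, \sigma + t_\sigma + t_{\sigma + t_\sigma}, \ldots$, a recursion of depth $\bigO(d)$. The \textsc{ProcessRemoval} at the leaf only touches $\bigO(d)$ ancestor-indexed shadow sets per neighbor and can be done in depth $\poly \log n$ using standard concurrent set updates. The overall depth is therefore $\bigO(d^2) \cdot \poly \log n = \poly \log n$.

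\textbf{Main obstacle.} The cleanest part of the argument is the geometric recurrence on $|V^{\mathrm{rem}}|$; the subtle part is ensuring that the $(1+\rho)$ factor is not actually optimistic. Specifically, if \textsc{Trim} could stall because the chain of target children hits an empty cluster partway down, the accounting breaks. The invariant that rescues us is precisely \Cref{lem:child-sizes}: since the current $\sigma + t_\sigma$ was not $\tau$-critical when promoted and is repeatedly emptied only as the target, every \textsc{Trim}$(\sigma)$ call while $\mathrm{size}_\sigma > 0$ reaches a leaf and returns a fresh vertex. Once $\mathrm{size}_\sigma = 0$, the $\min$ in \texttt{trim\_cou} clamps to $0$, so no spurious work is charged. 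Together with \Cref{lem:child-sizes}, this makes the bookkeeping go through cleanly.
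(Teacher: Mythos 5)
Your treatment of the pruning ratio and the work bound is essentially the paper's own proof: the removed set grows by a factor of at most $(\rho+1)$ at each of the $d$ ancestors of the deleted vertex, giving $|V^{\mathrm{prune}}| = |V^{\mathrm{rem}}|-1 \leq (\rho+1)^d - 1$ with $\rho = \lceil 2H_k(2d+1)/\tau\rceil = \bigO(d\log k/\tau) = \bigO(\log n/\tau)$, and the total work is the number of trims times the per-trim cost of the \textsc{Retarget} and \textsc{ProcessRemoval} calls. One small inaccuracy: you charge $\bigO(kd^2)$ per removal, but \textsc{ProcessRemoval} also recomputes the cluster-$\sigma$-degree of each of the up to $kd$ affected neighbours, which costs $\bigO(kd\log n)$ per neighbour (the paper charges $\bigO((kd)^2\log n)$ per removal, including $\log n$ factors for set membership); this is harmless for the final $\poly(k\cdot(\log n/\tau)^d)$ bound.

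The genuine gap is the depth claim. The \textbf{for} loop executing $\mathrm{trim\_cou}$ trims is sequential, and $\mathrm{trim\_cou}$ can be as large as $\rho\cdot(\rho+1)^{d-1} = \bigO(\log n/\tau)^d$, which is $n^{\Theta(1/d)}$ and far above $\poly\log n$. Your assertion that within one outer iteration the trims ``operate along the target chain, a recursion of depth $\bigO(d)$'' does not parallelize them: each trim may empty a cluster and trigger \textsc{Retarget}, which changes the target chain that the next trim follows, and the choice of the new target depends on sizes and shadow-mark counts mutated by the preceding trims. So nothing in your argument (nor in the algorithm as written) brings the depth below the number of trims performed. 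For context, the paper's own proof of this lemma establishes only the pruning ratio and the work bound and is silent on depth (and the depth guarantee is dropped from the theorem this lemma feeds into), so your ratio and work portions stand as-is, but a $\poly\log n$ depth bound would require a genuinely different, parallel trimming argument that neither you nor the paper supplies.
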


\begin{proof}
    For the pruning ratio, the size of the removed vertex set $V^{\mathrm{rem}}$ is multiplied by at most $(\rho + 1)$ at every ancestor $\sigma$ of $v$ handled, thus the total size of the removed vertex set is at most
    \begin{equation*}
        |V^{\mathrm{prune}}| = |V^{\mathrm{rem}}| - 1 \leq (\rho + 1)^{d} - 1 = \bigO(d H_k / \tau)^d = \bigO(\log(n) / \tau)^d
    \end{equation*}
    as $\log(n) = \log(k^d) = d \log(k) = \bigO(d H_k)$. For work,
    \begin{itemize}
        \item \textsc{Retarget} takes $\bigO(k)$ work, as it iterates over $i \in [k]$. 
        \item \textsc{ProcessRemoval} takes $\bigO((kd)^2 \log n)$ work, as
        \begin{itemize}
            \item the set $V^{\mathrm{aff}}$ can be constructed by iterating the $kd$ edges $(v^{\mathrm{rem}}, v) \in E$ and checking if $v \in V$ with work $\bigO(\log n)$ per $v$.
            \item Computing the cluster-$\sigma$-degree of a vertex $v \in V^{\mathrm{aff}}$, of which there are $\bigO(kd)$, takes $\bigO(kd \log n)$ work similarly. Adding $v$ to $\bigO(d)$ shadow mark sets afterwards takes $\bigO(d \log n)$ work.
            \item Updating sizes of clusters containing the removed vertex takes $\bigO(d)$ work.
        \end{itemize}
        \item \textsc{Trim} calls \textsc{Retarget} on up to $d$ clusters, and calls \textsc{ProcessRemoval} on one vertex, thus it takes $\bigO((kd)^2 \log n)$ work
        \item \textsc{Delete} calls \textsc{ProcessRemoval} directly once, and \textsc{Trim} $|V^{\mathrm{prune}}|$ times, thus it takes work $k^2 d^2 \log(n) \cdot \bigO(\log(n) / \tau)^d \leq \poly(k \cdot \log(n) / \tau)^d$.
    \end{itemize}
\end{proof}

\begin{lemma}\label{lem:selfpruning-invariants}
    For any cluster $\sigma$, initially and after every call to \textsc{ProcessRemoval},
    \begin{itemize}
        \item Maintained sizes are correct: $\mathrm{size}_\sigma = |V_\sigma|$.
        \item For all $\sigma' \in \ancs(\sigma)$ and $j \in [k]$, the set $M^{\mathrm{shadow}}_{\sigma, \sigma', j}$ consists of exactly the vertices $v \in V_\sigma$ of cluster-$\sigma'$-degree strictly less than $j$.
    \end{itemize}
\end{lemma}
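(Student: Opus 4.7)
The plan is to prove both invariants simultaneously by induction on the number of completed \textsc{ProcessRemoval} calls.

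Base case: right after \textsc{Initialize} we have $V = [k]^d$ and $\mathrm{size}_\sigma = k^{d-|\sigma|} = |V_\sigma|$, giving (i). Every shadow set is empty; for (ii) I check that in the starting $(k,d)$-semi-hypercube a leaf vertex $v \in V_\sigma$ has cluster-$\sigma'$-degree $(k-1)(d-|\sigma'|) \ge k-1$ for any $\sigma' \in \ancs(\sigma)$, because at each of the $d - |\sigma'|$ ancestor depths between $v$ and $\sigma'$ the matching contributes one edge per sibling of $v$'s branch. The only $j$-values the algorithm ever reads from shadow sets are $j = \lceil \tfrac{19}{20}(|S|-1)\rceil \le \lceil \tfrac{19}{20}(k-1)\rceil \le k-1$ (set by \textsc{Retarget}), and since the degree is $\ge k-1 \ge j$, no initial vertex satisfies the strict-less-than condition on the actively used range, matching the empty shadow sets.

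Inductive step: assume both invariants hold just before a call to \textsc{ProcessRemoval}$(v^{\mathrm{rem}})$. Invariant (i) is immediate: the algorithm executes $V \gets V \setminus \{v^{\mathrm{rem}}\}$ and decrements $\mathrm{size}_\sigma$ for each $\sigma \in \{v^{\mathrm{rem}}\} \cup \ancs(v^{\mathrm{rem}})$, which is precisely the set of clusters containing $v^{\mathrm{rem}}$. For invariant (ii), the decisive observation is that removing $v^{\mathrm{rem}}$ changes a vertex $u$'s cluster-$\sigma'$-degree by exactly one when $(u, v^{\mathrm{rem}}) \in E$ and $v^{\mathrm{rem}} \in V_{\sigma'}$, and leaves it unchanged otherwise. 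So only vertices in $V^{\mathrm{aff}}$ can change any shadow membership, and the algorithm iterates over exactly these. For each $u \in V^{\mathrm{aff}}$ with $\sigma := \mathrm{lca}(u, v^{\mathrm{rem}})$, the inductive hypothesis for (i) together with the fact that $E$ is fixed ensures the computed value $\deg = |\{w \in V_\sigma : (u,w) \in E\}|$ equals $u$'s current (pre-removal) cluster-$\sigma$-degree; after removal this becomes $\deg - 1$, so the only newly qualifying shadow memberships at middle index $\sigma$ are $M^{\mathrm{shadow}}_{\sigma'', \sigma, \deg}$ for each descendant $\sigma''$ of $\sigma$ containing $u$ --- precisely the updates the algorithm performs.

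The main subtlety to address is the case of a middle index $\sigma'$ that is a proper ancestor of $\sigma = \mathrm{lca}(u, v^{\mathrm{rem}})$: although $u$'s cluster-$\sigma'$-degree also drops by one, the algorithm does not touch $M^{\mathrm{shadow}}_{\cdot, \sigma', \cdot}$. I reconcile this by observing that such a $\sigma'$ can enter any $\mathrm{minfo}_{\sigma''}$ only through the $|S_{\sigma_p}| = 1$ branch of \textsc{Retarget}, i.e.\ via inheritance along a chain of degenerate parent clusters. At the moment this inheritance fires, $V_{\sigma'}$ coincides with $V_\sigma$ (all remaining vertices of $V_{\sigma'}$ sit in the unique nonempty subtree leading to $\sigma$), and every subsequent deletion must stay inside that subtree; hence the cluster-$\sigma$-degree and the cluster-$\sigma'$-degree of any remaining vertex remain equal forever after, so the algorithm's single update at middle index $\sigma$ simultaneously maintains consistency at middle index $\sigma'$, closing the induction.
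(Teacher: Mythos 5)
Your handling of the cluster sizes and of the lowest-common-ancestor middle index coincides with the paper's argument, but everything downstream depends on your reading of cluster-$\sigma'$-degree as the degree of $v$ in the whole induced subgraph $G[V_{\sigma'}]$, and your resolution of the resulting ancestor-middle-index case is where the proof breaks. In the paper this degree counts only cluster-$\sigma'$-edges, i.e.\ edges from $v$ to vertices in \emph{other} child clusters of $\sigma'$ (this is exactly how property~2 of \Cref{def:semi-hyper-cube} is phrased, and it is what the degree with respect to $E_\sigma[V]$ means in \Cref{alg:shc-strong-self-pruning}); the degree computed in \textsc{ProcessRemoval} is intended the same way. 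Under that reading each edge is a cluster-$\sigma'$-edge for exactly one $\sigma'$, namely the lowest common ancestor of its endpoints, so the deletion of $v^{\mathrm{rem}}$ changes, for each neighbour $v$, exactly one relevant degree and by exactly one; recording $v$ at threshold $\mathrm{deg}$ for that single middle index is already the entire proof, and the ``main subtlety'' you spend the last paragraph on simply does not arise.

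Under your full-induced-degree reading, the patch you propose is incorrect. It is not true that a middle index $\sigma'$ lying strictly above the lca can only reach some $\mathrm{minfo}_{\sigma''}$ through the degenerate-inheritance branch: the ordinary branch of \textsc{Retarget}$(\sigma')$ sets $\mathrm{minfo}_{\sigma'+t} = (\sigma', \lceil \tfrac{19}{20}(|S|-1)\rceil)$ directly, and afterwards every deletion of an edge whose endpoints both lie in the same child cluster of $\sigma'$ has its lowest common ancestor strictly below $\sigma'$. Under your reading such a deletion lowers the full cluster-$\sigma'$-degree of the surviving endpoint, yet \textsc{ProcessRemoval} touches no set with middle index $\sigma'$, so the sets $M^{\mathrm{shadow}}_{(\sigma'+t)+i,\,\sigma',\,j}$ that \textsc{Retarget}$(\sigma'+t)$ later consults via $\mathrm{minfo}_{\sigma'+t}$ are of exactly this stale kind; this is the generic configuration, not a corner case, so restricting attention to actively-read $(\sigma',j)$ pairs does not rescue the induction (and in any case proves a weaker statement than the lemma, as does your base-case restriction to the used thresholds). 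Moreover, even in the inheritance scenario, your identity $V_{\sigma'} = V_\sigma$ conflates the degenerate chain with the lca $\sigma$ of one particular deleted edge, which is in general a strict descendant of the inheriting cluster, so the assertion that the two degrees coincide forever after is unsupported. With the paper's crossing-edge notion of cluster-degree, both your base case and your inductive step go through verbatim for every threshold the algorithm ever uses, with no need for this detour.
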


\begin{proof}
    The only modifications to the vertex set $V$, cluster sizes $\mathrm{size}_\sigma$ and the shadow mark sets are made by the function $\Call{ProcessRemoval}{v^{\mathrm{rem}}}$. This function removes $v^{\mathrm{rem}}$ from $V$ and subtracts the sizes $\mathrm{size}_\sigma$ of all clusters containing it by one, thus sizes are maintained correctly.
    
    Before removing the vertex, for every vertex $v$ incident to $v^{\mathrm{rem}}$, the function adds $v$ to sets $M^{\mathrm{shadow}}_{\sigma'', \sigma, \mathrm{deg}}$ for $\sigma$ the lowest common ancestor of $v$ and $v^{\mathrm{rem}}$ and $\mathrm{deg}$ the cluster-$\sigma$-degree of $v$ before the removal of $v^{\mathrm{rem}}$. Since the edge between $v$ and $v^{\mathrm{rem}}$ is a cluster-$\sigma$-edge, the cluster-$\sigma$-degree of the vertex $v$ drops from $\mathrm{deg}$ to $\mathrm{deg} - 1$, thus the only integer it is now strictly smaller than that it was not strictly smaller than before is $\mathrm{deg}$. The degrees of vertices not incident to $v^{\mathrm{rem}}$ are obviously not affected, thus the shadow mark sets are maintained correctly.
\end{proof}

The following Lemma shows that the mark sets are maintained correctly. For it, we assume \Cref{lem:child-sizes}. The proof of \Cref{lem:child-sizes} does not need to consider the mark sets at all, as it only cares about cluster sizes.

\begin{lemma}
    Initially and after every call to \textsc{Delete}, for any non-empty non-leaf cluster $\sigma$, 
    \begin{itemize}
        \item If the cluster $\sigma$ is not degenerate, it's target child cluster $\sigma + t_\sigma$ has $\mathrm{minfo}_{\sigma + t_\sigma} = (\sigma, \lceil \frac{19}{20} k' \rceil)$, where $k'$ is the number of nonempty child clusters other than $\sigma + t_\sigma$ the cluster $\sigma$ has.
        \item If the cluster is degenerate, it's only child and target child cluster $\sigma + t_\sigma$ has $\mathrm{minfo}_{\sigma + t_\sigma} = \mathrm{minfo}_{\sigma}$.
    \end{itemize}
\end{lemma}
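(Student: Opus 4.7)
I will prove this by induction on the sequence of atomic state changes performed by \textsc{Delete} and its subroutines, isolating the unique place $\mathrm{minfo}_{\sigma + t_\sigma}$ is ever written, namely the two branches of \textsc{Retarget}$(\sigma)$. First I observe what else can change between \textsc{Retarget} calls on $\sigma$: the fields $t_\sigma$, $S_\sigma$, and $\mathrm{minfo}_{\sigma+t_\sigma}$ are only mutated inside \textsc{Retarget}, and the only way the set of nonempty children of $\sigma$ can change is through \textsc{ProcessRemoval} emptying some descendant leaf cluster. By \Cref{lem:child-sizes}, no index $i \in S_\sigma$ ever labels an empty child after a \textsc{Delete}, so the only transition to a new ``nonempty-children'' set during \textsc{Delete} occurs when $\sigma+t_\sigma$ itself becomes empty, which immediately triggers the while-loop inside \textsc{Retarget}.

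The core verification is therefore that the two branches of \textsc{Retarget} install the correct value for a fresh target. In the non-degenerate branch ($|S_\sigma| \ge 2$ at the start of the loop iteration), a new target $t_\sigma \in S_\sigma$ is chosen and $\mathrm{minfo}_{\sigma + t_\sigma} \gets (\sigma, \lceil \tfrac{19}{20}(|S_\sigma| - 1) \rceil)$ is set. I will argue that at this instant the nonempty children of $\sigma$ are exactly $\{\sigma + i : i \in S_\sigma\}$. Every $i \in S_\sigma$ indexes a nonempty child by \Cref{lem:child-sizes}; every $i$ that has ever been target but is no longer in $S_\sigma$ was emptied before it ceased being target (by inspection of the Retarget loop guard); and the old target has just become empty. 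Hence $|S_\sigma|$ is the current count of nonempty children, and after the assignment the new $k' = |S_\sigma| - 1$, which matches the first bullet. The degenerate branch ($|S_\sigma| = 1$) is precisely the instant $\sigma$ transitions from non-degenerate to degenerate: the sole remaining index is promoted to target, leaving exactly one nonempty child, and the assignment $\mathrm{minfo}_{\sigma + t_\sigma} \gets \mathrm{minfo}_\sigma$ directly matches the second bullet.

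For the base case, every child cluster is initially nonempty with $t_\sigma = 1$, and every vertex has full cluster-$\sigma$-degree $k-1$, so both the recorded value $\mathrm{minfo}_{\sigma+1} = \bot$ and the formula value $(\sigma, \lceil 19(k-1)/20 \rceil)$ refer to the same empty mark set, i.e.\ $M^{\mathrm{shadow}}_{\sigma+1, \bot} = M^{\mathrm{shadow}}_{\sigma+1, \sigma, \lceil 19(k-1)/20 \rceil} = \emptyset$; the invariant is thus satisfied in the semantic form needed by downstream uses (the syntactic formula first becomes observably relevant after the first \textsc{Retarget} assignment on $\sigma$). Between successive \textsc{Retarget} calls, \textsc{ProcessRemoval} modifies only sizes and shadow-mark sets, while \textsc{Trim} either recurses into $\sigma + t_\sigma$ without touching $\sigma$'s own fields or, at $\sigma$, calls \textsc{Retarget}$(\sigma)$ on the way back up; every emptying of $\sigma + t_\sigma$ is therefore followed immediately by a \textsc{Retarget} that restores the invariant before \textsc{Delete} returns.

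The main technical obstacle is the coordination between $S_\sigma$ and the actual set of nonempty children of $\sigma$: one must rule out both an index remaining in $S_\sigma$ past the moment its child becomes empty, and an index leaving $S_\sigma$ while its child is still nonempty. Both are exactly the content of \Cref{lem:child-sizes}, so the proof reduces cleanly to that lemma together with the case analysis of the two \textsc{Retarget} branches above.
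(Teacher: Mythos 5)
Your verification of what the two \textsc{Retarget} branches write is broadly in the spirit of the paper's argument, but there is a genuine gap: you only track writes to $\mathrm{minfo}_{\sigma+t_\sigma}$ and changes to the set of nonempty children, and therefore never address the failure mode that the paper's proof spends its entire first case on, namely that in the degenerate bullet the \emph{right-hand side} $\mathrm{minfo}_\sigma$ is itself mutable. After the inheritance $\mathrm{minfo}_{\sigma+t_\sigma}\gets\mathrm{minfo}_\sigma$ is installed, a later call to \textsc{Retarget} on the \emph{parent} of $\sigma$ overwrites $\mathrm{minfo}_\sigma$ (with the grandparent's minfo or with a fresh degree-threshold pair) the moment $\sigma$ is selected as its parent's target child, and then the claimed equality would break even though $\mathrm{minfo}_{\sigma+t_\sigma}$ is never touched. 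Excluding this requires an argument: a degenerate cluster is necessarily $\tau$-critical, and by \Cref{lem:child-sizes} a cluster that is critical when selected as target can only be selected if it is emptied within the same \textsc{Delete}, in which case the claim is vacuous for $\sigma$. Your induction contains no step in which this scenario is examined, so the second bullet is not established.

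The preservation step for the first bullet also rests on a claim that is false as stated: it is not true that ``the only transition to a new nonempty-children set during \textsc{Delete} occurs when $\sigma+t_\sigma$ becomes empty.'' A non-target child of $\sigma$ can be emptied mid-operation (the directly deleted leaf when $|\sigma|=d-1$, exactly the edge case the paper flags when motivating $S_\sigma$, or a non-target child exhausted by trims performed at lower levels before the \textsc{Delete} loop reaches $\sigma$), and \Cref{lem:child-sizes} constrains the state only \emph{after} complete \textsc{Delete} operations, so it cannot be invoked at the instant \textsc{Retarget} writes minfo, as you do when identifying the nonempty children with $\{\sigma+i : i\in S_\sigma\}$ ``at this instant.'' The paper closes this hole with the $\rho$-factor argument (whenever a non-target child is emptied, at least $\rho$ times as many trims hit the target, so the target is emptied too and a later retarget refreshes minfo before the operation ends); alternatively one can avoid mid-operation reasoning entirely by comparing only \textsc{Delete}-boundary states: $S_\sigma$ and $t_\sigma$ are unchanged while the target stays nonempty, former targets are empty, and \Cref{lem:child-sizes} then gives $k'=|S_\sigma|$ at the end of every \textsc{Delete}, matching the $|S|-1$ used at the write. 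One of these repairs is needed; as written the step does not go through. (Your handling of the initial $\mathrm{minfo}=\bot$ state is also loose, though the paper itself glosses over that corner.)
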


\begin{proof}
    First, suppose that when the child cluster $\sigma + t_\sigma$ initially became the target child cluster of the cluster $\sigma$, the cluster $\sigma$ was already degenerate. Then, the second claim corresponds to showing that $\mathrm{minfo}_\sigma$ does not change, which could only occur if the cluster $\sigma$ became the target child cluster of its parent cluster while degenerate. However, \Cref{lem:child-sizes} states that whenever $\Call{Retarget}{\sigma'}$ selects a target child cluster, either no nonempty child cluster of $\sigma'$ is $\tau$-critical which a degenerate cluster must be, or the same delete-operation will result in the new target child becoming empty. In either case, we are done.

    Suppose next that the cluster $\sigma$ was not degenerate when the child cluster $\sigma + t_\sigma$ initially became its target child cluster. Then, the claims correspond to the claim that from the moment $\Call{Retarget}{\sigma}$ selects $\sigma + t_\sigma$ as the target child until the delete-operation that causes the target child cluster $\sigma + t_\sigma$ to become empty, the number of nonempty child clusters of $\sigma$ does not change.
    
    To see this, note that during the delete-operation during which $\sigma + t_\sigma$ became the target child, from that point on during the \textsc{Delete}-operation, the only changes to the cluster vertex set $V_\sigma$ can come as the consequence of a call $\Call{Trim}{\sigma}$. Thus, after that delete-operation is complete, it still holds that no child cluster other than $\sigma + t_\sigma$ that was nonempty when $\sigma + t_\sigma$ became the target child is empty or $\tau$-critical, unless $\sigma + t_\sigma$ is empty.

    During any further \textsc{Delete}-operations, whenever a vertex is removed from a child cluster other than $\sigma + t_\sigma$, at least $\rho$ times that many vertices are removed from the target child. Thus, if a non-target child becomes empty during the operation, the target child becomes empty as well.
\end{proof}

We are now ready to prove \Cref{thm:shc-strong-pruning} assuming \Cref{lem:child-sizes} and \Cref{lem:simple-mark-ratio}.

\selfpruning*

\begin{proof}
    The algorithm is deterministic, and by \Cref{lem:selfpruning-work}, the pruning ratio, work and depth claims hold.

    Initially and after every \textsc{Delete}-operation, for a cluster $\sigma$,
    \begin{itemize}
        \item By \Cref{lem:child-sizes}, the cluster $\sigma$ has no $\tau$-critical child clusters except possibly the target child $\sigma + t_\sigma$, noting that each nonempty non-target child cluster must have its child identifier in $S_\sigma$, as \textsc{Retarget} only ever changes the target child cluster when the current one is empty.
        \item If the cluster $\sigma$ is not degenerate, by \Cref{lem:selfpruning-invariants}, that child cluster has $\mathrm{minfo}_{\sigma + t_\sigma} = (\sigma, \lceil \frac{19}{20} k' \rceil)$, where $k'$ is the number of non-target nonempty child clusters the cluster $\sigma$ has.
        \item Thus, by \Cref{lem:selfpruning-invariants}, a vertex $v \in V_{\sigma + t_\sigma}$ is in the set $M^{\mathrm{shadow}}_{\sigma + t_\sigma, \mathrm{minfo}_{\sigma + t_\sigma}}$ if and only if it has cluster-$\sigma$-degree strictly less than $\frac{19}{20} k'$.
        \item By \Cref{lem:simple-mark-ratio}, we have $|M^{\mathrm{shadow}}_{\sigma + t_\sigma, \mathrm{minfo}_{\sigma + t_\sigma}}| \leq \frac{1}{20} |V_{\sigma + t_\sigma}|$.
        \item Thus, the average cluster-$\sigma$-degree of vertices in $\sigma + t_\sigma$ is at least $(1 - \frac{1}{20}) \frac{19}{20} k' \geq \frac{9}{10} k'$.
    \end{itemize}
    Thus, the graph $G[V]$ is a $(k, d, \tau)$-semi-hypercube.
\end{proof}

\subsection{Proof of \Cref{lem:child-sizes}}\label{sec:pruning-proof-sizes}

In this subsection, we prove \Cref{lem:child-sizes} on cluster child sizes.

\childsizes

\begin{proof}
    Consider what changes occur to the vertex set $V_\sigma$ of a cluster $\sigma$ during a Delete-operation after which the cluster remains nonempty:
    \begin{enumerate}
        \item[1.] Some possibly empty set of vertices $V^{\mathrm{rem}} \subseteq V_{\sigma}$ get removed from the cluster. Then $\Call{Retarget}{\sigma}$ is called to make sure the target child cluster of $\sigma$ is not empty.
        \item[2.] At least $\rho \cdot |V^{\mathrm{rem}}|$ calls to $\Call{Trim}{\sigma}$ are made.
    \end{enumerate}
    Call a pair of the steps 1. and 2. an update. We will show that the claim holds after any sequence of updates, even possibly updates that could not be caused by a Delete-operation. This only makes the claim stronger, thus clearly suffices.

    Denote the initial size of each of the children of the cluster $\sigma$ by $n_0 := k^{d - |\sigma| - 1}$.   For $k' \in [k]$ and $j \in [k']$, for a sequence of updates being considered, let $S^{\mathrm{size}}_{k', j}$ be the set of the $j$ indices $i \in S_\sigma$ of minimum size $|V_{\sigma + i}|$ at the moment $\Call{Retarget}{\sigma}$ was selecting from a set of size $|S_\sigma| = k'$. As we "retarget" during initialization by letting $t_\sigma \gets 1$, let $S^{\mathrm{size}}_{k, j}$ be a subset of $[k]$ of size $j$ chosen with the same tie-breaking rule as what is used by \Cref{alg:efficient-self-pruning}.
    
    Let $B_{k', j}$ be the maximum possible total number of missing vertices among clusters $\sigma + i$ for $i \in S^{\mathrm{size}}_{k', j}$, i.e. the value $\sum_{i \in S^{\mathrm{size}}_{k', j}} (n_0 - |V_{\sigma + i}|)$, at the moment $\Call{Retarget}{\sigma}$ was selecting from a set of size $k'$, in any sequence of updates in which between some two updates $|S_{\sigma}| = k' - 1$ held, i.e. the selected target child cluster out of $S^{\mathrm{size}}_{k', j}$ did not just become empty in the same update it became a target child cluster.

    Proving \Cref{lem:child-sizes} now corresponds to showing that $B_{k', 1} \leq \tau n_0$ for all $k' \in [k]$, noting that the first claim follows immediately from the second claim: suppose that after some sequence of updates, a non-target child cluster of $\sigma$ is $\tau$-critical. Then, the adversary could simply call $\Call{Trim}{\sigma}$ until a retarget is triggered without affecting that cluster's size.
    
    Now, we bound the values $B_{k', j}$. Let $x := \frac{2n_0}{\rho}$ be a bound on the size of $|V^{\mathrm{rem}}|$ during an update that triggers at most one retarget. 
    We claim the following inequality holds:
    \begin{equation} \label{eq:size-bound-ineq}
        B_{k', j} \leq x + \left\{ \begin{array}{ll}
            (B_{k' + 1, j + 1}) \cdot \frac{j}{j + 1} & \text{when $k' + 1\ \mathrm{mod}\ \mathrm{rcycle} = 1$}\\
            B_{k', j} &\text{otherwise}
        \end{array}\right..
    \end{equation}
    To show the inequality holds, first suppose that the adversary only ever performs updates that result in exactly one change of target child. Then, we certainly always have $B_{k', j} \leq x + B_{k' + 1, j}$ by the definition of $x$ as the number of removals that force two retargets. When $k' + 1\ \mathrm{mod}\ \mathrm{rcycle} = 1$, the target child prior to the current one was chosen to be the smallest nonempty one. Thus, in particular, the total number of missing vertices among children with indices in $S^{\mathrm{size}}_{k', j}$ at that point must have been at most $\frac{j}{j + 1}$ times the number of missing vertices among children with indices in $S^{\mathrm{size}}_{k' + 1, j + 1}$, as otherwise one of $S^{\mathrm{size}}_{k', j}$ would have been chosen. From that point until the latest retarget, at most $x$ vertices could have been removed from those clusters.

    Suppose next that the adversary performs an update that causes multiple retargets, specifically before which the last time $\Call{Retarget}{\sigma}$ selected a target child $|S_\sigma| = k''$ held, and after which the last time it happened $|S_\sigma| = k' < k''$ held. Then, let $j_{\mathrm{add}}$ be the number of integers in $(k', k'']$ that are $1$ modulo $\mathrm{rcycle}$, and $j'_{\mathrm{add}}$ the number of integers in $(k', k'')$ that are $1$ modulo $\mathrm{rcycle}$. Denote by $B'_{k', j}$ the total number of missing vertices among clusters $\sigma + i$ for $i \in S^{k', j}_\sigma$ when the target child cluster changed for the last time during that update. Then, we claim that
    \begin{equation} \label{eq:multi-size-bound-ineq}
        B'_{k', j} \leq \frac{j}{j + j_{\mathrm{add}}'} \left(\left((k'' - k' + 1) \frac{x}{2}\right) + \frac{j + j_{\mathrm{add}}'}{j + j_{\mathrm{add}}} \left(B_{k'', j + j_{\mathrm{add}}} + \frac{x}{2}\right)\right)
    \end{equation}
    as
    \begin{itemize}
        \item before the update, for any $j'$, the maximum amount of missing vertices among any $j'$ child clusters with identifiers in $S_\sigma$ was at most $B_{k'', j'} + \frac{x}{2}$
        \item the update removes at most and in fact strictly less than $(r' - r + 1) \frac{x}{2}$ vertices, as otherwise the target child would change more than $k'' - k'$ times,
        \item $j_{\mathrm{add}}'$ retargets after the removal of the vertices $V^{\mathrm{rem}}$ in the update but before the last retarget during the update targeted the smallest child cluster, and
        \item $j_{\mathrm{add}}$ retargets before the last retarget during the update targeted the smallest child cluster.
        \item Thus applying the same ratio argument gets the inequality.
    \end{itemize}
    
    One can observe that the bound \Cref{eq:multi-size-bound-ineq} is stronger than applying \Cref{eq:size-bound-ineq} multiple times. Thus, the inequality \Cref{eq:size-bound-ineq} holds even without making assumptions on the adversary.

    Combining \Cref{eq:size-bound-ineq} with $B_{0, j} = 0$, we get that $B_{r, j} \leq x \cdot (j \cdot \mathrm{rcycle} \cdot H_k)$ where $H_k$ is the $k$\textsuperscript{th} harmonic number. As $\rho = \left\lceil 2H_k \cdot \mathrm{rcycle} / \tau \right\rceil$ and $x = \frac{2 n_0}{\rho} \leq \tau n_0 / (\mathrm{rcycle} \cdot H_k)$, we have $B_{r, 1} \leq x \cdot \mathrm{rcycle} \cdot H_k \leq \tau n_0$.
\end{proof}

\subsection{Proof of \Cref{lem:simple-mark-ratio}}\label{sec:pruning-proof-marks}

We move to obtaining the desired bound on the maximum mark ratio. To do this, we prove a stronger property in \Cref{lem:strong-mark-ratio} by induction on $\sigma$, obtaining the desired \Cref{lem:simple-mark-ratio} as an immediate corollary. Throughout the subsection, we write $M_\sigma := M^{\mathrm{shadow}}_{\sigma, \mathrm{minfo}_\sigma}$ for the marked vertex set of the cluster $\sigma$.

\begin{lemma}\label{lem:strong-mark-ratio}
    In \Cref{alg:shc-strong-self-pruning}, for any nonempty, non-leaf cluster $\sigma$, after any sequence of Delete-operations,
    \begin{itemize}
        \item Unless the cluster $\sigma$ is degenerate, the ratio of marked vertices in the cluster satisfies
        \begin{equation*}
            \frac{|M_\sigma \cap V_\sigma|}{|V_\sigma|} \leq \frac{(|\sigma| + 1) e^{(|\sigma| + 1) / d}}{20e \cdot d}.
        \end{equation*}
        \item When the latest call to $\Call{Retarget}{\sigma}$ was made, the ratio of marked vertices in the cluster satisfied
        \begin{equation*}
            \frac{|M_\sigma \cap V_\sigma|}{|V_\sigma|} \leq \frac{1}{2} \cdot \frac{(|\sigma| + 1) e^{(|\sigma| + 1) / d}}{20e \cdot d}.
        \end{equation*}
    \end{itemize}
\end{lemma}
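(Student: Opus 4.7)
The plan is to prove both claims of \Cref{lem:strong-mark-ratio} by induction on the depth $|\sigma|$, tracking the mark ratio $r_\sigma := |M_\sigma \cap V_\sigma|/|V_\sigma|$ throughout the sequence of Delete-operations. The base case $|\sigma| = 0$ is immediate: the root has $\mathrm{minfo}_\rt = \bot$ throughout, so $M_\rt = \emptyset$ and both claims hold with ratio $0$. For the inductive step at depth $|\sigma|$ with parent $\sigma'$, the key structural observation is that $M_\sigma = \emptyset$ until $\sigma$ first becomes the target child of $\sigma'$; at that single moment $\mathrm{minfo}_\sigma$ is assigned once and for all, $M_\sigma$ then grows monotonically as cluster-$\sigma'$-degrees drop, and $\sigma$ remains the target of $\sigma'$ until it finally empties.

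I would first bound the initial mark ratio of $\sigma$ at the moment it becomes target, splitting on whether $\sigma'$ was non-degenerate or degenerate. In the non-degenerate case ($|S_{\sigma'}| \geq 2$ before the retarget), I would set up a recurrence on a quantity $C_{k', j}$, the maximum total mark count over any $j$ children in $S_{\sigma'}$ at the instant $\Call{Retarget}{\sigma'}$ fires with $|S_{\sigma'}| = k'$, entirely analogous to the $B_{k', j}$ recurrence used in the proof of \Cref{lem:child-sizes}. The unbalanced alternating targeting rule will yield either $C_{k', j} \leq x + \tfrac{j}{j+1} C_{k'+1, j+1}$ (when most-marked is chosen) or $C_{k', j} \leq x + C_{k'+1, j}$ (once every $\mathrm{rcycle}$ steps, when smallest is chosen), where $x$ bounds per-cycle mark additions. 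Unrolling gives $C_{k', k'} = O(x \log k) + O(\tau n_0 k')$; combined with the size lower bound $|V_\sigma| = \Omega(n_0 k')$ from \Cref{lem:child-sizes} and the calibration $\rho = \lceil 2 H_k \cdot \mathrm{rcycle}/\tau \rceil$, the initial ratio is $O(\tau)$, well below $f(|\sigma|)/2$. In the degenerate case ($|S_{\sigma'}| = 1$, inheritance), $M_\sigma$ equals $M_{\sigma'}$ with $V_\sigma = V_{\sigma'}$, and applying Claim 1 of the inductive hypothesis to $\sigma'$ just before its transition to degenerate gives $r_\sigma \leq f(|\sigma|-1) \leq f(|\sigma|)$.

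Next, I would extend the bound to all subsequent times. Each delete in a non-$\sigma$ sibling of $\sigma$ adds at most one vertex to $M_\sigma$ via a cluster-$\sigma'$-degree drop and simultaneously triggers $\rho$ trims on $\sigma'$ that cascade to $\sigma$ (the target of $\sigma'$); deletes strictly inside $\sigma$'s subtree reduce $|V_\sigma|$ without adding marks. Summing these effects with the initial bound, and using that $\rho$ is calibrated against the worst-case mark-growth-to-size-shrinkage ratio, will give Claim 1 at all (non-degenerate $\sigma$) times. Claim 2, at $\Call{Retarget}{\sigma}$ moments, should then follow because such an event requires the current target of $\sigma$ --- a critical child with at most $\tau k^{d-|\sigma|-1}$ vertices --- to have just fully emptied, which supplies enough cluster-shrinkage credit to drop $r_\sigma$ by a factor of $2$ relative to the general bound. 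The main obstacle will be this final accounting: the multiplicative $(1 + O(1/d))$ slack per depth encoded in the $e^{1/d}$ factor of $f$ must simultaneously absorb the initial ratio in both Case A and Case B, the worst-case mark growth between consecutive $\Call{Retarget}{\sigma}$ events, and any amplification from repeated degenerate inheritances along a root-to-leaf path, which will tightly constrain the constants in $f$ and in the definition of $\rho$.
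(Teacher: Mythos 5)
Your plan has the right overall shape (induction on depth, a $C_{k',j}$-type recurrence mirroring the $B_{k',j}$ analysis, calibration of $\rho$), but two of its steps would fail as written. First, in the degenerate-inheritance case you seed the induction with Claim 1 for $\sigma'$ ``just before its transition to degenerate,'' i.e. with inherited ratio at most $f(|\sigma|-1)$ where $f(j):=(j+1)e^{(j+1)/d}/(20e\cdot d)$. The inheritance happens exactly during a call to $\Call{Retarget}{\sigma'}$, and the induction must invoke Claim 2 there, i.e. the bound $\tfrac12 f(|\sigma|-1)$: this inherited count is the seed of the level-$\sigma$ accounting, and the unavoidable losses inside $\sigma$ (up to $n_0/\rho$ marks added since the last retarget, and $|V_\sigma|\geq(1-\tau)(k'-1)n_0\geq\tfrac12(1-\tau)k'n_0$ instead of $(1-\tau)k'n_0$) already consume the entire factor of $2$ separating Claim 2 from Claim 1. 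Starting from $f(|\sigma|-1)$ you can only recover a Claim-1-type bound at the retarget moments of $\sigma$, so Claim 2 for $\sigma$ fails, and the loss compounds by a constant factor at every degenerate inheritance along a root-to-leaf chain --- a $2^{\Theta(d)}$ blowup that the $(1+\bigO(1/d))$-per-level slack in $e^{(|\sigma|+1)/d}$ cannot absorb. Avoiding exactly this is why the lemma carries the second bullet.

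Second, your derivation of Claim 2 from Claim 1 is reversed. You argue that at a retarget of $\sigma$ the just-emptied target child ``supplies shrinkage credit'' for a factor $2$; but emptying the target only shrinks the denominator $|V_\sigma|$ and need not remove any marked vertices, so no such credit exists. In the correct argument the implication goes the other way: the retarget-time bound (Claim 2) is proved directly, by running the $C_{k',j}$ recurrence at level $\sigma$ --- tracking $|M_\sigma\cap V_{\sigma+i}|$ over the children of $\sigma$ at the moments $\Call{Retarget}{\sigma}$ fires, where choosing the most-marked child in all but a $1/\mathrm{rcycle}$-fraction of retargets flushes marks and yields the $\tfrac{j}{j+1}$ contraction --- and Claim 1 then follows from Claim 2 plus the bounded mark growth and size shrinkage between consecutive retargets. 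Relatedly, you attach the recurrence at the parent level (children of $S_{\sigma'}$, instants of $\Call{Retarget}{\sigma'}$) to bound the initial marks of $\sigma$ in the non-degenerate case, where no recurrence is needed: by \Cref{lem:child-sizes} all siblings are noncritical at that moment, so every pairwise matching is $(1-2\tau)$-complete and at most a $40\tau$-fraction of $V_\sigma$ has cluster-$\sigma'$-degree below $\tfrac{19}{20}k'$. Meanwhile your ``sum the effects'' plan for all later times omits the one place where the recurrence is indispensable: trims need not remove marked vertices, so without the level-$\sigma$ flushing analysis the ratio can grow as $|V_\sigma|$ shrinks even though each new mark is followed by $\rho$ trims.
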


\begin{proof}
We now prove \Cref{lem:strong-mark-ratio} by induction on increasing $|\sigma|$. Note that $M_\epsilon = \emptyset$ at all times, as the root cluster has no parent cluster to mark vertices in it. Fix now a non-root non-leaf cluster $\sigma$. Let $n_0 := k^{d - |\sigma| - 1}$ be the initial size of each of the child clusters of $\sigma$. The cluster $\sigma$ has initial size $k n_0$.

Consider what changes occur to the vertex set $V_\sigma$ and the mark set $M_\sigma$ of a cluster $\sigma$ during a Delete-operation after which the cluster remains nonempty:
\begin{enumerate}
    \item[1.] Some possibly empty set of vertices $V^{\mathrm{rem}} \subseteq V_\sigma$ get removed from the cluster. If this results in the current target child cluster of $\sigma$ becoming empty, $\Call{Retarget}{\sigma}$ is called.
    \item[2.] At least $\rho \cdot |V^{\mathrm{rem}}|$ calls to $\Call{Trim}{\sigma}$ are made.
    \item[3.] The initialization of marks in the cluster might occur, letting $M_\sigma \gets M^{\mathrm{ini\_mark}}$. This can only occur once (when the cluster $\sigma$ becomes its parent's target child), can only occur if $M_\sigma = \emptyset$ beforehand, and as we will show later, the following hold:
    \begin{itemize}
        \item The size of the cluster (after steps 1. and 2.) is at least $|V_\sigma| \geq (1 - \tau) k n_0$.
        \item The number of marked vertices satisfies
        \begin{equation*}
            |M^{\mathrm{ini\_mark}}| \leq \left(\frac{1}{2} \cdot \frac{|\sigma| e^{|\sigma| / d}}{20 e \cdot d}\right) \cdot k n_0.
        \end{equation*}
    \end{itemize}
    \item[4.] Some possibly empty set of vertices $M^{\mathrm{mark}} \subseteq V_\sigma \setminus M_\sigma$ enter the mark set $M_\sigma$. This can only occur if the mark set has been initialized before.
    \item[5.] $\Call{Trim}{\sigma}$ is called at least $\rho \cdot |M^{\mathrm{mark}}|$ times.
\end{enumerate}
The size of the cluster must be at least $|V_\sigma| \geq (1 - \tau) k n_0$ when the mark set is initialized, as by \Cref{lem:child-sizes}, the size of the cluster $\sigma$ must be at least a $(1 - \tau)$-fraction of its original size when it becomes the target child cluster of its parent (as it remains nonempty after the delete-operation, the retarget that made it the target child must have been the last of the delete-operation). The bound on the number of initially marked vertices $|M^{\mathrm{ini\_mark}}|$ follows from induction, specifically the second statement of \Cref{lem:strong-mark-ratio}, if the parent cluster of $\sigma$ is degenerate. Otherwise, as by \Cref{lem:child-sizes} every child cluster of the parent cluster was noncritical when that retarget occurred, the matching between $\sigma$ and each of its siblings had size at least a $(1 - 2\tau)$-fraction of a perfect matching, thus we have
\begin{equation*}
    |M^{\mathrm{ini\_mark}}| \leq \frac{2 \tau \cdot |V_\sigma|}{\frac{1}{20}} \leq 40 \tau \cdot k n_0 \leq \left(\frac{1}{2} \cdot \frac{1}{20e \cdot d}\right) \cdot kn_0 \leq \left(\frac{1}{2} \cdot \frac{|\sigma| e^{|\sigma| / d}}{20 e \cdot d}\right) \cdot k n_0
\end{equation*}
where we use $\tau \leq \frac{1}{4350 d} \leq \frac{1}{1600 e \cdot d}$. Denote this upper bound on $|M^{\mathrm{ini\_mark}}|$ by $x_0$.

To see why at least $\rho \cdot |M^{\mathrm{mark}}|$ trims occur in step 5. after the newly marked vertices in step 4., let $(\sigma', j) \gets \mathrm{minfo}_\sigma$ be the source of the mark set of $\sigma$. Then, it must be that the target child cluster of $\sigma'$ is an ancestor of $\sigma$, and $\sigma$ is the representative of that ancestor, i.e. all ancestors of $\sigma$ before $\sigma'$ are degenerate. Thus, a call to $\Call{Trim}{\sigma'}$ eventually recurses to $\Call{Trim}{\sigma}$. Also, a vertex enters the mark set $M_\sigma$ when its degree in the cluster $\sigma'$ drops to strictly below $j$, which can only occur when a vertex it is incident to in another child cluster of $\sigma'$ is removed, which triggers $\rho$ calls to $\Call{Trim}{\sigma'}$.

Now, we will show that after any sequence of updates, with each update consisting of the above steps 1-5, at most one update containing the mark initialization step 3., and the conditions for that step occurring being met, the claims of \Cref{lem:strong-mark-ratio} hold.

Note that by \Cref{lem:child-sizes}, we have that each non-target child $\sigma + i$ with $i \in S_\sigma$ is nonempty initially and after every update, and the target child cluster is never empty.

For $k' \in [k]$ and $j \in [k']$, after any sequence of updates after which the cluster $\sigma$ has $k'$ nonempty child clusters, let $C_{k', j}$ be the maximum total number of marked vertices (i.e. $|M_\sigma \cap V_{\sigma + i}|$) among $j$ specific child clusters $\sigma + i$, $i \in S_\sigma$ at the moment \textsc{Retarget} selected the current target child.

Showing that the second claim of \Cref{lem:strong-mark-ratio} holds now amounts to showing that
\begin{equation*}
    C_{k', k'} \leq \left(\frac{1}{2} \cdot \frac{(|\sigma| + 1) e^{(|\sigma| + 1) / d}}{20e \cdot d}\right) \cdot (1 - \tau) k' n_0.
\end{equation*}
To show that the first claim of \Cref{lem:strong-mark-ratio} holds, note that after any sequence of updates after which the cluster $\sigma$ has $k' \geq 2$ remaining nonempty children, the size of the cluster $\sigma$ is at least $(1 - \tau) (k' - 1) n_0$, as no non-target child can be $\tau$-critical, and further, the number of vertices that have entered the marked vertex set $M_\sigma$ since the latest call to $\Call{Retarget}{\sigma}$ is at most $n_0 / \rho$, as otherwise the current target child would have been trimmed empty. Thus, showing the first claim of \Cref{lem:strong-mark-ratio} amounts to showing that
\begin{equation*}
    \frac{C_{k', k'} + n_0 / \rho}{(1 - \tau) (k' - 1) n_0} \leq \left(\frac{(|\sigma| + 1) e^{(|\sigma| + 1) / d}}{20e \cdot d}\right).
\end{equation*}
As $k' - 1 \geq k' / 2$ for $k' \geq 2$, to prove both claims, it thus suffices to show that for all $k' \in [k]$,
\begin{equation*}
    C_{k', k'} + \frac{n_0}{\rho} \leq \left(\frac{1}{2} \cdot \frac{(|\sigma| + 1) e^{(|\sigma| + 1) / d}}{20e \cdot d}\right) \cdot (1 - \tau) k' n_0.
\end{equation*}

Having defined the values $C_{k', j}$ and shown that proving \Cref{lem:strong-mark-ratio} amounts to bounding each $C_{k', k'}$, we now consider an adversary attempting to maximize $C_{k', j}$ for some $k', j$ by performing a sequence of updates. We additionally allow this adversary to break the rules of $\Call{Retarget}{\sigma}$, and select the next target child arbitrarily whenever $\Call{Retarget}{\sigma}$ would select the next target child cluster based on the current sizes of the child clusters.

We can make the following simple observations about an optimal strategy for such an adversary: 
\begin{itemize}
        \item We may assume the adversary never removes vertices, i.e. $a$ is always zero in step 1 of each update. This holds as the adversary can arbitrarily select the next target child cluster whenever retarget would select the next target child based on child sizes, as retarget does not consider which specific vertices remain in nonempty child clusters while selecting the next target child based on marks, and as the number of missing vertices in nonempty child clusters does not influence the values $C_{k', j}$ in any way. Further, an adversary can simulate removing vertices by calling $\Call{Trim}{\sigma}$ additional times while the target cluster would contain removed vertices.
        \item We may assume that the adversary never marks vertices in the current target child cluster, as the current target child cluster will be empty the next time a value $C_{k', j}$ is determined.
        \item We may assume that every update performed by the adversary causes the target child of $\sigma$ to change at least once, as performing an update that does not trigger a retarget followed by an update that does has the same equivalent effect as performing both updates at once (marking the union of the marked vertex sets, and calling $\Call{Trim}{\sigma}$ the total number of times those updates would have).
\end{itemize}

Suppose that an update changes the target child of the cluster $\sigma$ some number $j$ of times. Then, in that update, at most $j \cdot \frac{2n_0}{\rho}$ vertices are marked during step 3. We let $x := 2n_0 / \rho$.

Now, we can bound $C_{k', j}$ through simple inequalities. Initially, we have $C_{k', j} = 0$ for all $j \in [k]$. Suppose that the adversary only performed updates that cause the target child of $\sigma$ to change exactly once. Then, for any $k' \in [k - 1]$ and $j \in [k']$ such that $k' + 1 < (1 - \tau) k$, we have
\begin{equation*}
    C_{k', j} \leq x + \left\{ \begin{array}{ll}
        (C_{k' + 1, j + 1}) \cdot \frac{j}{j + 1} & \text{when $k'\ \mathrm{mod}\ \mathrm{rcycle} \neq 0$}\\
        C_{k' + 1, j} &\text{otherwise}
    \end{array}\right.,
\end{equation*}
as step 2 can only occur while the cluster $\sigma$ has at least $(1 - \tau) k$ nonempty child clusters, and $\Call{Retarget}{\sigma}$ selects the new target child cluster to be the one with the largest number of vertices in $M_\sigma$ whenever $k'\ \mathrm{mod}\ \mathrm{rcycle} \neq 1$. When that occurs, the maximum number of marked vertices among $j$ nonempty non-target child clusters is at most $\frac{j}{j + 1}$ times the maximum number of marked vertices among $j + 1$ nonempty child clusters, as the new target child contains the most marked vertices among nonempty child clusters.

Similarly, for any $k' \in [k - 1]$ and $j \in [k']$ with no restriction on how large $k'$ can be, we have
\begin{equation*}
    C_{k', j} \leq x + \max\left(x_0, \left\{ \begin{array}{ll}
        (C_{k' + 1, j + 1}) \cdot \frac{j}{j + 1} & \text{when $k'\ \mathrm{mod}\ (2d + 1) \neq 0$}\\
        C_{k' + 1, j} &\text{otherwise}
    \end{array}\right\}\right).
\end{equation*}


Now, define $\mathrm{mult}_{k', j, 0} := 1$ for all $k' \in [k]$ and $j \in [k]$, and for integer $l \geq 1$, define
\begin{equation*}
    \mathrm{mult}_{k', j, l} := \left\{ \begin{array}{ll}
        (\mathrm{mult}_{k' + 1, j + 1, l - 1}) \cdot \frac{j}{j + 1} & \text{when $k'\ \mathrm{mod}\ \mathrm{rcycle} \neq 0$}\\
        \mathrm{mult}_{k' + 1, j, l - 1} &\text{otherwise}
    \end{array}\right..
\end{equation*}
Let $k'' = \lceil (1 - \tau) k \rceil - 1$ be the maximum integer strictly less than $(1 - \tau) k$. Applying the above inequalities $l \geq 1$ times, we get for some $j' = j'(k', j, l)$ that
\begin{equation}\label{eq:rkj-upper-bound}
    C_{k', j} \leq \left(\sum_{l' = 0}^{l - 1} x \cdot \mathrm{mult}_{k', j, l'}\right) + \max\left(\mathrm{mult}_{k', j, l} \cdot C_{k' + l, j'}, \mathrm{mult}_{k', j, \max(0, k'' - k')} \cdot x_0\right).
\end{equation}
Now, discarding the assumption that the adversary can only perform updates that cause $\Call{Retarget}{\sigma}$ to be called exactly once, we have
\begin{equation*}
    C_{k', j} \leq \max_{l \geq 1} \left(l \cdot x \cdot \mathrm{mult}_{k', j, l - 1} + \max\left(\mathrm{mult}_{k', j, l} \cdot C_{k' + l, j'}, \mathrm{mult}_{k', j, \max(0, k'' - k')} \cdot x_0\right)\right),
\end{equation*}
where the maximum is over the number $l$ of calls to $\Call{Retarget}{\sigma}$ caused by the latest update performed by the adversary. Since $\mathrm{mult}_{k', j, l}$ is decreasing in $l$, we thus obtain that the maximum is always attained when $l = 1$.

Now, applying \Cref{eq:rkj-upper-bound} with $l = k - k'$ on some $k' \in [k - 1]$, we thus get
\begin{equation*}
    C_{k', k'}  \leq x \cdot \left(\sum_{l' = 0}^{k - k' - 1} \mathrm{mult}_{k', j, l'}\right) + x_0 \cdot \mathrm{mult}_{k', j, \max(0, k'' - k')}.
\end{equation*}
It is sufficient for us to lose constants when bounding the first term. We have
\begin{equation*}
    \sum_{l' = 0}^{k - k' - 1} \mathrm{mult}_{k', j, l'} \leq \sum_{l' = 0}^{k} \frac{2 k'}{k' + l} \leq k' \cdot 2 H_k.
\end{equation*}
For the second term, we need a tighter bound. We have
\begin{align*}
    \mathrm{mult}_{k', j, \max(0, k'' - k')}    &\leq \frac{k'}{k' + \lfloor (k'' - k') \frac{\mathrm{rcycle} - 1}{\mathrm{rcycle}} \rfloor}\\
        &\leq \frac{k'}{\lfloor k'' \frac{\mathrm{rcycle} - 1}{\mathrm{rcycle}} \rfloor}\\
        &\leq \frac{k'}{((1 - \tau)k - 1) \frac{\mathrm{rcycle} - 1}{\mathrm{rcycle}} - 1}\\
        &\leq \frac{k'}{(1 - \tau)k \frac{\mathrm{rcycle} - 1}{\mathrm{rcycle}} - 2}.
\end{align*}
We can then lower bound the divisor by
\begin{align*}
    (1 - \tau)k \frac{\mathrm{rcycle} - 1}{\mathrm{rcycle}} - 2 &\geq \left(1 - \tau - \frac{1}{\mathrm{rcycle}} - \frac{2}{k}\right) k\\
    &\geq \left(1 - \left(\frac{1}{4350} + \frac{1}{2} + \frac{1}{8}\right) \frac{1}{d}\right) k\\
    &\geq \left(1 - \left(\frac{1}{4350} + \frac{1}{2} + \frac{1}{8}\right) \frac{1}{d}\right) \left(1 - \frac{2}{4350 d}\right) \left(1 + \frac{2}{4350 d}\right) k\\
    &\geq \left(1 - \left(\frac{3}{4350} + \frac{1}{2} + \frac{1}{8}\right) \frac{1}{d}\right) \left(1 + \frac{2}{4350 d}\right) k\\
    &\geq \left(1 - \left(1 - e^{-1}\right) \frac{1}{d}\right) \cdot \frac{1}{1 - \tau} \cdot k\\
    &\geq \frac{e^{-1 / d}}{1 - \tau} \cdot k.
\end{align*}
Where we use $\frac{1}{1 - \tau} \leq \frac{1}{1 - 1 / 4350 d} \leq \left(1 + \frac{2}{4350 d}\right)$, that $1 - \left(1 - e^{-1}\right) y \geq e^{-y}$ for $0 \leq y \leq 1$, and that $k \geq 16 d$.

Now, plugging in the values of $x$ and $x_0$ and adding $\frac{n_0}{\rho} = x / 2$ to both sides, we finally get
\begin{align*}
C_{k', k'} + \frac{n_0}{\rho} &\leq x \cdot \left(k' \cdot 2 H_k\right) + x_0 \cdot \left(\frac{k'}{k} \cdot e^{1 / d} (1 - \tau)\right) + \frac{x}{2}\\
    &\leq \left(x \cdot \left(\frac{2.5 H_k}{1 - \tau}\right) + x_0 \cdot \left(\frac{1}{k} \cdot e^{1 / d}\right)\right) \cdot (1 - \tau) k'\\
    &\leq \left(\frac{5 H_k}{(1 - \tau) \rho} + e^{1/d} \cdot \left(\frac{|\sigma| e^{|\sigma| / d}}{40e \cdot d}\right)\right) \cdot (1 - \tau) k' n_0\\
    &\leq \left(\frac{1}{40 e \cdot d} + \frac{|\sigma| e^{(|\sigma| + 1) / d}}{40e \cdot d}\right) \cdot (1 - \tau) k' n_0\\
    &\leq \left(\frac{(|\sigma| + 1) e^{(|\sigma| + 1) / d}}{40e \cdot d}\right) \cdot (1 - \tau) k' n_0,
\end{align*}
as desired. In the second-to-last inequality above, we simply use
\begin{equation*}
    \rho = \left\lceil 2H_k \cdot \mathrm{rcycle}  / \tau \right\rceil \geq 13050 d^2 H_k \geq \frac{200 e \cdot H_k \cdot d}{1 - \tau}.
\end{equation*}
\end{proof}

\simplemarkratio

\begin{proof}
    We show the claim by induction on decreasing $|\sigma|$. First, suppose that $\sigma$ is a leaf cluster, i.e. $|\sigma| = d$. Then, unless the parent cluster $\sigma'$ of $\sigma$ is degenerate, the only vertex of $\sigma$ can never be marked, as there is an edge from $\sigma$ to every other nonempty child of the parent of $\sigma$. Otherwise, if $\sigma'$ is degenerate, it must have been degenerate at the moment $\sigma$ became the current target cluster. At that moment, the ratio of marked vertices in $\sigma'$ was strictly less than 1 by the first property of \Cref{lem:strong-mark-ratio}, and as the cluster only contained one vertex, it must have been zero. If the only vertex of $\sigma$ later became marked, that marking would immediately have been followed by a call to $\Call{Trim}{\sigma}$ and $\sigma$ would not be nonempty.

    Now, suppose that $|\sigma| < d$. If the cluster $\sigma$ is degenerate, the ratio of marks is exactly the same as the ratio of marks of the only child cluster of $\sigma$, with a ratio of at most $\frac{1}{20}$ by induction. Otherwise, If the cluster $\sigma$ is not degenerate, by the second property of \Cref{lem:strong-mark-ratio},
    \begin{equation*}
        \frac{|M_\sigma \cap V_\sigma|}{|V_\sigma|} \leq \frac{(|\sigma| + 1) e^{(|\sigma| + 1) / d}}{20e \cdot d} \leq \frac{de}{20e \cdot d} = \frac{1}{20}.
    \end{equation*}
\end{proof}

\bibliographystyle{alpha}
\bibliography{aux/refs}

\appendix

\newpage

\section{Basic Load Balancer}\label{sec:appendix-loadbalancer}

In this appendix section, we show the following:

\loadbalancerlemma*

As mentioned before, maintaining this data structure is involved but not interesting.

\begin{proof} The data structure is given in \Cref{alg:load-balancer}. This data structure maintains 
\begin{itemize}
    \item A map $B : I \rightarrow E$ of the bucket assignment.
    \item An inverse map $B^{-1}$ with $B^{-1}(e)$ being the set of clients assigned to the bucket $e$.
    \item The target load $L := \lfloor |I| / |E| \rfloor$.
    \item The subset of buckets $E^{\mathrm{low}} \subseteq E$ with load $L$.
    \item The subset of buckets $E^{\mathrm{high}} \subseteq E$ with load $L + 1$. This set has size $|I| \text{ mod $|E|$}$.
\end{itemize}
on update $I^+, I^-, E^+, E^-$, the data structure does the following:
\begin{enumerate}
    \item[1-3.] Remove the buckets $E^-$, moving the clients they contained into $I^+$. Then, add to $B^{-1}$ the new buckets $E^+$ with initially no assignments. Then, remove the clients $I^-$ from the sets that contain them (or from $I^+$ if they were assigned to some removed bucket).
    
    After these first three steps, the bucket set $\Call{keys}{B^{-1}}$ contains the bucket set that should exist after the update, the buckets $B^{-1}(e)$ contain disjoint subsets of clients that were not removed in the update, and $I^+$ contains exactly the set of clients that should exist after the update that are not assigned to a bucket.

    Step 2 takes work $\tilde{O}(|E^+|)$, and each of steps 2 and 3 take $\tilde{O}(\mathrm{recourse})$ work. 
\end{enumerate}

\newpage

\begin{algorithm}[H]
    \caption{Basic load balancer data structure (part 1/2)} \label{alg:load-balancer}
    \begin{algorithmic}[1]
        \Class{LoadBalancer}
            \Data
                \State Target load $L \in \mathbb{N}$
                \State Set $E^{\mathrm{low}} \subseteq E$ of buckets with load $L$
                \State Set $E^{\mathrm{high}} \subseteq E$ of buckets with load $L + 1$
                \State Map $B : I \rightarrow E$ of the bucket assignment
                \State Map $B^{-1} : E \rightarrow 2^I$ of clients assigned to the bucket
            \EndData

            \Function{Initialize}{\null}
                \State $L \gets 0$
                \State $E^{\mathrm{low}}, E^{\mathrm{high}}, B, B^{-1} \gets \emptyset$
            \EndFunction

            \State

            \Function{UpdateLoadBalancer}{$I^+, I^-, E^+, E^-$}
                \State Let $L^{\mathrm{new}} := \lfloor (|B| + |I^+| + |I^-|)\ /\ (|B^{-1}| + |E^+| + |E^-|) \rfloor$
                \State
                \LeftComment{1. Remove the buckets $E^-$, add newly unassigned clients not in $I^-$ to $I^+$}
                \State Let $I^{A-} \gets \left(\bigcup_{e^- \in E^-} B^{-1}(e^-)\right)$
                \State Let $B^{\mathrm{rem}} = \{(i \rightarrow B(i)) : i \in I^{A-}\}$
                \State Let $I^{\cap} := I^- \cap I^{A-}$
                \State $I^+ \gets I^+ \cup (I^{A-} \setminus I^{\cap})$
                \State $I^- \gets I^- \setminus I^{\cap}$
                \State $B^{-1} \gets B^{-1} \setminus E^-$
                \State
                \LeftComment{2. Add to $B^{-1}$ the buckets $E^+$ with initially no assignments}
                \State $B^{-1} \gets B^{-1} \cup \{(e^+ \rightarrow \emptyset) : e^+ \in E^+\}$
                \State
                \LeftComment{3. Remove the removed clients $I^-$}
                \LeftComment{Add every bucket with a removed client to $E^+$}
                \State $I^{A-} \gets I^{A-} \cup I^-$
                \State $B^{\mathrm{rem}} \gets B^{\mathrm{rem}} \cup \{(i \rightarrow B(i)) : i \in I^-\}$
                \State Let $B^{--} := \bigcup_{i^- \in I^-} \{(B(i^-) \rightarrow i^-)\}$
                \For{$e \in \Call{keys}{B^{--}}$}
                    \State $B^{-1}(e) \gets B^{-1}(e) \setminus B^{--}(e)$
                \EndFor
                \State $E^+ \gets E^+ \cup \Call{keys}{B^{--}}$
                \State $B \gets B \setminus (I^\cap \cup I^-)$
                \State
                \LeftComment{4. Update the sets $E^{\mathrm{high}}$ and $E^{\mathrm{low}}$}
                \State $E^{\mathrm{low}} \gets E^{\mathrm{low}} \setminus (E^+ \cup E^-)$
                \State $E^{\mathrm{high}} \gets E^{\mathrm{high}} \setminus (E^+ \cup E^-)$
                \If{$L^{\mathrm{new}} = L + 1$}
                    \State $E^+ \gets E^+ \cup E^{\mathrm{low}}$
                    \State $E^{\mathrm{low}} \gets E^{\mathrm{high}}$
                    \State $E^{\mathrm{high}} \gets \emptyset$
                \ElsIf{$L = L^{\mathrm{new}} - 1$}
                    \State $E^+ \gets E^+ \cup E^{\mathrm{high}}$
                    \State $E^{\mathrm{high}} \gets E^{\mathrm{low}}$
                    \State $E^{\mathrm{low}} \gets \emptyset$
                \ElsIf{$|L - L^{\mathrm{new}}| \geq 2$}
                    \State $E^+ \gets E^+ \cup E^{\mathrm{low}} \cup E^{\mathrm{high}}$
                    \State $E^{\mathrm{low}} \gets \emptyset$
                    \State $E^{\mathrm{high}} \gets \emptyset$
                \EndIf
            \algstore{LoadBalancer}
    \end{algorithmic}
\end{algorithm}

\addtocounter{algorithm}{-1}
\begin{algorithm}[H]
    \caption{Basic load balancer data structure (part 2/2)} 
    \begin{algorithmic}[1]
            \algrestore{LoadBalancer}
                \LeftComment{5. Move excess clients into $I^+$ from too large load buckets in $E^+$}
                \For{$e^+ \in E^+$}
                    \If{$|B^{-1}(e^+)| \geq L^{\mathrm{new}} + 1$}
                        \State Let $I^{\mathrm{trim}}$ be an arbitrary subset of $B^{-1}(e^+)$ of size $|B^{-1}(e^+)| - (L^{\mathrm{new}} + 1)$
                        \State $B^{-1}(e^+) \gets B^{-1}(e^+) \setminus I^{\mathrm{trim}}$
                        \State $I^+ \gets I^+ \cup I^{\mathrm{trim}}$
                        \State $I^{A-} \gets I^{A-} \cup I^{\mathrm{trim}}$
                        \State $B^{\mathrm{rem}} \gets B^{\mathrm{rem}} \cup \{(i \rightarrow e^+) : i \in I^{\mathrm{trim}}\}$
                        \State $E^{\mathrm{high}} \gets E^{\mathrm{high}} \cup \{e^+\}$
                        \State $E^+ \gets E^+ \setminus \{e^+\}$
                    \EndIf
                \EndFor
                \State
                \LeftComment{6. If $I^+$ is too small to fill buckets in $E^+$ up to $L^{\mathrm{new}}$, move to $I^+$ from $E^{\mathrm{high}}$}
                \State Let $\mathrm{tot\_need} := \sum_{e^+ \in E^+} L^{\mathrm{new}} - |B^{-1}(e^+)|$
                \If{$\mathrm{tot\_need} < |I^+|$}
                    \State Let $E^{\mathrm{trim}}$ be an arbitrary subset of $E^{\mathrm{high}}$ of size $\mathrm{tot\_need} - |I^+|$
                    \For{$e \in E^{\mathrm{trim}}$}
                        \State Let $i^{\mathrm{trim}}$ be an arbitrary element of $B^{-1}(e)$
                        \State $B^{-1}(e) \gets B^{-1}(e) \setminus \{i^{\mathrm{trim}}\}$
                        \State $I^+ \gets I^+ \cup \{i^{\mathrm{trim}}\}$
                        \State $I^{A-} \gets I^{A-} \cup \{i^{\mathrm{trim}}\}$
                        \State $B^{\mathrm{rem}} \gets B^{\mathrm{rem}} \cup \{(i^{\mathrm{trim}} \rightarrow e)\}$
                    \EndFor
                    \State $E^{\mathrm{high}} \gets E^{\mathrm{high}} \setminus E^{\mathrm{trim}}$
                    \State $E^{\mathrm{low}} \gets E^{\mathrm{low}} \cup E^{\mathrm{trim}}$
                \EndIf
                \State
                \LeftComment{Get clients that get assigned to a bucket}
                \State Let $I^{A+} \gets I^+$
                \State
                \LeftComment{7. Fill buckets in $E^+$ from $I^+$}
                \For{$e^+ \in E^+$}
                    \State Let $I^{\mathrm{add}}$ be an arbitrary subset of $I^+$ of size $L^{\mathrm{new}} - |B^{-1}(e^+)|$
                    \State $I^+ \gets I^+ \setminus I^{\mathrm{add}}$
                    \State $B^{-1}(e^+) \gets B^{-1}(e^+) \cup I^{\mathrm{add}}$
                    \For{$i \in I^{\mathrm{add}}$}
                        \State $B(i) \gets e^+$
                    \EndFor
                \EndFor
                \State $E^{\mathrm{low}} \gets E^{\mathrm{low}} \cup E^+$
                \State
                \LeftComment{8. Add remaining elements in $I^+$ to some buckets in $E^{\mathrm{low}}$}
                \State Let $E^{\mathrm{grow}}$ be an arbitrary subset of $E^{\mathrm{low}}$ of size $|I^+|$
                \For{$e \in E^{\mathrm{grow}}$}
                    \State Let $i$ be an arbitrary element of $I^+$
                    \State $I^+ \gets I^+ \setminus \{i\}$
                    \State $B(i) \gets e$
                    \State $B^{-1}(e) \gets B^{-1}(e) \cup \{i\}$
                \EndFor
                \State $E^{\mathrm{low}} \gets E^{\mathrm{low}} \setminus E^{\mathrm{grow}}$
                \State $E^{\mathrm{high}} \gets E^{\mathrm{high}} \cup E^{\mathrm{grow}}$
                \State
                \State $L \gets L^{\mathrm{new}}$
                \State \Return $(I^{A+}, I^{A-}, B^{\mathrm{rem}})$
            \EndFunction
        \EndClass
    \end{algorithmic}
\end{algorithm}

\begin{enumerate}
    \item[4.] Update the sets $E^{\mathrm{low}}$ and $E^{\mathrm{high}}$ with the possibly changed value $L$. We first remove from both the set of buckets $E^+ \cup E^-$ that have been modified so far. Since these buckets had at least one assigned client change, the work for this is at most $\tilde{O}(\mathrm{recourse})$.
    
    Then if $L$ increased by one, the old $E^{\mathrm{high}}$ becomes the new $E^{\mathrm{low}}$ and the old $E^{\mathrm{low}}$ gets pushed into $E^+$ as it needs to receive clients later. Similarly if $L$ decreased by one, the old $E^{\mathrm{low}}$ becomes the new $E^{\mathrm{high}}$ and the old $E^{\mathrm{high}}$ gets pushed into $E^+$. If $L$ changes by at least two, both get pushed into $E^+$, otherwise neither set is changed.  Once again, as we only push out buckets that will have to receive a change in assigned clients, the work to do this is $\tilde{O}(\mathrm{recourse})$

    After this point, $E^+$ contains every bucket $e$ for which the number of assigned clients $|B^{-1}(e)|$ is not either $L^{\mathrm{new}}$ or $L^{\mathrm{new}} + 1$, and may contain some buckets with those numbers of assignees as well. Of the buckets that should exist after the update and are not in $E^+$, the set $E^{\mathrm{low}}$ contains exactly the set with number of currently assigned clients $L^{\mathrm{new}}$ and $E^{\mathrm{high}}$ exactly the set with number of currently assigned clients $L^{\mathrm{new}} + 1$.
    \item[5.] Some buckets in $E^+$ have $L^{\mathrm{new}} + 1$ or even strictly more currently assigned clients, which can specifically occur when $L^{\mathrm{new}} < L$. Remove from each if those buckets the surplus clients, adding them to $I^+$, and move the buckets to $E^{\mathrm{high}}$ from $E^+$.

    Afterwards, $E^+$ contains only buckets with at most $L^{\mathrm{new}}$ currently assigned clients. This step takes $\tilde{\bigO}(|E^+| + \mathrm{recourse})$ work, as each client removed from buckets this step will be part of $I^\Delta$.
    \item[6.] We now want to fill buckets in $E^+$ up until they contain $L^{\mathrm{new}}$ clients, so that each bucket contains exactly $L^{\mathrm{new}}$ or $L^{\mathrm{new}} + 1$ clients. However, the set $I^+$ of unassigned clients might not be large enough to do this, in which case we have to fill the surplus by taking single clients from buckets in $E^{\mathrm{high}}$, moving them to $E^{\mathrm{low}}$.
    
    This step takes work $\tilde{O}(\mathrm{tot\_need}) = \tilde{O}(\mathrm{recourse})$, where $\mathrm{tot\_need}$ is the deficit in $I^+$.
    \item[7.] In step 7, we fill each bucket in $E^+$ from $I^+$ to contain at least $L^{\mathrm{new}}$ clients. For each bucket, we add to it the deficit from $I^+$, then add it to $E^{\mathrm{low}}$ as it now has exactly $L^{\mathrm{new}}$ assigned clients.

    After this step, $E^{\mathrm{low}}$ is exactly the set of buckets with $L^{\mathrm{new}}$ assigned clients, $E^{\mathrm{high}}$ is exactly the set of buckets with $L^{\mathrm{new}} + 1$ assigned clients, and every bucket is in one of these sets.

    This step takes $\tilde{O}(|E^+| + \mathrm{recourse})$ work, as each added client is one that changes its assigned bucket. 
    \item[8.] Finally, the set $I^+$ of unassigned clients might not be empty, which occurs exactly when $\mathrm{tot\_need}$ was negative, and in which case we have here $|I^+| = -\mathrm{tot\_need}$. We take $|I^+|$ sets from $E^{\mathrm{low}}$ and add a client from $I^+$ to each of them, moving them from $E^{\mathrm{low}}$ to $E^{\mathrm{high}}$. This takes $\tilde{O}(\mathrm{recourse})$ work.
\end{enumerate}
It remains to analyze the recourse.
\begin{itemize}
    \item For each bucket removed, the up to $\lceil |I| / |E| \rceil$ clients that were in the bucket have to be reassigned, thus appearing in both $I^{A+}$ and $I^{A-}$. The total contribution to the recourse of bucket removals is thus at most $2\lceil |I| / |E| \rceil \cdot |E^-|$ (for the pre-update ratio $|I| / |E|$).
    \item Each client added is added to some bucket, appearing in $I^{A+}$. Thus, client additions contribute $|I^+|$ to the recourse.
    \item For each bucket added, up to $L^{\mathrm{new}}$ clients have to be moved into it. Note that if $L^{\mathrm{new}} > L$, then every client moved to the bucket comes from either a removed bucket or was added in the same update, and we can charge the recourse to bucket removals and client additions. Otherwise, $L^{\mathrm{new}} \leq L = \lfloor |I| / |E| \rfloor$ (pre-update), and thus the contribution to the recourse of bucket additions is at most $2\lceil |I| / |E| \rceil \cdot |E^+|$. 
    \item For each client removed from a bucket, up to that many clients have to be moved back to the bucket from other buckets. These moved clients count for both $I^{A+}$ and for $I^{A-}$, while the removed clients themselves appear in $I^{A-}$. Thus, the contribution to the recourse of client removals is at most $3 |I^-|$.
\end{itemize}
Thus, $|I^{A+}| + |I^{A-}| \leq |I^+| + 3|I^-| + 2\lceil |I| / |E| \rceil \cdot (|E^-| + |E^+|)$, as desired. 
\end{proof}

\section{Corollaries of \Cref{sec:pruning-through-embedding} on Unstructured Routers} \label{sec:appendix-general-graph-embedding}

In this section, we finally combine the theorems of \Cref{sec:pruning-through-embedding} with a length-constrained expander routing result for embedding the initial semi-hypercube, obtaining pruning results for unstructured routers.

\pruningobliviousrouting

\begin{proof}
    Let $k := \lfloor n^{1 / d} \rfloor$, and note that
    \begin{equation*}
        k^d \geq n \cdot \left(1 - \frac{1}{n^{1 / d}}\right)^d \geq n \cdot e^{- 2d / n^{1 / d}} = \Omega(n).
    \end{equation*}
    For initialization, compute a routing of the embedding demand of a $(k, d)$-semi-hypercube $H = (V'', E_H)$ on some size-$k^d$ vertex subset plus an edge from every vertex in $V \setminus V''$ to some vertex in $V''$, such that every vertex in $V''$ is the endpoint of at most $\bigO(1)$ such edges. Note that this demand has load $\bigO(d \cdot n^{1 / d}) \leq n^{\bigO(1 / d)}$. For the embedding paths $P$ of the additional edges, append $\Call{concat}{P, \Call{reverse}{P}}$ to the embedding path of some edge in $E_H$ incident to the endpoint of $P$ in $V''$. This increases the maximum length of any embedding path by at most a constant factor.
    
    Now, we have an embedding of $E_H$ into $G$ of congestion $\kappa' = \kappa \cdot n^{\bigO(1 / d)}$ and length $h' = \bigO(h)$, such that every vertex in $V$ is on some embedding path. Apply \Cref{thm:embedded-oblivious-routing} on $G, H, V, V'', E, E'', k, d, \kappa', h'$, giving
    \begin{itemize}
        \item congestion of the oblivious routing of $\kappa' h' \cdot k \cdot 2^{\bigO(d)} = \kappa h \cdot n^{\bigO(1 / d)}$,
        \item oblivious routing path length of $\bigO(h' \cdot d^3) = \bigO(h \cdot d^3)$,
        \item worst case pruning ratio of $\kappa' h' \cdot k \cdot \bigO(\log n)^{2d} = \kappa h \cdot n^{\bigO(1 / d)}$,
        \item and work per update of $\kappa' h' \cdot \poly(k \log^d(n)) = \kappa h \cdot n^{\bigO(1 / d)}$,
    \end{itemize}
    noting that as $d = \bigO(\sqrt{\frac{\log n}{\log \log n}})$, we have
    \begin{equation*}
        (\log n)^{\bigO(d)} = 2^{\bigO(d \log \log n)} \leq 2^{\bigO(\log(n) / d)} = n^{\bigO(1 / d)}.
    \end{equation*}
\end{proof}

\pruningdeterministicrouting

\begin{proof}
    Let $k := \lfloor n^{1 / d} \rfloor$, and note that
    \begin{equation*}
        k^d \geq n \cdot \left(1 - \frac{1}{n^{1 / d}}\right)^d \geq n \cdot e^{- 2d / n^{1 / d}} = \Omega(n).
    \end{equation*}
    For initialization, compute a routing of the embedding demand of a $(k, d)$-semi-hypercube $H = (V'', E_H)$ on some size-$k^d$ vertex subset plus an edge from every vertex in $V \setminus V''$ to some vertex in $V''$, such that every vertex in $V''$ is the endpoint of at most $\bigO(1)$ such edges. Note that this demand has load $\bigO(d \cdot n^{1 / d}) \leq n^{\bigO(1 / d)}$. For the embedding paths $P$ of the additional edges, append $\Call{concat}{P, \Call{reverse}{P}}$ to the embedding path of some edge in $E_H$ incident to the endpoint of $P$ in $V''$. This increases the maximum length of any embedding path by at most a constant factor.
    
    Now, we have an embedding of $E_H$ into $G$ of congestion $\kappa' = \kappa \cdot n^{\bigO(1 / d)}$ and length $h' = \bigO(h)$, such that every vertex in $V$ is on some embedding path. Apply \Cref{thm:embedded-deterministic-routing} on $G, H, V, V'', E, E'', k, d, \kappa', h'$, giving
    \begin{itemize}
        \item congestion of the explicit routing of $L \cdot \kappa' h' \cdot k \cdot 2^{\bigO(d)} = L \cdot \kappa h \cdot n^{\bigO(1 / d)}$,
        \item length of the explicit routing of $h' \cdot 2^{\bigO(d)} = h \cdot 2^{\bigO(d)}$,
        \item worst case pruning ratio of $\kappa' h' \cdot k \cdot \bigO(\log n)^{2d} = \kappa h \cdot n^{\bigO(1 / d)}$,
        \item rerouting step recourse of
        \begin{equation*}
            L \cdot \kappa' h' \cdot k^2 \cdot \bigO(\log n)^{2d} + |D^-| \cdot 2^{\bigO(d)} + |D^+| = L \cdot \kappa h \cdot n^{\bigO(1 / d)} + |D^-| \cdot 2^{\bigO(d)} + |D^+|,
        \end{equation*}
        \item and work per update of
        \begin{equation*}
            (L \cdot \poly(\kappa' h' \cdot k \log^d(n))) \cdot (L + |D^+| + |D^-|) = (L \cdot \poly(\kappa h) \cdot n^{\bigO(1 / d)}) \cdot (L + |D^+| + |D^-|),
        \end{equation*}
    \end{itemize}
    noting that as $d = \bigO(\sqrt{\frac{\log n}{\log \log n}})$, we have
    \begin{equation*}
        (\log n)^{\bigO(d)} = 2^{\bigO(d \log \log n)} \leq 2^{\bigO(\log(n) / d)} = n^{\bigO(1 / d)}.
    \end{equation*}
\end{proof}

\section{Batched Amortized Pruning of Non-Critical Semi-Hypercubes} \label{sec:amortized-formal-pruning}

In this section, we show the following Lemma:

\batchedpruning*

\begin{proof}
To do this, we maintain that after processing the $i$\textsuperscript{th} batch, the graph is a noncritical $(k, d, \tau \frac{i}{b})$-semi-hypercube. The pruning algorithm for handling the $i$\textsuperscript{th} batch of deletions $V^{\mathrm{del}}$, simply first deletes the vertices $V^{\mathrm{del}}$, then while there is a $\tau \frac{i}{b}$-critical cluster, prunes every vertex in that cluster.

First, for the pruning ratio, denote by $\mathrm{rat}_{d'}$ the maximum ratio of vertices deleted from a cluster $\sigma$ of depth $|\sigma| = d'$ during the processing of some update $i$ to the number of vertices pruned from the same cluster in the same update, except for vertices pruned due to some ancestor cluster of $\sigma$ becoming critical. For $d' = d$, we have $\mathrm{rat}_{d'} = 0$. Consider now some $d' < d$.

First, if the cluster $\sigma$ did not become $\tau \frac{i}{b}$-critical during the processing of the batch, the pruning ratio in the cluster is at most $\mathrm{rat}_{d' + 1}$. Otherwise, since the cluster $\sigma$ was not $\tau \frac{i - 1}{b}$-critical before processing the batch, strictly more than $\frac{\tau}{b} k^{|\sigma|}$ vertices have to be removed from the cluster while processing the batch, either as direct deletions or prunes to maintain descendant clusters of $\sigma$. For this to occur, strictly more than $\left(\frac{\tau}{b} k^{|\sigma|}\right) / (\mathrm{rat}_{d' + 1} + 1)$ vertices have to have been deleted from the cluster in that batch. Thus,
\begin{equation*}
    (\mathrm{rat}_{d'} + 1) \leq (\mathrm{rat}_{d' + 1} + 1) / \left(\frac{\tau}{b}\right) 
\end{equation*}
and we get $(\mathrm{rat}_{d'} + 1) \leq (b / \tau)^{d - d'}$, and in particular that the pruning ratio is at most $(b / \tau)^{d} - 1$.

Finally, we need to consider the work and depth of processing the update. We can implement the pruning operation as a recursive function that deletes $V^{\mathrm{del}} \cap V_\sigma$ from the cluster $\sigma$ and returns the set of vertices removed (either deleted or pruned) from the cluster. First, it partitions the set $V^{\mathrm{del}} \cap V_\sigma$ into $k$ subsets for each child cluster with work $\bigO(k \cdot |V^{\mathrm{del}} \cap V_\sigma|)$ and depth $\bigO(\log n)$, then recursively processes deletions in child clusters of the cluster in parallel, then updates the size of the current cluster with the vertices removed during processing of the child clusters, and if the cluster has became $\frac{\tau i}{b}$-critical, returns the vertex set of the entire cluster, otherwise returning the union of the removed vertex subsets of the child clusters. This takes work $\bigO(k \cdot |V^{\mathrm{rem}}_\sigma|)$ where $V^{\mathrm{rem}}_\sigma$ is the returned set of removed vertices. By the bound on the pruning ratio, we know that $\sum_\sigma |V^{\mathrm{rem}}_\sigma| \leq d \cdot (b / \tau)^d |V^{\mathrm{del}}|$. Thus, the total work of the algorithm is $\bigO(kd \cdot (\tau / b)^{-d} \cdot |V^{\mathrm{del}}|)$, and the depth of the algorithm is $\bigO(d \log n)$, as each of the $d$ depths of clusters are processed sequentially, and each depth takes depth $\bigO(\log n)$ to process.
\end{proof}

\end{document}